\patchcmd{\section}{\scshape}{\bfseries}{}{}
\renewcommand{\@secnumfont}{\bfseries}
\patchcmd{\section}{\normalfont}{\normalfont\color{MidnightBlue}}{}{}
\patchcmd{\subsection}{\normalfont}{\normalfont\color{MidnightBlue}}{}{}
\def\subsubsection{\@startsection{subsubsection}{3}%
\z@{.5\linespacing\@plus.7\linespacing}{-.5em}%
{\normalfont\bfseries}}
\newlength{\fixboxwidth}
\renewcommand{\algorithmiccomment}[1]{\bgroup\hfill//~#1\egroup}
\numberwithin{equation}{section}
\def\M{\mathcal{M}}
\def\<{\big\langle}
\def\>{\big\rangle}
\definecolor{red}{rgb}{0.9, 0, 0}
\newcommand\blfootnote[1]{%
  \begingroup
  \renewcommand\thefootnote{}\footnote{#1}%
  \addtocounter{footnote}{-1}%
  \endgroup
}
\newtheorem{theorem}{Theorem}
\newtheorem{proposition}{Proposition}
\newtheorem{corollary}{Corollary}
\newtheorem{lemma}{Lemma}
\theoremstyle{definition}
\newtheorem{definition}{Definition}
\newtheorem{example}{Example}
\newtheorem{axiomN}{Axiom}[section]
\theoremstyle{remark}
\newtheorem{remark}{Remark}
\begin{document}

\title[Aggregation of models, choices, beliefs, and preferences]{Aggregation of models,\\ choices, beliefs, and preferences}
\author{Hamed Hamze Bajgiran, Houman Owhadi} 

\date{\today}
\blfootnote{Affiliation: Division of Computing and Mathematical Sciences (CMS), California Institute of Technology. Email: \href{mailto:hhamzeyi@caltech.edu}{hhamzeyi@caltech.edu} and \href{mailto:owshadi@caltech.edu}{owhadi@caltech.edu} }

\maketitle









\begin{abstract}
A  natural notion of rationality/consistency for aggregating models is that, for all (possibly aggregated) models  $A$ and $B$, if the output of model $A$ is $f(A)$ and if the output model $B$ is $f(B)$, then the output of the model obtained by aggregating $A$ and $B$ must be a weighted average of $f(A)$ and $f(B)$.
Similarly, a natural notion of rationality for aggregating preferences of ensembles of experts is that, for all (possibly aggregated) experts  $A$ and $B$, and all possible choices $x$ and $y$, if both $A$  and $B$ prefer $x$ over $y$, then the expert obtained by aggregating $A$ and $B$ must also prefer $x$ over $y$.
Rational aggregation is an important element of uncertainty quantification, and it lies behind many seemingly different results in economic theory: spanning social choice, belief formation, and individual decision making. Three examples of rational aggregation rules are as follows. (1) Give each individual model (expert) a weight (a score) and use weighted averaging to aggregate individual or finite ensembles of models (experts). (2) Order/rank individual model (expert) and let the aggregation of a finite ensemble of individual models (experts) be the highest-ranked individual model (expert) in that ensemble.
(3) Give each individual model (expert) a weight, introduce a weak order/ranking over the set models/experts (two models may share the same rank), aggregate  $A$ and $B$ as the weighted average of the highest-ranked models (experts) in $A$ or $B$. Note that (1) and (2) are particular cases of (3) (in (1) all models/experts share the same rank, and in (2) the ranking is strict).
In this paper, we show that all rational aggregation rules are of the form (3).
This result unifies aggregation procedures across many different economic environments, showing that they all rely on the same basic result.
Following the main representation, we show applications and extensions of our representation in various separated economics topics such as belief formation, choice theory, aggregation of optimal models, and social welfare economics.
\end{abstract}

\section{Introduction}
This paper presents a general framework  
characterizing 
rational/consistent aggregation (of models, choices, beliefs, and preferences, which we simply refer to as features) with applications to economic theory.
In this framework, individual features have outcomes, and aggregation rules identify the outcome of groups of features. 
 We focus on a recursive form of aggregation, which is the case in the Cased-Based decision theory developed by \cite{gilboa_inductive, billot1}, where the aggregate outcome for larger collections of features results from aggregating the outcomes of smaller subsets. Specifically, the aggregate outcome of the union of two disjoint collections of features is a weighted average of the outcome of each collection of features separately. We show that this form of recursive aggregation is a common structure that lies behind many seemingly unrelated results in economic theory. 


Our central axiom, the \emph{weighted averaging axiom/property} (we will use the terms \emph{axiom} and \emph{property} interchangeably), is a simple formalization of the recursivity. It imposes a structure on how the outcome of the union of two disjoint subsets of features relates to the outcome of each of the subsets separately. The axiom states that the outcome of a set of features can be recursively computed by first partitioning the set of features into two disjoint subsets. Then, the aggregated outcome is a weighted average of the outcome of each of the two smaller subsets.

Our contribution is three-fold: (1) We find all aggregation procedures that satisfy the weighted averaging axiom, which generalized the result of \cite{billot1}. Moreover, by enhancing the procedure with continuity axiom, we connect the axiom to the path independent axiom, which is studied in the choice literature. (2) With a simple geometrical duality argument, we connect the weighted averaging to the combination axiom of \cite{gilboa_inductive} and Extended Pareto of \cite{shapley2}. (3) We present applications and extensions to different domains of economics, notably in the context of Belief Formation, Choice Theory, and Welfare Economics.




Formally, we define an aggregation rule as a function on the set of subsets of features that maps each subset of features to an outcome. 
Our main result finds all aggregation rules that satisfy recursivity in the form of our weighted average axiom. We show that as long as for any two disjoint subsets of features, the outcome of their union is a weighted average (with non-negative weights) of the outcome of each subset, then the aggregation rule has a simple form (with a technical richness condition):

\emph{There exist a strictly positive weight function and a weak order (a transitive and complete order) over the set of features, with the outcome of any subset of features being the weighted average of the outcomes of each of the highest-ordered features of the subset separately.}

The importance of the result is that the weight of each feature is independent of the group of features being aggregated. The role of the weak order in the main representation is to partition the set of features into different equivalence classes and rank them from the highest class to the lowest class. If all features of a subset of features are in the same class, then the outcome is the weighted average of the outcomes of each member of the set. However, if some features have a higher ranking than others, then the aggregation rule will ignore lower-ordered features. 

Following the main result, we discuss two special cases of our main result.
In the first case, we introduce the \emph{strict weighted averaging} axiom to represent the case where the outcome of the union of two disjoint subsets of features is contained in the ``relative'' interior of the outcomes of each subset separately. We then show that the strict weighted averaging axiom is the necessary and sufficient condition for the weak order, in the main representation, to have only one equivalence class. Hence, the outcome of a subset of features is just the weighted average (with strictly positive weights) of the outcomes of each feature separately. 




In the second case,  we model the space of features as a subset of a vector space. By considering the distance between vectors, we capture the notion of similarity or closeness of features. In this context, we can consider the following  notion of continuity of outcomes with respect to features:
replacing a feature in a subset of features with another closely similar feature, the outcome of this new subset stays close to the outcome of the previous one. Under this property, which we define as the \emph{continuity} property, we show that all similar enough features attain the same ranking with respect to the weak order. Moreover, the weight function is a continuous function over the set of features. In other words, the weight between two close (or similar) features should be close. In a special case, where the space of features is a convex set, we show that all features attain the same ranking. In this case, there is no difference between the weighted averaging and the strict weighted averaging property.



Depending on the application,  features and the aggregation rules may have different interpretations. A feature may represent a signal or an event containing some information about the true state of nature. In this case, the role of an aggregation rule is to form a belief about the true state of nature. In the context of choice theory, features may represent choice objects, where an aggregation rule behaves as a decision-maker that selects a lottery or a random choice out of a group of choice objects. Another interpretation is in the context of welfare economics, where each feature represents a preference of an individual over some alternatives. In this case, an aggregation rule represents a social welfare function that associates with each preference profile, a single preference ordering over the set of alternatives.

To describe a natural interpretation of our result, consider the problem of modeling an agent who seeks to make a prediction about the true state of nature, conditional on observing a set of events. In this context, a feature represents an event, and the outcome of the model conditional on observing a set of events is a belief about the true state of nature. Our main result provides a necessary and sufficient condition for the belief formation process to behave as a \emph{Bayesian Updater}. Under the averaging property of the belief formation process, there exists a conditional probability system associated with the set of events, and the belief formation process conditional on observing a set of events behaves like a conditional probability. The weak order of the main result is capturing the idea that conditional on observing even a zero probability event, the belief formation still behaves as a Bayesian updater.

To motivate
the proposed framework, sections \ref{CBP}, \ref{choice_functions}, \ref{EPR_section}, and \ref{statedep} present applications and extensions of our main representation results. We show that the weighted averaging axiom is closely related to many known axioms in different topics, from the Pareto axiom in Social Choice Theory to the path independence axiom in Choice Theory.

\section{Model aggregation}\label{secmodag}

Let $X$ be a nonempty set\footnote{We make no assumptions about the cardinality or topology of $X$}.
 Write $X^*$ for the set of all nonempty finite subsets of $X$. One may interpret $X$ as a (possibly infinite) set of models,  $A\in X^*$ as a finite set of models, and $X^*$ as the set of all nonempty finite sets of models. Let $H$ be a separable Hilbert space with $H=\mathbb{R}^n$ as a prototypical example.

\begin{definition} An {\em \textbf{aggregation rule}} on $X$ is a function $f:X^*\rightarrow H$, that associates with every  $A\in X^*$ a vector $f(A)\in H$.\footnote{All discussions of this paper continue to hold if $H$ is replaced by any general (possibly infinite dimensional) normed vector space.}
\end{definition}

For $x\in X$ and $A\in X^*$, one may interpret $f(\{x\})$ as the output of the model $x$, and $f(A)$ as the output of the aggregation of models contained in $A$.
The purpose of this section is to characterize aggregation rules satisfying the weighted averaging axiom/property defined below.

\begin{definition}We say that an aggregation rule $f$ satisfies the \emph{\textbf{weighted averaging}} axiom/property if for all
 $A,B\in X^*$ such that $A\cap B=\emptyset$, it holds true that
\begin{equation}\label{eqhejhdgeydg}
f(A\cup B)=\lambda f(A)+(1-\lambda)f(B)
\end{equation}
for some $\lambda \in [0,1]$ (which may depend on $A$ and $B$). We say that $f$ satisfies the \emph{\textbf{strict weighted averaging}} axiom/property if
\eqref{eqhejhdgeydg} holds true for $\lambda \in (0,1)$. We say that $f$ satisfies the \emph{\textbf{extreme weighted averaging}} axiom/property if
\eqref{eqhejhdgeydg} holds true for $\lambda \in \{0,1\}$.
\end{definition}

Two simple examples of aggregation rules satisfying the weighted averaging property are as follows.

\begin{example}\label{exampleweightedaveraging2}
Write $\mathbb{R_{++}}$ for the set of strictly positive real numbers and let  $w:X\to \mathbb{R_{++}}$ be a weight function on $X$.
For $x\in X$, let\footnote{Abusing notations we write $f(x)$ for $f(\{x\})$ for $x\in X$.} $f(x)$ be the output of model $x$.
For $A\in X^*$, define
\begin{equation}
f(A)=\sum\limits_{x\in A} \left(\frac{w(x)}{\sum\limits_{y\in A} w(y)}f(x) \right)\,.
\end{equation}
Then $f$  satisfies the strict weighted averaging property.
\end{example}

\begin{example}\label{exampleweightedaveraging1}
Consider a complete strict order $\succ$ on  $X$. Given any feature $x\in X$, let $f(x)\in H$ be the output of the model $x$.  For $A\in X^*$, write $M(A,\succ)$ for the highest order element\footnote{$M(A,\succ)\in A$ and $M(A,\succ)\succ x$ for $x\in A\setminus \{\M(A,\succ)\}$.} in $A$. For $A\in X^*$, define
\begin{equation}
 f(A)=f(M(A,\succ)).
 \end{equation}
Then $f$  satisfies the extreme weighted averaging property.
\end{example}

We will now show that all aggregation rules satisfying the strict weighted averaging property must be as in Example \ref{exampleweightedaveraging2} if $X$ contains at least three elements $x,y,z$ such $f(x), f(y)$ and $f(z)$ are not collinear.
\begin{definition}\label{defjlkjhgewjhg} An aggregation rule $f:X^*\rightarrow H$ is $\emph{\textbf{rich}}$ if the range of $f$ is not a subset of a line.
\end{definition}

\begin{theorem}\label{main_theorem}
Let the aggregation rule $f:X^*\to H$ be rich. The following are equivalent:
\begin{enumerate}
\item The aggregation rule $f$ satisfies the strict weighted averaging property.
\item There exists a weight function $w:X\to R_{++}$ such that for every $A\in X^*:$
\begin{equation}\label{main_equation}
f(A)=\frac{\sum\limits_{x\in A}w(x)f(x)}{\sum\limits_{x\in A} w(x)}.
\end{equation}
\end{enumerate}
Moreover, the function $w$ is unique up to multiplication by a positive number.
\end{theorem}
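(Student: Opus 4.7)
For the easy direction (2)$\Rightarrow$(1), a direct calculation gives
\[
f(A\cup B)=\frac{W_A}{W_A+W_B}f(A)+\frac{W_B}{W_A+W_B}f(B)\quad\text{with }W_C:=\sum_{x\in C}w(x),
\]
which is strict weighted averaging. The substance is in (1)$\Rightarrow$(2), and my plan is organized around a triple-associativity lemma.

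First I would show by induction on $|A|$ (splitting $A=B\cup\{x\}$) that $f(A)\in\mathrm{conv}\{f(x):x\in A\}$, so the range of $f$ lies in the affine hull of the singleton values; richness then promotes to the existence of reference points $x_0,y_0,z_0\in X$ whose $f$-images are affinely independent in $H$. I would also record that, when $f(A)\neq f(B)$ for disjoint $A,B$, the coefficient $\lambda$ in \eqref{eqhejhdgeydg} is uniquely determined — denote it $\lambda(A,B)\in(0,1)$. The core lemma is: for any pairwise disjoint nonempty $A,B,C\in X^*$ with $f(A),f(B),f(C)$ affinely independent, computing $f(A\cup B\cup C)$ via the two parenthesizations $(A\cup B)\cup C$ and $A\cup(B\cup C)$ yields barycentric expansions in $f(A),f(B),f(C)$ that must agree coefficient-by-coefficient. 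Solving the resulting $3\times 3$ system produces positive numbers $w(A),w(B),w(C)$, unique up to a common positive scalar, with
\[
f(A\cup B\cup C)=\frac{w(A)f(A)+w(B)f(B)+w(C)f(C)}{w(A)+w(B)+w(C)},\qquad \lambda(A,B)=\frac{w(A)}{w(A)+w(B)},\ \text{etc.}
\]

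I would then normalize $w(x_0)=1$, extract $w(y_0),w(z_0)$ from the triple lemma on $\{x_0\},\{y_0\},\{z_0\}$, and for a general $x\in X$ read off $w(x)$ from whichever of the triples $\{x,x_0,y_0\},\{x,x_0,z_0\},\{x,y_0,z_0\}$ has affinely independent $f$-images (one always does: if $f(x)$ were simultaneously in $\mathrm{aff}\{f(x_0),f(y_0)\}$ and $\mathrm{aff}\{f(x_0),f(z_0)\}$ then $f(x)=f(x_0)$, so $\{x,y_0,z_0\}$ works). Formula \eqref{main_equation} for arbitrary $A\in X^*$ then follows by induction on $|A|$: split $A=B\cup\{x\}$, invoke the inductive hypothesis on $f(B)$, and identify the averaging coefficient $\lambda(B,\{x\})$ with $W_B/(W_B+w(x))$ via the triple lemma applied to a suitable triple involving $\{x\}$ and two well-chosen singleton/subset components of $B\cup\{x_0,y_0,z_0\}$ with affinely independent $f$-values. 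Uniqueness of $w$ up to a positive scalar is immediate from the uniqueness of $\lambda(\{x\},\{y\})$ when $f(x)\neq f(y)$, chained through the reference triple for those $x$ with $f(x)=f(x_0)$.

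The main technical obstacle is the well-definedness of $w(x)$ in the construction above: two different admissible reference triples must produce the same weight. I would handle this by applying the triple lemma to the four-element family $\{x,x_0,y_0,z_0\}$ and comparing weight ratios across its several three-way sub-partitionings (of the form $\{\{p\},\{q\},\{r,s\}\}$), which forces all admissible triples to assign, after the normalization $w(x_0)=1$, the same value to $w(x)$.
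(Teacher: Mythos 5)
Your outline up to the construction of $w$ is essentially the paper's own route: your ``triple-associativity lemma'' is exactly the paper's Lemma \ref{lem_uniq} (the ratio of the first two barycentric coordinates of $f(P\cup Q\cup R)$ equals $\lambda(P,Q)/(1-\lambda(P,Q))$), the extraction and well-definedness of $w$ on singletons via reference points, and the uniqueness argument are all sound. The genuine gap is in the final induction step, which is precisely where the paper spends most of its effort (Steps 2--4 and Lemma \ref{lem_main}). You claim to identify $\lambda(B,\{x\})$ with $W_B/(W_B+w(x))$ by ``the triple lemma applied to a suitable triple,'' but such a triple need not be available under the induction hypothesis as you have set it up. If the triple consists of three pieces partitioning $A=B\cup\{x\}$, then the pairwise unions have cardinality at most $|A|-1$, so their $\lambda$'s are known; but when all singleton values $\{f(a):a\in A\}$ lie on one line, every piece value lies on that line too, so \emph{no} partition of $A$ into three pieces has affinely independent values. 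If instead you adjoin a reference point $r\in\{x_0,y_0,z_0\}$ with $f(r)$ off that line and use the triple $(B,\{x\},\{r\})$, the lemma requires knowing $\lambda(B,\{r\})$, i.e.\ the representation of $f(B\cup\{r\})$ --- a set of cardinality $|A|$, not covered by induction on cardinality. So the identification does not follow as stated; repairing it requires restructuring the induction (e.g.\ at each cardinality first prove the claim for sets whose singleton values are not all collinear, then use those same-size sets, together with the larger set $A\cup\{r\}$, to handle the collinear case), which is exactly what the paper's Step 4, Case 2 does via its two-line-intersection Lemma \ref{lem_main}.

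There is a second, smaller hole in the non-collinear case: even when the values in $A$ are not all collinear, the existence of a partition of $A$ into three nonempty pieces with affinely independent $f$-values is asserted but not proved (a piece's value is a positive weighted average and can accidentally fall on the line spanned by the other two pieces). This existence claim is plausible and can likely be proved, but it is not free; the paper sidesteps it by instead intersecting the two lines through $\bigl(f(x),f(A\setminus\{x\})\bigr)$ and $\bigl(f(y),f(A\setminus\{y\})\bigr)$ for two suitably chosen $x,y\in A$, after separately disposing of the degenerate case $f(x)=f(A\setminus\{x\})$. Either supply these two missing arguments (the two-phase induction for the collinear case and the partition-existence or two-line argument for the non-collinear case), or fall back on the paper's Lemma \ref{lem_main} machinery; as written, the heart of $(1)\Rightarrow(2)$ is not yet established.
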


We will now show that all aggregation rules satisfying the  weighted averaging property must be of the form of a combination of  Example \ref{exampleweightedaveraging2} and \ref{exampleweightedaveraging1} if for all $x\in X$ we can find $y,z\in X$ such that  $f(\{x\}),f(\{y\})$, and $f(\{z\})$ are not collinear and the pairwise aggregation of $x,y,z$ does not satisfy the extreme aggregation property.

\begin{definition} An aggregation rule $f:X^*\rightarrow H$ is $\emph{\textbf{strongly rich}}$ if for any $x\in X$ there exist $y,z\in X$ such that:
\begin{enumerate}
\item $f(\{x\}),f(\{y\})$, and $f(\{z\})$ are not on the same line.
\item $f(\{x,y\})\notin\{f(x),f(y)\}$ and $f(\{x,z\})\notin\{f(x),f(z)\}$\footnote{In the proof of our main result, we show that as long as $f(\{x\}),f(\{y\})$, and $f(\{z\})$ are not on the same line, then $f(\{x,y\})\notin\{f(x),f(y)\}$ and $f(\{x,z\})\notin\{f(x),f(z)\}\Rightarrow f(\{y,z\})\notin\{f(y),f(z)\}$.}.
\end{enumerate}
\end{definition}

\begin{definition}
A binary relation $\succcurlyeq$ on $X$ is a \textbf{\emph{weak order}} on $X$, if it is reflexive ($x \succcurlyeq x$), transitive ($x \succcurlyeq y$ and $y \succcurlyeq z$ imply $x \succcurlyeq z$), and complete (for all $x,y \in X$, $x\succcurlyeq y$ or $y \succcurlyeq x$).
We say that $x$ is equivalent to $y$, and write $x\sim y$, if $x\succcurlyeq y$ and $y \succcurlyeq x$.
\end{definition}

\begin{theorem}\label{general_thm}
Let the aggregation rule $f:X^*\to H$ be strongly rich. Then the following are equivalent:\begin{enumerate}
\item The aggregation rule $f$ satisfies the weighted averaging axiom.
\item There exist a unique weak order $\succcurlyeq$ on $X$ and a weight function $w:X\to R_{++}$ such that for every $A\in X^*$:
\begin{equation}\label{main_equation2}
f(A)=\sum\limits_{x\in M(A,\succcurlyeq)} \left(\frac{w(x)}{\sum\limits_{y\in M(A,\succcurlyeq)}w(y)}\right)f(x) .\end{equation}
\end{enumerate}
Moreover in this case, the function $w$ is unique up to multiplication by a positive number in each of the equivalence classes of the weak order $\succcurlyeq$.
\end{theorem}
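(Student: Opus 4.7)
The plan is to reduce Theorem \ref{general_thm} to Theorem \ref{main_theorem} by constructing a canonical weak order on $X$ from $f$ itself. Define the relation $\succcurlyeq$ by setting $x \succ y$ whenever $f(\{x,y\}) = f(x) \neq f(y)$, and $x \sim y$ whenever either $f(\{x,y\}) = f(x) = f(y)$ or $f(\{x,y\}) \notin \{f(x), f(y)\}$. Completeness is immediate: the weighted averaging axiom forces $f(\{x,y\}) = \lambda f(x) + (1-\lambda) f(y)$ for some $\lambda \in [0,1]$, so $f(\{x,y\})$ either equals $f(x)$, equals $f(y)$, or lies strictly between them --- three exhaustive cases that match $x \succ y$, $y \succ x$, and $x \sim y$. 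Reflexivity is trivial, and strong richness forces $f(x) \neq f(y)$ whenever $y$ is one of the non-collinear witnesses for $x$, which will be needed to avoid degenerate configurations below.

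The main obstacle is transitivity. For $x \succcurlyeq y \succcurlyeq z$, I would apply the weighted averaging axiom to $\{x,y,z\}$ in the three ways $\{x\} \cup \{y,z\}$, $\{x,y\} \cup \{z\}$, and $\{x,z\} \cup \{y\}$. Each expresses $f(\{x,y,z\})$ as an affine combination of $f(x), f(y), f(z)$ with non-negative coefficients summing to one. When $f(x), f(y), f(z)$ are affinely independent, uniqueness of barycentric coordinates forces these three representations to coincide, and reading off the coefficient on $f(z)$ determines whether $x \succ z$ or $x \sim z$. When $f(x), f(y), f(z)$ are collinear or two coincide, strong richness furnishes auxiliary elements whose outputs lie off the degenerate line; inserting such a witness into the aggregation and repeating the coefficient comparison on the enlarged set recovers the conclusion. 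This case analysis, performed separately for each combination of strict and equivalent relations among $x, y, z$, is the delicate step.

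Once $\succcurlyeq$ is established as a weak order, the representation follows in two steps. First, I would show by induction on $|S|$ that whenever every element of a finite set $S \subseteq X$ strictly dominates some $u \in X \setminus S$ in $\succcurlyeq$, one has $f(S \cup \{u\}) = f(S)$. The base case $|S| = 1$ is just the definition of $\succ$, and the inductive step equates the two expressions for $f(S \cup \{u\})$ obtained from the disjoint decompositions $S \cup \{u\}$ and $(S \setminus \{x\}) \cup \{x,u\}$ via weighted averaging, with the required affine independence again supplied by strong richness. Iterating this absorption step yields $f(A) = f(M(A, \succcurlyeq))$ for every $A \in X^*$. Second, restrict $f$ to the nonempty finite subsets of a single equivalence class $E$ of $\succcurlyeq$: any two disjoint such subsets consist of pairwise equivalent elements, so the averaging coefficient must lie strictly in $(0,1)$, meaning $f|_{E^*}$ satisfies the strict weighted averaging axiom. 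Strong richness guarantees that $E$ contains a triple with non-collinear outputs, so $f|_{E^*}$ is rich, and Theorem \ref{main_theorem} produces a weight function $w_E : E \to \mathbb{R}_{++}$, unique up to a positive scalar. Gluing the $w_E$ across equivalence classes yields $w$, and since lower equivalence classes are invisible in \eqref{main_equation2}, the scalars on distinct classes are independent, accounting for the stated uniqueness. Necessity --- that any $f$ of the form \eqref{main_equation2} satisfies the weighted averaging axiom --- is a direct verification by splitting $A \cup B$ according to which of $M(A, \succcurlyeq)$ and $M(B, \succcurlyeq)$ dominates in $M(A \cup B, \succcurlyeq)$.
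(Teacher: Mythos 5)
The definition of the weak order is where the proposal breaks. You declare $x \sim y$ whenever $f(\{x,y\}) = f(x) = f(y)$, but the pair $\{x,y\}$ carries no ordinal information when the two outputs coincide: the target representation \eqref{main_equation2} is perfectly consistent with $x \succ y$ and $f(x) = f(y)$ (only $x$ enters the maximal set, and the aggregate happens to equal the common value). Concretely, take two equivalence classes, an upper one $\{x,x',x''\}$ with non-collinear outputs and a lower one $\{y,y',y''\}$ with non-collinear outputs, all weights equal to $1$, and arrange $f(y)=f(x)$ while $f(x')\neq f(y)$; the $f$ defined by \eqref{main_equation2} from these data satisfies the weighted averaging axiom and strong richness. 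Your relation then gives $x \sim y$ (equal outputs), $x \sim x'$ (interior aggregate), and yet $x' \succ y$ because $f(\{x',y\}) = f(x') \neq f(y)$, so the relation you define is not transitive. No case analysis in your ``delicate step'' can repair this, since the defect lies in the data you use (the pair $\{x,y\}$ alone), not in the argument built on it. The paper avoids exactly this trap: when $f(x)=f(y)$ it invokes strong richness to pick a witness $z$ with $f(\{x,z\})\notin\{f(x),f(z)\}$ (hence $z$ is equivalent to $x$ and $f(z)\neq f(y)$) and defines $x \succcurlyeq y$ by $f(\{z,y\})\neq f(y)$; some such detour through a third element is unavoidable.

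Two smaller gaps. First, your absorption lemma (``if every element of $S$ strictly dominates $u$ then $f(S\cup\{u\})=f(S)$'') does not iterate to $f(A)=f(M(A,\succcurlyeq))$: after absorbing one dominated element, the enlarged set no longer consists solely of dominators of the next one. You need the stronger statement $f(H\cup L)=f(H)$ for an arbitrary dominated set $L$, which the paper proves by a double induction (first a single top element against $L$, then a whole top class $H$ against $L$), with strong richness supplying off-line auxiliary points in the collinear cases. Second, the claim that $f$ restricted to an equivalence class satisfies the strict weighted averaging axiom is not automatic from ``pairwise equivalent elements'': for disjoint $A,B$ inside one class with $f(A)\neq f(B)$ you must still rule out $f(A\cup B)\in\{f(A),f(B)\}$, which the paper handles as a separate step before invoking Theorem \ref{main_theorem} classwise. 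The overall architecture (define the order, absorb lower classes, apply Theorem \ref{main_theorem} within each class) does match the paper, but as written the construction of $\succcurlyeq$ is incorrect, so the proof does not go through.
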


The representation \eqref{main_equation2} has two components: one is captured by the weak order $\succcurlyeq$; the other is the weight function $w$.
The weak order partitions the set of features into equivalence classes and ranks them from top to bottom. If all models $x\in A$ have the same ranking, then the outcome $f(A)$ of $A\in X^*$ is the weighted average of the outcomes of each element  $x\in A$. However, if some elements have a higher ranking than others, then the aggregation rule will ignore the lower-ordered elements.
Hence, the assessment of the aggregation rule has two steps. First, it only considers the highest-ordered elements. Then, it uses the weight function and finds the weighted average among the highest-ordered elements.\\

The richness condition is necessary for both theorems \ref{main_theorem} and \ref{general_thm}. Example \ref{examplerichness} shows that without this condition,   aggregation rules may satisfy the strict weighted averaging axiom without having a weighted average representation.

\begin{example}\label{examplerichness}
Let $X=\{x,y,z\}$ with $f(\{x\})=0, f(\{y\})=1/2, f(\{z\})=1, f(\{x,y\})=1/4, f(\{y,z\})=3/4, f(\{x,z\})=3/8$, and $f(\{x,y,z\})=7/16$. Assume that there exists a positive weight function on  $X$, and the aggregation rule over any coalition of $X$ has a representation as a weighted average of its elements. Assume that $w: X\rightarrow \mathbb{R}_{++}$ is the corresponding weight function. In order to have such a representation, we should have $f(\{x,y\})=\frac{w(x)f(x)+w(y)f(y)}{w(x)+w(y)}$. By considering the value of $f(\{x,y\}), f(\{x\})$, and $f(\{y\})$, we get $\frac{w(x)}{w(y)}=1$. Similarly, by considering the coalition $\{y,z\}$ we get $\frac{w(y)}{w(z)}=1$. By combining these two observations, we get $\frac{w(x)}{w(z)}=1$. However, considering the coalition $\{x,z\}$, and the representation $f(\{x,z\})=\frac{w(x)f(x)+w(z)f(z)}{w(x)+w(z)}$, we get $\frac{w(x)}{w(z)}=5/3$, which is a contradiction. Hence, the representation does not work in this case.
\end{example}

Assume $X$ is a subset of a normed vector space. We will now show that weight $w$ in the representation  \eqref{main_equation2} must be continuous if $f$ is continuous as defined below.
\begin{definition} An aggregation rule $f: X^*\rightarrow H$ is \emph{\textbf{(continuous)}}  if, for any  $A\in X^*\cup\{\emptyset\}$,  any  $x\in X\setminus A$, and any sequence $(x_n)_{n=1}^{\infty}\in X$,  $x_n\to x$ implies\footnote{The convergence in $X$ is with respect to the norm on $X$, and the convergence in the range of the aggregation rule is with respect to  norm of $H$.} $f(A\cup\{x_n\})\to f(A\cup\{x\})$.
\end{definition}

\begin{theorem} \label{proposition_contin}
Let X be a subset of a normed vector space and let $f:X^*\to H$ be a strongly rich continuous aggregation rule satisfying the weighted averaging property. Then the representation \eqref{main_equation2} holds true with  a continuous weight function $w:X\to R_{++}$. Furthermore, for any $x\in X$ there exists $\epsilon >0$ such that $y\sim x$ for all where $y\in X$ such that $|y-x|<\epsilon $.
\end{theorem}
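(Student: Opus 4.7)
The representation is already in hand from Theorem~\ref{general_thm}, so two things remain: the local-equivalence claim (every $x$ has a neighborhood of $\sim$-equivalents), and the continuity of $w$. I would deduce the former from strong richness plus the continuity axiom, and then the latter follows almost immediately by plugging into the pairwise representation.

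\textbf{Step 1 --- local equivalence.} Fix $x \in X$. Using strong richness and Theorem~\ref{general_thm}, I first pick $y \in X$ with $x \sim y$ and $f(y) \neq f(x)$: condition~(2) of strong richness forces $x \sim y$ (otherwise $f(\{x,y\}) \in \{f(x),f(y)\}$), while condition~(1) rules out $f(x)=f(y)$. Suppose for contradiction some sequence $x_n \to x$ satisfies $x_n \not\sim x$ for all $n$. By completeness, passing to a subsequence I may assume either $x_n \succ x$ for all $n$ or $x \succ x_n$ for all $n$; transitivity with $x \sim y$ excludes $x_n \sim y$, so a further subsequence has a fixed strict relation between $x_n$ and $y$. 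Applying the continuity axiom with $A=\{y\}$ (valid since $y \neq x$, and $x_n \neq y$ eventually) gives $f(\{y,x_n\}) \to f(\{y,x\})$, and the representation yields $f(\{y,x\}) = \mu f(x)+(1-\mu)f(y)$ with $\mu = w(x)/(w(x)+w(y)) \in (0,1)$. The representation also computes $f(\{y,x_n\})$ as either $f(x_n)$ (if $x_n \succ y$) or $f(y)$ (if $y \succ x_n$). Using the $A=\emptyset$ case of continuity, $f(x_n) \to f(x)$, so either subcase collapses the limit identity to $f(x)=f(y)$, contradicting the choice of $y$.

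\textbf{Step 2 --- continuity of $w$.} Fix $x$ and the witness $y$ above. For any sequence $x_n \to x$, Step~1 gives $x_n \sim x \sim y$ eventually, so Theorem~\ref{general_thm} applied to $\{x_n,y\}$ and $\{x,y\}$ yields
\[
f(\{x_n,y\}) = \lambda_n f(x_n)+(1-\lambda_n)f(y),\qquad \lambda_n = \frac{w(x_n)}{w(x_n)+w(y)},
\]
together with the analogous identity with $\lambda = w(x)/(w(x)+w(y))$ for the limit. Continuity gives $f(\{x_n,y\}) \to f(\{x,y\})$ and $f(x_n)\to f(x)$; since $f(x)-f(y)\neq 0$, these vector equalities uniquely determine the scalar (take the inner product with $f(x_n)-f(y)$, say), so $\lambda_n \to \lambda$. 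Inverting, $w(x_n) = w(y)\lambda_n/(1-\lambda_n) \to w(y)\lambda/(1-\lambda) = w(x)$, which is the desired continuity at $x$.

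\textbf{Main obstacle.} The genuinely delicate step is Step~1: it requires the correct use of strong richness to produce a witness $y$ with $y \sim x$ and $f(y) \neq f(x)$, careful use of transitivity to exclude the $x_n \sim y$ subcase, and then reducing the remaining subcases to the same contradiction via the fact that $f(\{y,x\})$ is a strict (non-degenerate) convex combination of $f(x)$ and $f(y)$. Once Step~1 is settled, Step~2 is essentially a one-line algebraic consequence of the pairwise representation together with the non-collinearity provided by strong richness.
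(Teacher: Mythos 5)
Your proposal is correct and follows essentially the same route as the paper: strong richness supplies a witness $y\sim x$ with $f(y)\neq f(x)$, the continuity axiom is applied to the pairs $\{x_n,y\}$ (and to singletons), and the strict convex-combination coefficients then force both the eventual equivalence $x_n\sim x$ and the convergence $w(x_n)\to w(x)$. The only differences are presentational: the paper packages the order cases into indicator functions and invokes its Lemma~\ref{lin_indep} (stated with linear independence of $f(x),f(y)$), whereas you split the strict-order cases into a direct contradiction and recover the coefficient convergence from mere distinctness of $f(x)$ and $f(y)$, which is equally valid.
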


We will now show that if $X$ is a convex subset of a normed vector space, then any continuous aggregation rule on $X$ under the weighted averaging axiom can only have a single equivalence class, and as a consequence, both the weighted averaging and strict weighted averaging properties lead to the representation \eqref{main_equation} for $f$.

\begin{theorem}\label{single_indiff}
Let $X$ be a convex subset of a normed vector space, and $f:X^*\to H$ a rich continuous aggregation rule satisfying the weighted averaging property.
Then, there exists a continuous weight function $w:X\to R_{++}$ such that the representation \eqref{main_equation} holds true.
\end{theorem}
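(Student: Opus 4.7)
The plan is a three-step reduction to Theorem~\ref{proposition_contin}: (i) upgrade richness to strong richness under the present hypotheses; (ii) invoke Theorem~\ref{proposition_contin} to obtain the representation \eqref{main_equation2} with a continuous weight $w$ and the local equivalence property; (iii) use convexity of $X$ to collapse the weak order $\succcurlyeq$ to a single equivalence class, so that \eqref{main_equation2} reduces to \eqref{main_equation}.

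Step (iii) is short. By Theorem~\ref{proposition_contin}, for each $x\in X$ there is $\epsilon>0$ with $y\sim x$ whenever $|y-x|<\epsilon$, so every $\sim$-equivalence class is open in $X$. These classes partition $X$ into pairwise disjoint open sets, and since $X$ is convex and hence connected, there can be only one nonempty class. Therefore $M(A,\succcurlyeq)=A$ for every $A\in X^{\ast}$, and \eqref{main_equation2} becomes \eqref{main_equation} with the continuous weight $w$ furnished by step~(ii).

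Step (i) is the main obstacle. Iterating the weighted averaging axiom gives $f(A)\in\operatorname{conv}\{f(y):y\in A\}$, so richness of $f$ forces three singletons $a,b,c\in X$ with $f(a),f(b),f(c)$ affinely independent in $H$. For any $x\in X$, the point $f(x)$ cannot lie on all three lines joining pairs of $\{f(a),f(b),f(c)\}$ (else those three would be collinear), so two of $\{a,b,c\}$, call them $y$ and $z$, satisfy condition~$(1)$ of strong richness. For condition~$(2)$, if for instance $f(\{x,y\})\in\{f(x),f(y)\}$, I would slide $y$ along the segment in $X$ toward $z$ (or toward another suitable point), writing $f(\{x,y_t\})=\mu_t f(x)+(1-\mu_t)f(y_t)$. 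On the open set of $t$ where $f(y_t)\neq f(x)$, the coefficient $\mu_t$ is continuous, and intermediate-value reasoning combined with the triple-decomposition identities obtained by applying weighted averaging to $\{x,y_t,z\}$ in different orders produces some $y_{t^{\ast}}$ with $\mu_{t^{\ast}}\in(0,1)$. A sufficiently small perturbation preserves non-collinearity with $f(z)$, and running the same procedure for $z$ delivers strong richness at $x$.

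The hardest part is the last clause of step~(i): the two extreme configurations $f(\{x,y\})=f(x)$ and $f(\{x,y\})=f(y)$ can interlock, and $\mu_t$ is undefined at any $t$ with $f(y_t)=f(x)$. Ruling out persistent dominance along the entire segment — which, together with the triple-decomposition identities and continuity, would contradict richness — is the technical crux. Once step~(i) is in place, steps~(ii) and (iii) follow directly from Theorem~\ref{proposition_contin} and the connectedness of the convex domain $X$.
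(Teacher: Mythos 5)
Your overall architecture---apply Theorem \ref{proposition_contin} and then use connectedness of the convex domain $X$ to collapse the (open) equivalence classes to a single one---is sensible, and steps (ii) and (iii) are correct as written: by Theorem \ref{proposition_contin} each class is open, the classes partition $X$, connectedness leaves one class, and \eqref{main_equation2} reduces to \eqref{main_equation} with the continuous weight. (The paper's appendix contains no proof of Theorem \ref{single_indiff}, so there is no official argument to compare against; its introduction does suggest exactly this collapse-of-classes picture.) The difficulty is step (i). Theorem \ref{proposition_contin} requires \emph{strong} richness, while Theorem \ref{single_indiff} assumes only richness, so your entire proof rests on the claim that richness, continuity, convexity and weighted averaging together imply strong richness, i.e.\ that for every $x$ there are non-collinear partners $y,z$ with $f(\{x,y\})\notin\{f(x),f(y)\}$ and $f(\{x,z\})\notin\{f(x),f(z)\}$. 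That claim is precisely the additional content of Theorem \ref{single_indiff} beyond Theorem \ref{proposition_contin} (it follows immediately from the conclusion \eqref{main_equation}), and your sketch does not establish it---as you yourself concede when you label ``persistent dominance'' the unresolved crux.

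Concretely, the sliding/intermediate-value argument cannot exclude the bad case. Where $f(y_t)\neq f(x)$ the coefficient $\mu_t$ is indeed continuous, but it takes values in the closed set $[0,1]$ and nothing forces it off $\{0,1\}$: the scenario $f(\{x,y_t\})=f(y_t)$ for all $t$ yields no local contradiction with continuity (for instance, along a segment with $y_t\to x$ both $f(\{x,y_t\})$ and $f(y_t)$ converge to $f(x)$, exactly as the continuity axiom demands), and it does not contradict richness either, since richness is a global statement about the range of $f$ and is a priori compatible with a single point $x$ being dominated by every other point. Moreover, at any parameter with $f(y_t)=f(x)$ the weighted averaging axiom forces $f(\{x,y_t\})=f(x)$ no matter what, $\mu_t$ is undefined there, and $\mu$ may jump between $0$ and $1$ across such a point, so the intermediate value theorem cannot be applied through it; and a $t^{\ast}$ produced far from the starting configuration need not keep $f(x),f(y_{t^{\ast}}),f(z)$ non-collinear, which ``a sufficiently small perturbation'' does not repair. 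What is missing is a genuine global argument---for example, one in the spirit of the proofs of Theorems \ref{general_thm} and \ref{proposition_contin} that uses three-point decompositions to propagate non-dominance along the connected domain, or a direct proof that continuity on a convex domain forces $f(\{x,y\})$ to be a strict mixture whenever $f(x)\neq f(y)$---and without it the proof is incomplete at its decisive step.
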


\section{Preference aggregation and duality}\label{dual_Thm}
In many cases, where the range of the aggregation rule is the set of linear functionals, a simple geometrical interpretation of the weighted averaging axiom results in a related but different consistency axiom.
Let $H$ be a Hilbert space and write $\langle \cdot, \cdot \rangle$ for the associated inner product.
Let $S\subset H$ be a convex subset of $H$.
Every $h\in H$ induces a weak order (reflexive, transitive, and complete binary relation) $\succsim_h$ over the set $S$ by:
\begin{equation}\label{eqlkkjkdehdjd}
s_1\succsim_h s_2 \Leftrightarrow \langle h ,s_1\rangle \geq  \langle h,s_2\rangle
\end{equation}
Let $X$ be a non empty set. In this section we define an aggregation rule\footnote{By the Riesz representation theorem, $f$ can also be defined as function mapping $X^*$ to the space of continuous linear functionals on $H$, in which case for $A\in X^*$, $f(A)$ is identified with the unique element $h\in H$ such that $f(A)(x)=\langle h,x \rangle$ for $x\in H$.} as a function $f$ mapping $X^*$ to $H$.
Since we may interpret each $h\in H$ as a linear ranking of the elements of the set $S$, the goal of an aggregator $f$ is to attach an aggregated linear ranking to every finite subset $A$ of $X$.

\begin{example}\label{exkjedkhed}
A simple example of interpretation of $X,S$ and $f$ is as follows.
Let $X$ be a set of experts and $S$ a set of alternatives (models, decisions, choices). An expert $x\in X$ defines a ranking/preference $f(\{x\})$ over $S$. An aggregation rule $f$ is a voting mechanism enabling the aggregation of the preferences of a finite set of experts.
A rational notion of consistency (employed here and formally introduced below in Definition \ref{def_Consistent_rule}) is that if  $A, B\in X^*$ are two disjoint sets of experts such that both ($f(A)$ and $f(B)$) prefer $s_1\in S$ over $s_2\in S$, then their aggregate $f(A\cup B)$ must also prefer $s_1$ over $s_2$.
\end{example}

Observing that the order \eqref{eqlkkjkdehdjd} is invariant under scaling of $h$ we will restrict the range of aggregation rules to
the set $N_\nu=\{h\in H|\  \langle h,\nu \rangle=1\}$ for some $\nu \in H$. This restriction also avoids entirely opposite ranking directions by imposing a shared rank on $\nu$. The condition of existence of such a $\nu$ is what we call a \textbf{\textit{minimal agreement}} condition.

\begin{definition}\label{def_Consistent_rule} An  aggregation rule $f: X^*\to N_v$ is \textbf{\emph{weakly consistent}} if for all disjoint sets  $A, B\in X^*$, and for all  $s_1,s_2\in S$,
\begin{equation}\label{def_first_expr_const}
s_1 \succsim_{f(A)} s_2\ , \ s_1\succsim_{f(B)}  s_2 \Rightarrow s_1 \succsim_{f(A\cup B)}  s_2
\end{equation}
Moreover, it is \textbf{\emph{consistent}} if it also satisfies the following condition:
\begin{equation}\label{def_second_expr_const}
s_1 \succ_{f(A)} s_2\ , \  s_1 \succsim_{f(B)}  s_2 \Rightarrow s_1 \succ_{f(A\cup B)}  s_2
\end{equation}
\end{definition}

A simple duality argument (Farkas's lemma) results in the following theorem.

\begin{theorem}\label{thm_consistencyeqwa}
Let $f:X^*\to N_\nu$ be an aggregation rule. Then, the followings are equivalent:
\begin{enumerate}
\item $f$ is consistent.
\item $f$ satisfies the strict weighted averaging property.
\end{enumerate}
Moreover, the followings are also equivalent:
\begin{enumerate}
\item $f$ is weakly consistent.
\item $f$ satisfies the weighted averaging property.
\end{enumerate}
\end{theorem}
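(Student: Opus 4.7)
My plan is to prove both equivalences in parallel, since they differ only in the strictness of the convex-combination weight. Write $u=f(A)$, $v=f(B)$, and $w=f(A\cup B)$; all three lie in $N_\nu$. The direction (2)$\Rightarrow$(1) is immediate from bilinearity: assuming $w=\lambda u+(1-\lambda)v$, for every $s_1,s_2\in S$,
\[
\langle w,s_1-s_2\rangle=\lambda\,\langle u,s_1-s_2\rangle+(1-\lambda)\,\langle v,s_1-s_2\rangle,
\]
which transports weak $\succsim$-inequalities when $\lambda\in[0,1]$, and transports strict inequalities when $\lambda\in(0,1)$ (since then a single strict inequality on the right dominates).

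The substantive content is the reverse direction. I would reformulate weak consistency as the dual-cone inclusion
\[
\{y\in S-S:\langle u,y\rangle\ge 0,\ \langle v,y\rangle\ge 0\}\;\subseteq\;\{y\in S-S:\langle w,y\rangle\ge 0\},
\]
and then invoke Farkas's lemma. After projecting $u,v,w$ onto $V:=\overline{\mathrm{span}}(S-S)$---which is harmless since the preference orders $\succsim_u,\succsim_v,\succsim_w$ depend only on these projections---the restricted dual-cone inclusion becomes a full Farkas hypothesis in $V$ and yields scalars $\alpha,\beta\ge 0$ with $w=\alpha u+\beta v$ (with any $V^\perp$ discrepancy absorbed by the common normalization $\langle\cdot,\nu\rangle=1$). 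Pairing this identity with $\nu$ forces $\alpha+\beta=1$, producing the convex combination $w=\lambda u+(1-\lambda) v$ with $\lambda\in[0,1]$, which is the weighted averaging axiom.

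To upgrade to the strict equivalence, I would exclude $\lambda\in\{0,1\}$ whenever $u\ne v$. The constraint $u,v\in N_\nu$ makes $u\ne v$ equivalent to the linear independence of $u$ and $v$ (any scalar relation $u=cv$ combined with $\langle u,\nu\rangle=\langle v,\nu\rangle=1$ forces $c=1$). Hence there exists $y\in S-S$ with $\langle u,y\rangle>0$ and $\langle v,y\rangle\ge 0$ (using the same non-degeneracy of $S-S$ invoked above). If $\lambda=0$, then $w=v$, so $\langle w,y\rangle=\langle v,y\rangle$ is only $\ge 0$, contradicting the strict-consistency conclusion $\langle w,y\rangle>0$ for the corresponding $s_1,s_2$. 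The case $\lambda=1$ is symmetric, and if $u=v$ any $\lambda\in(0,1)$ trivially realizes the representation.

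The main obstacle is the mismatch between the test-vector range in Farkas's lemma (all of $H$) and the actual hypothesis range supplied by consistency (only $S-S$). I resolve this by passing to $V=\overline{\mathrm{span}}(S-S)$, noting that the three preferences $\succsim_u,\succsim_v,\succsim_w$ depend solely on the projections of $u,v,w$ onto $V$, and that the role of $\nu$ in $N_\nu$ is what upgrades the conic combination produced by Farkas to the convex combination required by \eqref{eqhejhdgeydg}.
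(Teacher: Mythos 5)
Your overall route is the same as the paper's: the easy direction by bilinearity of the inner product, and the hard direction by a Farkas/duality argument combined with the common normalization $\langle \cdot,\nu\rangle=1$ that upgrades a conic combination to a convex one. The genuine gap is the step where you pass to $V=\overline{\mathrm{span}}(S-S)$ and assert that "any $V^\perp$ discrepancy is absorbed by the normalization." Consistency only constrains the projections $P_Vu,P_Vv,P_Vw$, so Farkas applied in $V$ yields $P_Vw=\alpha P_Vu+\beta P_Vv$ with $\alpha,\beta\ge 0$ and nothing more; the single scalar condition $\langle\cdot,\nu\rangle=1$ cannot recover the components orthogonal to $V$, and pairing with $\nu$ to force $\alpha+\beta=1$ presupposes the full identity $w=\alpha u+\beta v$ in $H$, which is exactly what is missing. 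Indeed no argument can close this gap at the stated level of generality: take $H=\mathbb{R}^2$, $\nu=(1,0)$, $S=[0,1]\times\{0\}$; then every element of $N_\nu$ induces the same order on $S$, so every aggregation rule is (even strictly) consistent, while weighted averaging can fail. The implication "consistent $\Rightarrow$ weighted averaging" really needs a richness hypothesis on $S$, and this is what the paper's own proof implicitly uses: there $S=L$ is the full-dimensional lottery set, and the Farkas separating direction $z$ is realized as $x-y$ with $y$ an interior point of $L$, so the dual-cone hypothesis is available for all directions of $H$ and no projection is needed.

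A second, smaller flaw is the exclusion of $\lambda\in\{0,1\}$ in the strict case. From $\langle u,y\rangle>0$ and $\langle v,y\rangle\ge 0$, strict consistency gives $\langle w,y\rangle>0$; if $\lambda=0$ so that $w=v$, this says $\langle v,y\rangle>0$, which does not contradict $\langle v,y\rangle\ge 0$. To get a contradiction you must choose $y$ with $\langle u,y\rangle>0$ and $\langle v,y\rangle=0$ exactly, and such a $y\in\mathrm{span}(S-S)$ exists only when the restrictions of $u$ and $v$ to that span are linearly independent --- again a richness condition on $S$, not a consequence of $u\neq v$ in $N_\nu$. (In the paper this is handled inside the cone argument: boundary points of the cone generated by $u_A,u_B$ are ruled out by the strict clause of the consistency/extended Pareto axiom applied to a direction annihilating one utility and strictly positive on the other.) The parts of your proposal that do work are the (2)$\Rightarrow$(1) direction, which coincides with the paper's, and the observation that testing only against $y\in S-S$ is harmless, since for convex $S$ the cone generated by $S-S$ equals its span.
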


Using  Theorem \ref{main_theorem}, we immediately attain the representation of the consistent aggregation rules.

\begin{corollary}\label{cor_AAR}
Let $f:X^*\to N_\nu$ be a consistent  rich aggregation rule. Then, there exists a weight function $w: X\to \mathbb{R}_{++}$ such that for every set of features $A\in X^*$,
\begin{equation}
f(A)=\sum\limits_{x\in A}\left(\frac{w(x)}{\sum\limits_{y\in A} w(y)}\right)f(x).
\end{equation}
Moreover, the weight function is unique up to multiplication by a positive number.
\end{corollary}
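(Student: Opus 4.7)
The proof is essentially a direct composition of the two main tools already established: Theorem \ref{thm_consistencyeqwa}, which provides the bridge from the order-theoretic consistency axiom to the strict weighted averaging property, and Theorem \ref{main_theorem}, which provides the functional representation of aggregation rules satisfying strict weighted averaging.

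The plan is as follows. First, I would invoke Theorem \ref{thm_consistencyeqwa} to conclude that, since $f$ is consistent, it satisfies the strict weighted averaging property. At this point, the restriction that the range of $f$ lies in $N_\nu$ is harmless: $N_\nu=\{h\in H:\langle h,\nu\rangle=1\}$ is an affine subspace of $H$, and since a convex combination of points in an affine subspace remains in that subspace, the equation $f(A\cup B)=\lambda f(A)+(1-\lambda)f(B)$ is compatible with $f:X^*\to N_\nu$. Hence $f$ is a rich aggregation rule $X^*\to H$ satisfying the strict weighted averaging property.

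Next, I would apply Theorem \ref{main_theorem} to $f$ (viewed as a map into $H$, using the richness hypothesis directly). This yields a weight function $w:X\to\mathbb{R}_{++}$ such that, for every $A\in X^*$,
\begin{equation*}
f(A)=\frac{\sum_{x\in A}w(x)f(x)}{\sum_{x\in A}w(x)}=\sum_{x\in A}\left(\frac{w(x)}{\sum_{y\in A}w(y)}\right)f(x),
\end{equation*}
which is exactly the representation claimed. The uniqueness of $w$ up to multiplication by a positive scalar is likewise inherited from the uniqueness statement in Theorem \ref{main_theorem}.

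There is no genuine obstacle here; the only mild point to verify is the interplay between the range restriction $N_\nu$ and the two cited theorems, but as noted above, affineness of $N_\nu$ makes both the strict weighted averaging property and the resulting weighted-average representation self-consistent with $f(A)\in N_\nu$ for every $A\in X^*$. Thus the corollary is an immediate consequence of the two preceding theorems.
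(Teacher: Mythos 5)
Your proof is correct and follows exactly the paper's route: the corollary is obtained by combining Theorem \ref{thm_consistencyeqwa} (consistency $\Leftrightarrow$ strict weighted averaging) with Theorem \ref{main_theorem} (representation and uniqueness of $w$ for rich rules satisfying strict weighted averaging). Your added remark that $N_\nu$ is affine and hence compatible with the weighted-average representation is a harmless verification the paper leaves implicit.
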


Note that we can generalize the result to the case of weakly consistent rules. 

\begin{remark}
The notion of consistency obtained as a dual interpretation of the weighted averaging is the same axiom as {\emph{\textbf{Extended Pareto}}} introduced by \cite{shapley2}. Similarly, it is the {\emph{\textbf{Combination axiom}}} in \cite{gilboa_inductive, gilboa_book}.
\end{remark}

\section{Belief Formation}\label{CBP}

In this section, we interpret the set of features as signals. Each signal contains some information about the distribution of states of nature. The role of an aggregation rule is an agent who makes a prediction about the true state of nature-based on observing some signals. In this context, the range of an aggregation rule is that of probability distributions over the states of nature. Following \cite{billot1}, an aggregation rule is a \emph{belief formation process} that associates with each finite set of signals, a \emph{belief} over the states of nature.

The representation of the belief formation process under the weighted averaging axiom is a straightforward application of the main results. Using our representation, on the one hand, we propose an extension, where the timing of signals may be important. We consider the case where an agent can receive signals in different time zones in the past. The agent tries to form a prediction at the present time, and it may perceive signals closer to the time of the prediction as more credible. To capture the representation, we introduce the \emph{stationarity} axiom, in which a belief induced by a set of received signals and their timing is the same as the belief induced by shifting the timings of all signals by a constant number to the past. 

Under stationarity, any belief formation process satisfying the strict weighted averaging axiom has a weight function over the set of signals and an exponential discount factor over each time zone. The belief associated with a set of received signals is the discounted weighted average of the beliefs associated with each signal. In this case, the weight function captures the time-independent value of each signal.

On the other hand, we interpret the set of signals as the information structure of an agent who wants to predict the true state. We interpret each subset of signals as an event in her information structure. We show that as long as the information structure has a finite cardinality, the strict weighted averaging axiom is the necessary and sufficient condition for a rich belief formation process to appear as a Bayesian updater. This result answers the question in \cite{yariv}, regarding finding a necessary and sufficient condition for a belief formation process to act as a \emph{Bayesian updating rule}.


\subsection{Belief Formation Processes}\label{bfp}

Let $\Omega=\{1,2, \ldots,n\}$ be a set of states of nature and let   $\Delta(\Omega)$ be the set of all probability distributions over $\Omega$.
 We interpret the elements of the set $X$ as disjoint signals or events.
The role of an aggregation rule over a finite subset of $X^*$ is to predict the true state of nature by assigning probabilities to each state of $\Omega$. Therefore, following  \cite{billot1}, the aggregation rules can be interpreted as a \emph{belief formation process}, which assigns a \emph{belief} to the set of states of nature after observing a finite subset of signals.

\begin{definition}A \textbf{\emph{belief formation process}} is a function $f:X^*\to \Delta(\Omega)$, that associates with every  finite set of signals $A\in X^*$, a \textbf{\emph{belief}} $f(A)\in \Delta(\Omega)$ on the states of nature.
\end{definition}

Theorem \ref{general_thm} shows that if the belief induced by the union of two disjoint finite sets of signals is on the line segment connecting the beliefs induced by each set of signals separately, then, under the strong richness condition, there exists a strictly positive weight function and a weak order over the set of signals such that the belief over any finite subset of signals is a weighted average of the beliefs induced by each of the highest-ordered signals of that subset.

By enforcing the belief formation process to use both of the induced beliefs, \textit{i.e.}, the belief induced by the union of two disjoint finite sets of signals is on the ``interior'' of the line segment connecting the induced belief of each set of signals separately, we can use Theorem \ref{main_theorem} to find the representation.
Formally, we have:
\begin{corollary}\label{CBFT_corol}
Let $f:X^*\to \Delta({\Omega})$ be a  strongly rich  belief formation process  satisfying the weighted averaging property.  Then, there exist a unique weak order $\succcurlyeq$ on $X$ and a weight function $w:X\to R_{++}$ such that for every $A\in X^*$:
\begin{equation}
f(A)=\sum\limits_{x\in M(A,\succcurlyeq)}\left(\frac{w(x)}{\sum\limits_{x\in M(A,\succcurlyeq)} w(x)}\right)f(x)
.\end{equation}
Moreover, if  $f$ satisfies the strict weighted averaging property, then the weak order $\succcurlyeq$ has only one equivalence class and for every $A\in X^*$:
\begin{equation}\label{eqlkjehde8d7h}
f(A)=\sum\limits_{x\in A}\left(\frac{w(x)}{\sum\limits_{x\in A} w(x)}\right)f(x).
\end{equation}
\end{corollary}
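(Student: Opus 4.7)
The plan is to derive the corollary as a direct specialization of Theorems \ref{general_thm} and \ref{main_theorem}. First I would observe that the target space $\Delta(\Omega)$ is a compact convex subset of $\mathbb{R}^n$, which is itself a finite-dimensional separable Hilbert space, so setting $H=\mathbb{R}^n$ puts a belief formation process $f:X^*\to\Delta(\Omega)$ squarely inside the framework of Section \ref{secmodag}. The weighted averaging axiom, as stated for aggregation rules, makes sense verbatim for $f$ because convex combinations of probability vectors remain probability vectors; no extra compatibility argument is needed.

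Next I would check the richness hypothesis required by Theorem \ref{general_thm}. By assumption $f$ is strongly rich, and strong richness trivially implies richness in the sense of Definition \ref{defjlkjhgewjhg}: strong richness guarantees that for every $x\in X$ there exist $y,z\in X$ with $f(x),f(y),f(z)$ not collinear, so in particular the range of $f$ is not contained in any line. Applying Theorem \ref{general_thm} directly yields a unique weak order $\succcurlyeq$ on $X$ and a weight function $w:X\to\mathbb{R}_{++}$ (unique up to positive scaling within each equivalence class) with
\begin{equation*}
f(A)=\sum_{x\in M(A,\succcurlyeq)}\frac{w(x)}{\sum_{y\in M(A,\succcurlyeq)}w(y)}f(x)\quad\text{for every }A\in X^*,
\end{equation*}
which is exactly the first assertion.

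For the second assertion, assume $f$ additionally satisfies the strict weighted averaging property. Strong richness still gives richness, so Theorem \ref{main_theorem} applies and supplies a weight function $\tilde w:X\to\mathbb{R}_{++}$ with $f(A)=\sum_{x\in A}\tilde w(x)f(x)/\sum_{x\in A}\tilde w(x)$ for every $A\in X^*$, which is \eqref{eqlkjehde8d7h}. To conclude that the weak order obtained in the first part has a single equivalence class, I would compare the two representations: for a pair $\{x,y\}$ with $f(x)\neq f(y)$, the representation from Theorem \ref{main_theorem} puts $f(\{x,y\})$ strictly between $f(x)$ and $f(y)$, whereas the representation from Theorem \ref{general_thm} would collapse to one of the two endpoints if $x\not\sim y$. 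Hence every pair of elements must be $\succcurlyeq$-equivalent, so $M(A,\succcurlyeq)=A$ for every $A$ and \eqref{main_equation2} reduces to \eqref{eqlkjehde8d7h}.

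The only subtlety, and arguably the main thing to be careful about, is handling pairs $x,y$ for which $f(x)=f(y)$ when arguing that $x\sim y$; here I would use strong richness to pick a third point $z$ with $f(x),f(z)$ (hence $f(y),f(z)$) non-collinear with a third reference, apply the previous argument to the pairs $\{x,z\}$ and $\{y,z\}$ to conclude $x\sim z\sim y$ by transitivity of $\succcurlyeq$. Everything else is a direct quotation of the two main theorems, and the uniqueness clauses of those theorems propagate verbatim to the corollary.
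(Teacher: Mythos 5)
Your proposal is correct and follows essentially the route the paper intends: the corollary is a direct specialization of Theorems \ref{general_thm} and \ref{main_theorem}, with strong richness trivially implying richness. Your extra argument that strict weighted averaging forces a single equivalence class (strictly interior pairwise aggregates contradict $f(\{x,y\})$ collapsing to an endpoint when $x\not\sim y$, with the $f(x)=f(y)$ case handled via a third point and transitivity) is exactly the small verification the paper leaves implicit, so there is no gap.
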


Although representation \eqref{eqlkjehde8d7h} is, under the strict weighted averaging property,  similar to the one in \cite{billot1}, their belief formation process is defined over \emph{sequences} of signals, in which each sequence can have multiple copies of the same signal. In contrast, we define the belief formation process over \emph{sets} of signals, and there can be only one copy of a signal in each set. Billot et al.'s main axiom, concatenation axiom, is defined over any two sequences of signals, and counts the number of each signal in each sequence. However, our strict weighted averaging property expressed for $f(A\cup B)$ does not allow   $A$ and $B$ to both contain the same signal.


\subsection{Role of Timing}\label{roft}
We now explore the role of the timing of signals by associating signals with time labels. In that setting, a signal closer to the time of the prediction may be perceived as more credible (have more weight) compared to the same signal if it was received further in the past.
Formally, let X be the set of signals. The present time denoted by $0$, and time $t\in \mathbb{N}$ represents $t$ units of time before the present time. For a given finite subset of signals $A\in X^*$, let a function $T_A:A\to \mathbb{N}$, represent the timing of each signal in the set $A$, \textit{i.e.}, for any signal $x\in A$, $T_A(x)$ is the time of receiving the signal $x$. Given a $c\in \mathbb{N}$, $T_A+c$ represents a \emph{time shift} of size $c$ over the timing $T_A$ of a set of received signals $A$. Finally, the set $X^T=\{(A,T_A)\ |\ A\in X^*,\  T_A:A\to \mathbb{N}\}$ represents all possible realizations of the received signals. In this context, a belief formation process is a function $f:X^T\to \Delta(\Omega)$.

Our main consistency property, in addition to the strict weighted averaging property, is the \emph{stationarity} property. A belief formation process is stationary if a belief induced by a set of received signals and their timing is the same as the belief induced by a constant shift of timings of the same received signals. More precisely:
\begin{definition}A \textbf{\emph{stationary}}  belief formation is a function $f:X^T\to \Delta(\Omega)$ such that
\[f((A,T_A+c))=f(A,T_A),\]
for $A\in X^*$, $T_A:A\to \mathbb{N}$, and $c\in \mathbb{N}$.
\end{definition}
The next proposition characterizes stationarity belief-formation processes satisfying the strict weighted averaging property.
\begin{proposition}\label{time_stationary}
Let a rich and stationary belief formation process $f:X^T\to \Delta(\Omega)$ satisfy the strict weighted averaging property. Then, there exist a unique discount factor $q\in (0,\infty)$ and a unique (up to multiplication by a positive number) weight function $w:X\to \mathbb{R}_{++}$, such that for all $(A,T_A)\in X^T$:
\begin{equation}
f(A,T_A)=\frac{\sum\limits_{x\in A}q^{T_A(x)}w(x)f(x)}{\sum\limits_{x\in A} q^{T_A(x)}w(x)}.
\end{equation}
\end{proposition}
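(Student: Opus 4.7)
The strategy is to reinterpret each timed signal as an enriched atom and apply Theorem \ref{main_theorem}, then exploit stationarity together with the uniqueness clause to force the exponential structure of the discount factor.

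First I would pass to the enriched signal space $Y=X\times \mathbb{N}$. A pair $(A,T_A)\in X^T$ corresponds bijectively to the finite set $S_A:=\{(x,T_A(x)):x\in A\}\subset Y$, and these are exactly the nonempty finite subsets of $Y$ on which the projection to $X$ is injective. Call this collection $\tilde X^*$. Define $\tilde f:\tilde X^*\to\Delta(\Omega)$ by $\tilde f(S_A):=f(A,T_A)$. Disjoint unions in $X^T$ transfer to disjoint unions in $\tilde X^*$ (which remain in $\tilde X^*$), so $\tilde f$ inherits the strict weighted averaging property, and its range coincides with that of $f$, hence is rich.

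Next I would apply Theorem \ref{main_theorem} to $\tilde f$. The proof of that theorem uses only pairwise disjoint aggregations and singleton evaluations, and all such operations stay within $\tilde X^*$, so it carries over verbatim to the restricted domain. This yields a weight function $\tilde w:Y\to\mathbb{R}_{++}$, unique up to a positive scalar, such that
\begin{equation*}
\tilde f(S)=\frac{\sum_{(x,t)\in S}\tilde w(x,t)\,\tilde f(\{(x,t)\})}{\sum_{(x,t)\in S}\tilde w(x,t)}.
\end{equation*}
Stationarity gives $\tilde f(\{(x,t)\})=f(\{x\},t)=f(\{x\},0)$, so after translating back,
\begin{equation*}
f(A,T_A)=\frac{\sum_{x\in A}\tilde w(x,T_A(x))\,f(\{x\},0)}{\sum_{x\in A}\tilde w(x,T_A(x))}.
\end{equation*}

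Finally I would extract the exponential discount. Fix $c\in\mathbb{N}$ and define the shifted weight $\tilde w_c(x,t):=\tilde w(x,t+c)$. Substituting $T_A+c$ for $T_A$ in the formula above and invoking stationarity shows that $\tilde w_c$ also represents $\tilde f$, hence by the uniqueness clause there exists $\kappa_c>0$ with $\tilde w(x,t+c)=\kappa_c\,\tilde w(x,t)$ for all $(x,t)\in Y$. The identity $\tilde w(x,t+c_1+c_2)=\kappa_{c_1+c_2}\tilde w(x,t)=\kappa_{c_1}\kappa_{c_2}\tilde w(x,t)$ forces $\kappa_{c_1+c_2}=\kappa_{c_1}\kappa_{c_2}$, so $\kappa_c=q^c$ with $q:=\kappa_1\in(0,\infty)$. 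Setting $w(x):=\tilde w(x,0)$ gives $\tilde w(x,t)=q^t w(x)$, which is the claimed representation. The uniqueness of $q$ follows because the ratio $\tilde w(x,1)/\tilde w(x,0)$ is invariant under the positive rescalings in Theorem \ref{main_theorem}, and the uniqueness of $w$ up to a positive scalar is inherited directly from the same theorem.

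The main obstacle is the first step: justifying that Theorem \ref{main_theorem} applies on $\tilde X^*$ rather than on the full power set $Y^*$ (which would include sets with a repeated $X$-coordinate, on which $\tilde f$ is not naturally defined). I would handle this by inspecting the proof of Theorem \ref{main_theorem} and checking that every set it constructs lies in $\tilde X^*$; a fallback, using stationarity, is to extend $\tilde f$ canonically to $Y^*$ by collapsing repeated first coordinates to their minimum-time representative, which yields a rule on $Y^*$ that agrees with $\tilde f$ on $\tilde X^*$ and still satisfies strict weighted averaging.
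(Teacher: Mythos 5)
Your overall strategy—lift the timed signals to $Y=X\times\mathbb{N}$, apply Theorem \ref{main_theorem} to the induced rule $\tilde f$, and then squeeze the exponential form out of stationarity plus the uniqueness clause via the Cauchy equation $\kappa_{c_1+c_2}=\kappa_{c_1}\kappa_{c_2}$—is genuinely different from the paper's proof, and the last step (shift-invariance of $\tilde f$ forces $\tilde w(\cdot,t+c)=\kappa_c\,\tilde w(\cdot,t)$, hence $\tilde w(x,t)=q^t w(x)$) is correct and clean. The paper instead works by hand: it first pins down $w$ on equal-time profiles via Corollary \ref{CBFT_corol} and stationarity, defines $q$ from a single two-signal comparison at times $1$ and $2$, and then propagates the representation to arbitrary pairs and arbitrary time gaps by an induction that repeatedly invokes Lemma \ref{lem_uniq}, finishing with the uniqueness part of Theorem \ref{main_theorem}.

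However, there is a genuine gap exactly where you flag the "main obstacle," and neither of your two remedies closes it. Theorem \ref{main_theorem} is stated for an aggregation rule defined on \emph{all} finite subsets of its ground set, whereas $\tilde f$ is defined only on $\tilde X^*$, the sets with injective projection to $X$; the weighted-averaging hypothesis you inherit from $f$ says nothing about disjoint $S_1,S_2\subset Y$ that share a signal at different times. Your fallback (b) is in fact false: take disjoint sets $S_1=\{(x,2)\}$ and $S_2=\{(x,1),(y,1)\}$ with $f(x)\neq f(y)$; the minimum-time collapse gives $F(S_1\cup S_2)=F(S_2)$, an endpoint of the (nondegenerate) segment joining $F(S_1)=f(x)$ and $F(S_2)$, so the extended rule violates the \emph{strict} weighted averaging property and Theorem \ref{main_theorem} cannot be applied to it. Your primary route (a) is not a mere verification either: most auxiliary points in the proof of Theorem \ref{main_theorem} are harmless (a point chosen with image off a line containing the images of the current set automatically carries a new signal), but the reference-point constructions break down—for instance the weight of a timed copy $(x_0,t)$ of the reference signal cannot be defined through the pair $\{(x_0,t_0),(x_0,t)\}$, which lies outside $\tilde X^*$, and the triples through $x_0$ used in Step 1 may duplicate a signal when images coincide. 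Supplying a restricted-domain version of Theorem \ref{main_theorem} therefore requires a genuine argument, and that argument is essentially what the paper's Steps 2--4 carry out (anchoring on equal-time coalitions and propagating with Lemma \ref{lem_uniq}). Once such a representation $\tilde w(x,t)$ is actually established, your uniqueness/Cauchy argument does finish the proof (modulo the cosmetic point that if $0\notin\mathbb{N}$ you should set $w(x)=\tilde w(x,1)/q$).
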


As a consequence of the representation, under the assumption of the proposition, the weight over a received signal $x\in A$ can be separated into two separate factors. One is the intrinsic value of the signal, captured by $w(x)$. The other one is the role of timing, captured by $q^{T_A(x)}$. Moreover, the only discounting that captures the role of the timing is the exponential form.
If $q=1$, the timing is not important. Hence, the belief formation process only considers the intrinsic value of each signal. However, when $q\neq 1$, the belief formation process places relatively more ($q\in (0,1)$) or less ($q\in (1,\infty)$) weight on a signal received closer to the time of the prediction.


\subsection{Bayesian Updating }\label{bays-updating}

Let $(X,X^*\cup\{\emptyset\})$ be the measure space of events, where $X$ has a finite number of disjoint events. The space of events captures the information structure of the belief formation process. Similarly, by considering the set $\Omega=\{1,\ldots,n\}$, we denote  $(\Omega,2^{\Omega})$ as the measure space of states of nature, where $2^{\Omega}$ is the set of subsets of the set $\Omega$. For any probability distribution $d\in \Delta(\Omega)$ and any subset of the state of nature $B\in \Omega$, let $d(B)$ denote the probability of $B$ which is induced by the distribution $d$. Hence, $d(B)=\sum_{\omega\in B}d(\omega)$.

\begin{definition} A belief formation process $f: X^*\to \Delta(\Omega)$ is \textbf{\emph{Bayesian}}, if there exists a probability measure $P$ on the space $(\Omega\times X,2^{\Omega\times X})$, such that for every $A\in X^*$ and $B\in 2^{\Omega}$ we have:
\begin{equation}\label{baysian_equ}
\big(f(A)\big)(B)=\frac{P(B\times A)}{P_X(A)}
\end{equation}
where, $P_X$ is the marginal probability distribution of $P$ over $X$.
\end{definition}
 The right-hand side of the previous equation is the conditional probability of $B$ given $A$. Therefore, a Bayesian belief formation process $f$ behaves as a Bayesian updater: by observing an event $A$ in her information structure $X^*$, her prediction about the probability of the true state being in a subset $B\in\Omega$ comes from the Bayes rule. To put it differently, $\big(f(A)\big)(B)$ is equal to the conditional probability $P(B|A)$.

Our next proposition shows that our strict weighted averaging axiom is the necessary and sufficient condition for a rich belief formation process to be Bayesian.

\begin{proposition}\label{bayesupdating} A rich belief formation process is Bayesian if and only if it satisfies the strict weighted averaging property.
\end{proposition}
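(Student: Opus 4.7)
The plan is to prove both implications essentially by direct construction, leveraging Corollary \ref{CBFT_corol} for the harder direction and using finite additivity of $P$ for the easy direction. Richness enters only once, via the corollary; the converse uses no richness at all.

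For the implication \emph{strict weighted averaging $\Rightarrow$ Bayesian}, I would first invoke Corollary \ref{CBFT_corol}: since $f$ is rich and satisfies the strict weighted averaging property, there exists a weight function $w:X\to\mathbb{R}_{++}$ such that $f(A)=\sum_{x\in A}\frac{w(x)}{\sum_{y\in A}w(y)}f(\{x\})$ for every $A\in X^*$. Because $X$ is finite, $W:=\sum_{x\in X}w(x)<\infty$, so I can define a probability measure $P$ on $(\Omega\times X,2^{\Omega\times X})$ by
\begin{equation*}
P(\{(\omega,x)\})=\frac{w(x)}{W}\,\bigl(f(\{x\})\bigr)(\{\omega\}).
\end{equation*}
A quick check gives $P(\Omega\times X)=1$, and the $X$-marginal is $P_X(\{x\})=w(x)/W$, which is strictly positive. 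Then for $A\in X^*$ and $B\subseteq\Omega$,
\begin{equation*}
\frac{P(B\times A)}{P_X(A)}=\frac{\sum_{x\in A}(w(x)/W)\,f(\{x\})(B)}{\sum_{x\in A}w(x)/W}=\sum_{x\in A}\frac{w(x)}{\sum_{y\in A}w(y)}\,f(\{x\})(B)=f(A)(B),
\end{equation*}
which is exactly \eqref{baysian_equ}, so $f$ is Bayesian.

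For the converse, I would take two disjoint $A,B\in X^*$ and compute directly. By finite additivity of $P$ and of the marginal $P_X$, and noting that the very definition of Bayesian forces $P_X(C)>0$ for every $C\in X^*$ (otherwise \eqref{baysian_equ} would not be well-defined), I can set $\lambda=P_X(A)/\bigl(P_X(A)+P_X(B)\bigr)\in(0,1)$ and obtain
\begin{equation*}
f(A\cup B)(E)=\frac{P(E\times A)+P(E\times B)}{P_X(A)+P_X(B)}=\lambda\,f(A)(E)+(1-\lambda)\,f(B)(E)
\end{equation*}
for every $E\subseteq\Omega$. Since this holds for all $E$ and $\lambda\in(0,1)$ is independent of $E$, this is precisely the strict weighted averaging property.

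There is no real obstacle here: the argument is essentially bookkeeping once Corollary \ref{CBFT_corol} is available. The one delicate point to flag explicitly is that being Bayesian already entails $P_X(A)>0$ for every $A\in X^*$ (so that $\lambda$ lies strictly between $0$ and $1$, giving \emph{strict} rather than merely weighted averaging); this is why the corollary's conclusion and the Bayesian conclusion line up cleanly on both sides of the equivalence.
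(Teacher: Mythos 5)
Your proof is correct and is essentially the intended argument: the paper gives no separate proof of Proposition \ref{bayesupdating}, treating it as a direct consequence of the weighted-average representation (construct the joint prior from the normalized weights and the beliefs $f(\{x\})$, using that $X$ is finite in this subsection; conversely, Bayes' rule plus additivity of $P$ and strict positivity of $P_X$ on $X^*$ gives strict weighted averaging), which is exactly what you do. One small fix: cite Theorem \ref{main_theorem} rather than Corollary \ref{CBFT_corol}, since the corollary is stated under the strong richness hypothesis, whereas Theorem \ref{main_theorem} delivers precisely the representation you need under richness and strict weighted averaging.
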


Note that the richness condition is crucial. Otherwise, as shown in Example \ref{examplerichness}, there are cases where a belief formation process satisfies the strict weighted averaging axiom, but it is not a Bayesian updater.
We will now present a more general version of Proposition \ref{bayesupdating}  by adding the strong richness condition and weakening the strict weighted averaging condition to the weighted averaging property. In the more general version, it is possible to have zero probability events. The belief formation process behaves as a Bayesian updater, even conditional on observing a zero probability event. To capture the idea, we need the following definition.

\begin{definition}\label{defbayup}
A class of functions $\{P_A| \ P_A:2^\Omega\times X^* \to [0,1], \ A\in X^* \}$ is a \emph{\textbf{conditional probability system}} if it satisfies the following properties:
\begin{enumerate}
    \item For every $A\in X^* $ such that $A\neq\emptyset$, $P_A$ is a probability measure on $\Omega\times X$ with $P_A(\Omega\times A)=1$.
    \item For every disjoint events $A_1,A_2\in X^*$ and for every $C\in \Omega\times X$, we have:
    $$P_{A_1\cup A_2}(C)=P_{A_1\cup A_2}(\Omega\times A_1)P_{A_1}(C)+P_{A_1\cup A_2}(\Omega\times A_2)P_{A_2}(C)$$
\end{enumerate}
\end{definition}

In this definition, the probability measure  $P_\Omega$ represents a prior probability measure, and $P_A$ represents a posterior
(conditional) probability probability given the event $A$. Therefore, for any set $B\in \Omega$, $P_A(B\times A)$ is the conditional probability of $B$ given $A$. Moreover, for any two events $A_2\subset A_1$ in $X^*$, $P_{A_1}(\Omega \times A_2)$ is the conditional probability of event $A_2$ given $A_1$.
The first property of Definition \ref{defbayup} requires that the support of the posterior probability conditioned on an event $A$ is contained in $A$.
The second property requires that,  conditional on the event $A_1\cup A_2$, the Bayes updating rule should be satisfied even if the prior probability of  $A_1\cup A_2$ is zero.

\begin{definition} A belief formation process $f: X^*\to \Delta(\Omega)$ is \textbf{\emph{rationalizable by a conditional probability system}}
$\{P_A| \ P_A:2^\Omega\times X^* \to [0,1], \ A\in X^* \}$ if every $A\in X^*$ and $B\in 2^{\Omega}$ we have:
\begin{equation}\label{baysian_equ2b}
\big(f(A)\big)(B)=P_{A}(B \times A)\,.
\end{equation}
\end{definition}

By adding the strong richness condition, the next theorem shows that the weighted averaging axiom is the necessary and sufficient condition for rationalizing a belief formation process by a conditional probability system.

\begin{proposition}\label{generalbayes}
 A strongly rich belief formation process is rationalizable by a conditional probability system if and only if it satisfies the weighted averaging axiom.
\end{proposition}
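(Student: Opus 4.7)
The plan is to prove both directions of the equivalence separately, using Theorem \ref{general_thm} as the key tool for the harder direction.

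For the ``only if'' direction, suppose $f$ is rationalizable by a conditional probability system $\{P_A\}_{A\in X^*}$. Fix disjoint $A,B\in X^*$ and $B'\in 2^{\Omega}$. Applying property (2) of Definition \ref{defbayup} to the set $C = B'\times(A\cup B)$ and using that $P_A$ is supported on $\Omega\times A$ (by property (1)) together with the disjointness of $A$ and $B$, I get $P_A(B'\times(A\cup B)) = P_A(B'\times A) = f(A)(B')$ and similarly for $B$. This yields
\[ f(A\cup B)(B') = \lambda\, f(A)(B') + (1-\lambda)\, f(B)(B'), \]
with $\lambda = P_{A\cup B}(\Omega\times A)\in[0,1]$ independent of $B'$, which is the weighted averaging axiom.

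For the ``if'' direction, Theorem \ref{general_thm} provides a weak order $\succcurlyeq$ on $X$ and weights $w:X\to\mathbb{R}_{++}$ such that $f(A) = \sum_{x\in M(A,\succcurlyeq)}\frac{w(x)}{W(A)}f(\{x\})$ for every $A\in X^*$, where $W(A):=\sum_{y\in M(A,\succcurlyeq)}w(y)$. I would then construct $\{P_A\}$ by setting, for $A\in X^*$, $y\in X$, and $B'\subseteq \Omega$,
\[ P_A(B'\times\{y\}) = \begin{cases} \dfrac{w(y)}{W(A)}\, f(\{y\})(B') & \text{if } y\in M(A,\succcurlyeq), \\ 0 & \text{otherwise,} \end{cases} \]
and extending additively to general subsets of $\Omega\times X$. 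Property (1) of Definition \ref{defbayup} is immediate since $M(A,\succcurlyeq)\subseteq A$, and summing over $y\in A$ recovers $P_A(B'\times A)=f(A)(B')$, so $f$ is rationalized.

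The main technical step is verifying property (2) of Definition \ref{defbayup} for disjoint $A_1,A_2\in X^*$. I would split into cases according to how $M(A_1,\succcurlyeq)$ and $M(A_2,\succcurlyeq)$ compare under $\succcurlyeq$. When the top elements of $A_1$ and $A_2$ lie in the same $\sim$-equivalence class, $M(A_1\cup A_2,\succcurlyeq)$ is the disjoint union of $M(A_1,\succcurlyeq)$ and $M(A_2,\succcurlyeq)$, so $W(A_1\cup A_2)=W(A_1)+W(A_2)$ and $P_{A_1\cup A_2}(\Omega\times A_i) = W(A_i)/W(A_1\cup A_2)$; substituting into the mixture formula for $P_{A_1\cup A_2}(B'\times\{y\})$ yields the identity by direct algebra. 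When $M(A_1,\succcurlyeq)\succ M(A_2,\succcurlyeq)$, one has $M(A_1\cup A_2,\succcurlyeq)=M(A_1,\succcurlyeq)$, hence $P_{A_1\cup A_2}=P_{A_1}$ as measures and $P_{A_1\cup A_2}(\Omega\times A_1)=1$, making the identity trivial; the reverse case is symmetric. The only subtlety to watch for is the behavior on $\Omega\times (A_1\cup A_2)^c$, where both sides vanish by property (1), so the identity extends without issue to arbitrary $C\in\Omega\times X$.
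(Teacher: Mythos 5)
Your proof is correct, and it follows what is essentially the paper's intended route: the converse direction is the direct computation from Definition \ref{defbayup}, and the forward direction applies Theorem \ref{general_thm} and builds the conditional probability system $\{P_A\}$ from the two-stage (weak order plus weights) representation, verifying property (2) by the same case split on whether the top classes of $A_1$ and $A_2$ coincide. The paper leaves this argument implicit as an application of Theorem \ref{general_thm}, and your write-up supplies exactly the verification it has in mind.
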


\begin{remark} \cite{yariv} considers the problem of characterizing the \emph{updating rules} (in our context the belief formation processes) that appear to be Bayesian. By providing an example, they show that their \emph{soundness condition}, our strict weighted averaging condition, is not a sufficient condition for an updating rule to behave as a Bayesian updater. However, we show that the strict weighted averaging condition is the necessary and sufficient condition as long as the belief formation process satisfies our richness condition.
\end{remark}


\section{Average Choice Functions}\label{choice_functions}

In this section, the set of features is a subset of $\mathbb{R}^n$. We interpret each feature as a choice object. The interpretation of the aggregation rule is a decision-maker that selects a choice randomly from a menu of choice objects. We model the decision-maker as an \textit{average choice function} that associates with any menu of choice objects, an average choice (mean of the distribution of choices) in the convex combination of choice objects. The average choice is easier to report and obtain rather than the entire distribution\footnote{Check \cite{ahn} for the complete discussion on merits of average choice.}. However, except for the case where elements of a menu are affinely independent, average choice does not uniquely reveal the underlying distribution of choices. 

First, using our main representation, we show that it is possible to uniquely extract the underlying distribution of choices as long as the average choice function satisfies the weighted averaging axiom.

Then, we illustrate two applications of the result. In one application, we consider the class of average choice functions that can be rationalized by a \emph{Luce rule}, \textit{i.e.}, a stochastic choice function that satisfies the \emph{independence of irrelevant alternatives} axiom (IIA) proposed by \cite{luce1}. We show that the average choice functions satisfying the strict weighted averaging axiom are exactly the ones that can be rationalized by a Luce rule. More generally, we show that the class of average choice functions satisfying the weighted averaging axiom is the same as the class of average choice functions rationalizable by a \emph{two-stage Luce} model proposed by \cite{generallucemodel}.

In the second application, we consider continuous average choice functions. First, we show that any continuous average choice function under the weighted averaging axiom is rationalizable by a Luce rule. This means that there is no continuous average choice function that is rationalizable by a two-stage Luce rule but not with a Luce rule.

Then, we illustrate a connection of our result with the one by \cite{kalai}, regarding the impossibility of an average choice function to satisfy both the \emph{path independence} axiom and continuity. 




\subsection{Set up}\label{choice_premitives}

In this section, $X$ is a nonempty subset of $\mathbb{R}^n$, which is not a subset of a line. For any $A\subseteq \mathbb{R}^n$, we denote by ${\rm Conv}(A)$ the set of all convex combinations of vectors in $A$.
\begin{definition} An aggregation rule $f: X^*\to \mathbb{R}^n$ is called an \textbf{\emph{average choice function}}, if for any (menu of choices)  $A\in X^*$, $f(A)\in \text{Conv}(A)$.
\end{definition}

One  of the goals of this section is to present a  connection between our weighted averaging condition and the Path Independent, Luce, and two-stage Luce choice models. The following is a corollary of  theorems \ref{general_thm} and \ref{single_indiff},

\begin{corollary}\label{choice_corol}
Let an average choice function $f:X^*\to Conv(X)$ be strongly rich. The following statements are equivalent:\begin{enumerate}
\item The average choice function $f$ satisfies the weighted averaging condition.
\item There exists a unique weak order $\succcurlyeq$ on $X$ and a unique weight function $w:X\to R_{++}$, up to multiplication over equivalence classes of the weak order  such that for every $A\in X^*$:
\begin{equation}
f(A)=\frac{\sum\limits_{x\in M(A,\succcurlyeq)}w(x)x}{\sum\limits_{x\in M(A,\succcurlyeq)} w(x)}=\sum\limits_{x\in M(A,\succcurlyeq)}\left(\frac{w(x)}{\sum\limits_{x\in M(A,\succcurlyeq)} w(x)}\right)x.
\end{equation}
\end{enumerate}
Moreover, if the average choice function $f$ satisfies continuity and the weighted averaging condition, the weight function $w$ is continuous and the weak order $\succcurlyeq $ is the equivalence order. In this case, for every $A\in X^*$:

\begin{equation}
f(A)=\frac{\sum\limits_{x\in A}w(x)x}{\sum\limits_{x \in  A} w(x)}=\sum\limits_{x\in A}\left(\frac{w(x)}{\sum\limits_{x \in  A} w(x)}\right)x.
\end{equation}

\end{corollary}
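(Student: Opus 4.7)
The plan is to derive both parts from Theorem~\ref{general_thm}, Theorem~\ref{proposition_contin}, and Theorem~\ref{single_indiff}, exploiting the defining identity $f(\{x\})=x$ of an average choice function (valid since $\mathrm{Conv}(\{x\})=\{x\}$).

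First I would apply Theorem~\ref{general_thm} with $H=\mathbb{R}^n$. Strong richness transfers unchanged, and substituting $f(x)=x$ into representation~\eqref{main_equation2} immediately produces the asserted weighted-averaging formula together with the stated uniqueness of $\succcurlyeq$ and of $w$ up to positive rescaling on each equivalence class. The converse implication is a short case split: for disjoint $A,B\in X^*$, if the top rank of $A\cup B$ meets both $A$ and $B$ then $M(A\cup B,\succcurlyeq)=M(A,\succcurlyeq)\cup M(B,\succcurlyeq)$ and one reads $\lambda$ off the relative weight totals of $M(A,\succcurlyeq)$ and $M(B,\succcurlyeq)$; otherwise one side strictly dominates and $\lambda\in\{0,1\}$.

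For the continuity statement, Theorem~\ref{proposition_contin} applied to the strongly rich continuous rule $f$ on $X\subset\mathbb{R}^n$ yields continuity of $w$ on $X$ and local equivalence, so each equivalence class of $\succcurlyeq$ is open in $X$. The remaining task is to collapse $\succcurlyeq$ to a single equivalence class, and the plan is to reduce to Theorem~\ref{single_indiff} by lifting the problem to the convex hull. Using continuity of $w$ and a Tietze-type extension, I would produce a continuous positive extension $\tilde w:\mathrm{Conv}(X)\to\mathbb{R}_{++}$ of $w$, extend $\succcurlyeq$ to a weak order $\tilde\succcurlyeq$ on $\mathrm{Conv}(X)$, and define $\tilde f:(\mathrm{Conv}(X))^*\to\mathbb{R}^n$ by the two-part formula of \eqref{main_equation2}. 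By the converse direction of Theorem~\ref{general_thm}, $\tilde f$ satisfies the weighted averaging axiom, and by construction it is rich and continuous on the convex domain $\mathrm{Conv}(X)$; Theorem~\ref{single_indiff} then forces $\tilde\succcurlyeq$ to collapse to a single equivalence class on $\mathrm{Conv}(X)$, which restricts to a single class for $\succcurlyeq$ on $X$, giving the Luce-type formula.

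The main obstacle will be producing the extensions $\tilde w$ and $\tilde\succcurlyeq$ so that $\tilde f$ genuinely satisfies the hypotheses of Theorem~\ref{single_indiff}, in particular continuity and strong richness on the larger domain. The non-collinearity of $X$ guarantees non-collinear triples in $\mathrm{Conv}(X)$ (so strong richness survives), while continuity of $\tilde f$ reduces to continuity of $\tilde w$ together with the local equivalence already established by Theorem~\ref{proposition_contin}, each of which is a standard topological matter.
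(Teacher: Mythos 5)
The first equivalence is fine and is exactly the paper's route: since an average choice function has $f(\{x\})=x$, Theorem~\ref{general_thm} applied with $H=\mathbb{R}^n$ gives the two-stage formula and its uniqueness, and the converse is the easy case split you describe. The problem is the continuity part. Your plan is to extend $w$ and $\succcurlyeq$ from $X$ to $\mathrm{Conv}(X)$, define $\tilde f$ by the two-stage formula, and invoke Theorem~\ref{single_indiff} on the convex hull; but that theorem needs $\tilde f$ to be \emph{continuous} on $\mathrm{Conv}(X)$, and your claim that continuity of $\tilde f$ ``reduces to continuity of $\tilde w$ together with the local equivalence already established'' is where the argument breaks. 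Local equivalence on $X$ says nothing about how $\tilde\succcurlyeq$ behaves at the new points of $\mathrm{Conv}(X)$, and in fact Theorem~\ref{single_indiff} itself shows that a continuous weighted-averaging rule on a convex (hence connected) domain cannot have more than one equivalence class; so a continuous $\tilde f$ extending a $\succcurlyeq$ with two or more classes on $X$ simply cannot exist. The construction is therefore circular: it can only be carried out in the case where the conclusion is already true. (There is also a smaller technical issue: Tietze extension requires $X$ to be closed in $\mathrm{Conv}(X)$, which is not given, and a continuous positive $w$ on a non-closed $X$ need not extend continuously to the hull.)

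Moreover, no repair of this strategy is possible at the stated level of generality, because the continuity claim is false for disconnected $X$: take $X\subset\mathbb{R}^2$ to be two disjoint closed balls, $w\equiv 1$, and let $\succcurlyeq$ rank every point of the first ball strictly above every point of the second; the induced two-stage rule $f(A)$ (the barycenter of the top-ranked elements of $A$) is an average choice function, is strongly rich, satisfies weighted averaging, and is continuous in the paper's sequential sense (any sequence converging to a point of $X$ eventually stays in that point's ball, hence in its class), yet $\succcurlyeq$ has two classes. The paper's intended reading, consistent with Theorem~\ref{single_indiff} and made explicit in Corollary~\ref{cor:kalai}, is that $X$ is convex for the continuity statement; under that hypothesis the collapse to a single class and the continuity of $w$ follow by applying Theorem~\ref{single_indiff} (or Theorem~\ref{proposition_contin} plus connectedness of $X$, since the equivalence classes are open) directly to $f$ on $X$ itself, with no extension to the convex hull needed. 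You should either add the convexity (or connectedness) hypothesis and argue on $X$ directly, or drop the extension argument, which cannot be made to work as written.
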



\subsection{Luce Rationalizable Average Choice Functions}\label{average}

The following definitions are standard definitions in the context of individual decision-making.

\begin{definition}
A \textbf{\emph{stochastic choice}} is a function $ \rho : X^* \to \Delta(X)$, such that $\rho(A)\in \Delta(A)$ for any $A\in X^*$.
\end{definition}
For an average choice function $f:X^*\to Conv(X)$ and a menu $A\in X^*$, $f(A)\in \text{Conv}(A)$. Therefore, there exists a stochastic choice $ \rho : X^* \to \Delta(X)$ (which may not be unique) that rationalizes the average choice function $f$, \textit{i.e.},  $f(A)=\sum_{x\in A}\rho(x,A)x$, where $\rho(x,A)$ is the probability of selecting the element $x$ from the menu $A$.

One appealing form of a stochastic choice function is the one that satisfies \emph{Luce's IIA}, \textit{i.e.}, the probability of selecting an element over another element is independent of any other element. \cite{luce1} shows that stochastic choices that satisfy the IIA axiom are in the form of Luce rules.
\begin{definition}
A stochastic choice $\rho : X^*\to\Delta(X)$ is a \textbf{\emph{Luce rule}} if there is a  function $w : X\to R_{++}$, such that: $$ \rho (x,A)=\frac{w(x)}{\sum_{y\in A}w(y)} .$$ \\Furthermore, if $w$ is continuous, then $\rho$ is a continuous Luce rule.
\end{definition}

\begin{definition}
An average choice function $f$ is rationalizable by a stochastic choice $\rho$, if for all $A\in X^*$:
\[f(A)=\sum_{x\in A}\rho(x,A)x .\] \\Furthermore, if there exists a Luce rule that rationalizes the average choice function $f$, then $f$ is \emph{\textbf{Luce rationalizable}}.
\end{definition}

By considering our Theorem \ref{main_theorem} and corollary \ref{choice_corol}, a choice $f$ has a Luce form representation, \textit{i.e}, $f(A)=\sum\limits_{x\in A}(\frac{w(x)}{\sum\limits_{x \in  A} w(x)})x
$ if and only if it satisfies the strict weighted averaging condition. As a result:
\begin{corollary}
An average choice function is Luce rationalizable if and only if it satisfies the strict weighted averaging condition. Moreover, the Luce rule that rationalizes the average choice function is unique. \\ Furthermore, an average choice function is continuous Luce rationalizable if and only if it is continuous and satisfies the strict weighted averaging condition.
\end{corollary}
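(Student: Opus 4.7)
The plan is to deduce this corollary almost directly from Theorem \ref{main_theorem} and Corollary \ref{choice_corol}, once the two bookkeeping points about average choice functions are settled.

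First I would note the two structural observations that make the main theorems immediately applicable. (i) For any average choice function, $f(\{x\}) \in \mathrm{Conv}(\{x\}) = \{x\}$, so $f(\{x\}) = x$ for every $x\in X$. (ii) Since $X\subseteq \mathbb{R}^n$ is assumed not to lie on a line and the range of $f$ contains $\{f(\{x\}) : x\in X\} = X$, the aggregation rule $f$ is automatically rich in the sense of Definition \ref{defjlkjhgewjhg}. Hence no further hypothesis is required to invoke Theorem \ref{main_theorem}.

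For the first biconditional, I would argue both directions. If $f$ satisfies the strict weighted averaging property, Theorem \ref{main_theorem} yields a weight function $w:X\to\mathbb{R}_{++}$, unique up to positive scaling, with $f(A) = \frac{\sum_{x\in A} w(x) f(\{x\})}{\sum_{x\in A} w(x)} = \frac{\sum_{x\in A} w(x)\, x}{\sum_{x\in A} w(x)}$. Setting $\rho(x,A) = w(x) / \sum_{y\in A} w(y)$ produces a Luce rule that rationalizes $f$. Conversely, if $f$ is Luce rationalizable via weights $w$, then for disjoint $A,B\in X^*$ a one-line computation gives
\begin{equation*}
f(A\cup B) \;=\; \frac{W_A}{W_A + W_B}\, f(A) \;+\; \frac{W_B}{W_A + W_B}\, f(B),
\qquad W_A := \textstyle\sum_{x\in A} w(x),
\end{equation*}
with both coefficients in $(0,1)$, so strict weighted averaging holds. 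The uniqueness of the Luce rule then follows because the probabilities $w(x)/\sum_{y\in A} w(y)$ are invariant under positive rescaling of $w$, so the uniqueness of $w$ modulo positive multiplication (from Theorem \ref{main_theorem}) forces uniqueness of $\rho$.

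For the continuous version, one direction uses Corollary \ref{choice_corol}: if $f$ is continuous and satisfies strict weighted averaging (in particular, weighted averaging), then the weak order has a single equivalence class and the weight $w$ in the representation is continuous, so $\rho$ is a continuous Luce rule rationalizing $f$. For the reverse direction, if $f$ is rationalizable by a continuous Luce rule with continuous weight $w$, then for any $A\in X^*\cup\{\emptyset\}$, any $x\in X\setminus A$ and any sequence $x_n\to x$,
\begin{equation*}
f(A\cup\{x_n\}) \;=\; \frac{\sum_{y\in A} w(y)\, y + w(x_n)\, x_n}{\sum_{y\in A} w(y) + w(x_n)} \;\longrightarrow\; f(A\cup\{x\})
\end{equation*}
by continuity of $w$ together with the continuity of the rational expression in its arguments, so $f$ is continuous. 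There is no real obstacle here; the only mildly delicate step is checking that continuity plus strict weighted averaging indeed falls under the hypotheses of Corollary \ref{choice_corol}, which it does because strict weighted averaging is stronger than weighted averaging and richness is automatic.
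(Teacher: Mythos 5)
Your proposal is correct and follows essentially the same route as the paper, which states this corollary as an immediate consequence of Theorem \ref{main_theorem} and Corollary \ref{choice_corol}; your write-up simply makes explicit the routine verifications (richness from $f(\{x\})=x$ and $X$ not lying on a line, the one-line converse computation, uniqueness via the uniqueness of $w$ up to positive scaling, and the two continuity directions). The only point to state more carefully is the continuity direction: Corollary \ref{choice_corol} assumes \emph{strong} richness rather than mere richness, but this too is automatic here, since under strict weighted averaging $f(\{x,y\})$ is a strict convex combination and hence differs from $f(x)$ and $f(y)$ whenever these differ, and every $x\in X$ lies in some non-collinear triple because $X$ is not a subset of a line.
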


In the Luce model, the decision-maker selects each element of a given menu with a strictly positive probability. However, this is not a plausible assumption in many situations. The decision-maker may always select a better choice between two alternatives. We model this behavior by a two-stage Luce model. \cite{generallucemodel} introduces the two-stage Luce model. In this model, there exists a ranking order and a weight function over elements. A decision-maker choosing from a menu only selects the highest-ordered elements from the menu. The probability of the selection of each highest-ordered element is related to the weight associated with the element. Formally:

\begin{definition}
A stochastic choice $\rho : X^*\to\Delta(X)$ is a \textbf{\emph{two-stage Luce rule}} if there are a  function $w:X\to R_{++}$ and a weak order $\succcurlyeq$ over elements of $X$, such that:

\begin{equation}
   \rho (x,A) =
    \begin{cases}
      \frac{w(x)}{\sum_{y\in M(A,\succcurlyeq)}w(y)} & \text{if } x \in M(A,\succcurlyeq),\\
      0 & \text{otherwise} .
    \end{cases}
\end{equation}

\end{definition}

Given  $A\in X^*$, the decision-maker only selects the elements in $M(A,\succcurlyeq)$, that are the highest-ordered elements of $A$. She chooses each element of $M(A,\succcurlyeq)$ with a probability associated with its weight.

By considering our Theorem \ref{general_thm}, any average choice function under the weighted averaging axiom is rationalizable by a two-stage Luce rule.

\begin{corollary}
A strongly rich average choice function is two-stage Luce rationalizable if and only if it satisfies the weighted averaging axiom. Moreover, the two-stage Luce rule that rationalizes the average choice function is unique.
\end{corollary}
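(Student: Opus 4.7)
The plan is to deduce this corollary directly from Theorem \ref{general_thm} (equivalently, from Corollary \ref{choice_corol}), exploiting the observation that for an average choice function one automatically has $f(\{x\})\in \text{Conv}(\{x\})=\{x\}$, so $f(\{x\})=x$ for every $x\in X$. This identification turns the representation in \eqref{main_equation2} into precisely the two-stage Luce formula.

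For the implication $(\Leftarrow)$, I would apply Theorem \ref{general_thm} to the strongly rich $f$ satisfying the weighted averaging axiom: it supplies a unique weak order $\succcurlyeq$ on $X$ and a weight function $w:X\to\mathbb{R}_{++}$ (unique up to positive scaling inside each equivalence class) such that
\[
f(A)=\sum_{x\in M(A,\succcurlyeq)}\frac{w(x)}{\sum_{y\in M(A,\succcurlyeq)}w(y)}\,x.
\]
Then I would define $\rho$ by the two-stage Luce formula with these $(w,\succcurlyeq)$ and observe that $\sum_{x\in A}\rho(x,A)x$ coincides with the above expression, so $\rho$ rationalizes $f$.

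For the converse $(\Rightarrow)$, starting from a two-stage Luce rationalization $(w,\succcurlyeq)$ of $f$, I would verify the weighted averaging axiom by a short case analysis on disjoint $A,B\in X^*$: depending on whether the maximal elements of $A\cup B$ come only from $A$, only from $B$, or from both, the set $M(A\cup B,\succcurlyeq)$ equals $M(A,\succcurlyeq)$, $M(B,\succcurlyeq)$, or the disjoint union $M(A,\succcurlyeq)\cup M(B,\succcurlyeq)$. In the first two cases $f(A\cup B)$ equals $f(A)$ or $f(B)$ (so \eqref{eqhejhdgeydg} holds with $\lambda\in\{0,1\}$); in the third case a direct computation gives
\[
f(A\cup B)=\lambda f(A)+(1-\lambda)f(B),\qquad \lambda=\frac{\sum_{y\in M(A,\succcurlyeq)}w(y)}{\sum_{y\in M(A,\succcurlyeq)}w(y)+\sum_{y\in M(B,\succcurlyeq)}w(y)}\in(0,1).
\]

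For uniqueness, I would invoke the uniqueness clause of Theorem \ref{general_thm}: $\succcurlyeq$ is uniquely determined and $w$ is unique up to positive scalar multiplication within each equivalence class of $\succcurlyeq$. Since in a two-stage Luce rule only the ratios $w(x)/\sum_{y\in M(A,\succcurlyeq)}w(y)$ with $x\in M(A,\succcurlyeq)$ appear, and these ratios are invariant under rescaling within the single equivalence class containing $M(A,\succcurlyeq)$, the rule $\rho$ itself is unambiguously determined. The only genuine content here is the reduction to Theorem \ref{general_thm}; the main subtlety to handle carefully is ensuring that $M(A\cup B,\succcurlyeq)$ decomposes as described in the case analysis, which is a straightforward consequence of $\succcurlyeq$ being a weak order.
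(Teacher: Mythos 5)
Your proposal is correct and follows essentially the same route as the paper, which states this corollary as an immediate consequence of Theorem \ref{general_thm} (via Corollary \ref{choice_corol}): the identification $f(\{x\})=x$ turns the representation \eqref{main_equation2} into the two-stage Luce formula, the converse is the routine verification that a two-stage Luce rule satisfies weighted averaging, and uniqueness of the rule follows from the uniqueness of $(\succcurlyeq,w)$ up to per-class rescaling, which leaves $\rho$ unchanged. Your explicit case analysis for $M(A\cup B,\succcurlyeq)$ just spells out what the paper leaves implicit.
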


\begin{remark}
Under the continuity condition, using  Theorem \ref{single_indiff}implies that both the two-stage Luce model and Luce model are equivalent. The next section discusses this observation.
\end{remark}


\subsection{Continuous Average Choice Functions}\label{Kalai_result}
In this section, we consider the class of continuous average choice functions satisfying the weighted averaging condition. First, we reinterpret our corollary \ref{choice_corol} as an impossibility result. This means that no continuous average choice function is rationalizable by a two-stage Luce model but not by a Luce model. Then, we show the connection with the impossibility result by \cite{kalai}, regarding the impossibility of a choice function satisfying both the path independence and continuity.

\cite{plott} extensively studies choice functions under the path independence axiom. Plott's notion of path independence requires a choice from the union of two disjoint menu
$A\cup B$, to be the choice between the choice from $A$ and the choice from $B$. Using this axiom, the choice from any menu can be recursively obtained by partitioning the elements of the menu into disjoint sub-menus. Then, the choice from the whole menu would be the choice from the choices of each sub-menu. In our setup, for an average choice function $f$, we have:
\begin{definition}$f$ satisfied the \textbf{\emph{(path independence)}} condition if
\[f(A\cup B)= f(\{f(A),f(B)\})\]
for all $A,B\in X^*$ such that $A\cap B\ = \ \emptyset$.
\end{definition}

The path independence condition is stronger than our weighted averaging condition. In other words, any average choice function under Plott's notion of path independence satisfies the weighted averaging condition. More precisely, given a choice function $f: X^* \to Conv(X)$ and two disjoint menus $A,B\in X^*$, under the path independence condition, $f(A\cup B)=f(\{f(A),f(B)\})$. By the definition of average choice functions, $f(\{f(A),f(B)\})\in Conv (f(A),f(B))$, which shows that the choice function $f$ satisfies the weighted averaging axiom.

Continuity is an appealing property of an average choice function. It specifies that by replacing an element of a menu with another element close to it, with respect to the norm of $X$, the average choice of the new menu is close to the average choice of the previous menu.
\cite{kalai} shows that there is no average choice function that satisfies both path independence axiom and continuity. Here, we reinterpret the result of corollary \ref{choice_corol} to show a more general result for average choice functions.

Corollary \ref{choice_corol} states that, for a strongly rich continuous average choice function $f: X^*\to Conv(X)$ satisfying the weighted averaging condition,  there exists a unique weight function $w: X\to \mathbb{R}_{++}$ such that for any $A\in X^*$:
\[f(A)=\sum\limits_{x\in A}(\frac{w(x)}{\sum\limits_{x \in  A} w(x)})x.
\]

There are two important observations regarding the representations above.

First, through discussions in Section \ref{average}, the representation shows that any continuous average choice function that is rationalizable by a two-stage Luce model is also rationalizable by a Luce model. Second, since the function $w$ is strictly positive, the average choice of any menu should be in the relative interior of the convex hull of members of the menu.

As a result, our impossibility result specifies that for an average choice function that satisfies the weighted averaging condition, it is impossible to satisfy the continuity condition and also to have a choice from a menu that is on the relative boundary of the elements of the menu. We summarize the observation in the following corollary\footnote{To see the connection between our corollary~\ref{cor:kalai} and the result in \cite{kalai}, it is enough to consider a menu with three non-collinear members. \cite[Thm.~1]{kalai} shows that the average choice of a path independent average choice function from any menu is the average choice of the average choice function from a sub-menu of two members of the menu. This shows that the average choice from a menu with three non-collinear members is on the line segment connecting two of the member of the menu. As a result, the choice should be on the relative boundary of the menu. That is why it cannot satisfy continuity.
}.

\begin{corollary}\label{cor:kalai}
If $X$ is a nonempty convex subset of a vector space that contains at least three non-collinear points, then an average choice function $f : X^*\to X$ that satisfies the weighted averaging condition cannot be both continuous and contains a menu $A\in X^*$, with $f(A)\in \partial^r(\text{Conv}(A))$.
\end{corollary}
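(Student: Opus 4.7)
The plan is to argue by contradiction, showing that the stated hypotheses force $f$ into the Luce-type representation \eqref{main_equation}, which can never place the output on the relative boundary of the convex hull of a menu.

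First I would verify that the assumptions of Theorem~\ref{single_indiff} are all met. Continuity and the weighted averaging axiom are part of the corollary's hypothesis, and $X$ is convex by assumption. The only additional requirement is richness of $f$. Because $f$ is an average choice function, $f(\{x\})\in\text{Conv}(\{x\})=\{x\}$ for every $x\in X$, so $f(\{x\})=x$. Hence the range of $f$ contains $X$, and since $X$ has at least three non-collinear points, the range of $f$ is not contained in any line; this is precisely richness in the sense of Definition~\ref{defjlkjhgewjhg}.

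Having verified these hypotheses, Theorem~\ref{single_indiff} supplies a continuous weight function $w:X\to\mathbb{R}_{++}$ with
$$
f(A)=\sum_{x\in A}\frac{w(x)}{\sum_{y\in A}w(y)}\,x \qquad \text{for every } A\in X^*.
$$
Since $w$ is strictly positive, every coefficient $w(x)/\sum_{y\in A}w(y)$ is strictly positive. By the standard characterization of the relative interior of the convex hull of a finite set as the set of its strictly positive convex combinations, $f(A)$ lies in the relative interior of $\text{Conv}(A)$ for every $A\in X^*$. In particular, $f(A)\notin\partial^r(\text{Conv}(A))$ for any menu $A$, contradicting the existence of the menu asserted in the statement.

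I do not anticipate any real obstacle: the difficult content is absorbed into Theorem~\ref{single_indiff}, and what remains is the elementary observation that strictly positive convex combinations lie in the relative interior. The only mildly subtle point is that richness is not assumed in the statement of the corollary and must be deduced; this follows immediately from the average-choice property ($f(\{x\})=x$) combined with the hypothesis that $X$ contains three non-collinear points, as indicated above.
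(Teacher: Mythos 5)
Your proposal is correct and follows essentially the same route as the paper: the paper also obtains the representation \eqref{main_equation} with a strictly positive continuous weight (via Corollary \ref{choice_corol}, which rests on Theorem \ref{single_indiff}) and then notes that strictly positive convex combinations lie in the relative interior of $\text{Conv}(A)$, contradicting $f(A)\in\partial^r(\text{Conv}(A))$. Your explicit verification of richness from $f(\{x\})=x$ and the three non-collinear points is a detail the paper leaves implicit, but it is the same argument.
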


\section{Extended Pareto Aggregation Rules}\label{EPR_section}

This section demonstrates an application of section \ref{dual_Thm} in the social choice problems. In this domain, each feature represents a preference ordering of individuals over a set of alternatives. Each preference ordering satisfies the axiom of \cite{vnm}. The role of an aggregation rule is to associate with each coalition of individuals another vN-M preference ordering over the set of alternatives. 

An appealing property of an aggregation rule, in this context,
is to satisfy the \textit{extended Pareto} axiom.  \cite{shapley2} introduced the extended Pareto. It specifies that, if two disjoint coalitions of individuals, each prefers an outcome over another outcome, then the union of the coalitions also should prefer the same outcome over the other one. Moreover, if one of them strictly prefer one outcome over the other one, then the union of the coalitions should also strictly prefer the same outcome over the other one.

First, we show that under a normalization of cardinal utilities of individuals and a minor richness condition, aggregation rules under the strict weighted averaging (weighted averaging) axiom are exactly aggregation rules under the \emph{extended Pareto} (\emph{extended weak Pareto}) axiom. 

Following the equivalence, we use our main representation result as a technical tool to pin down the representation of the extended Pareto aggregation rules. We show that the only possible extended Pareto aggregation is to have a positive weight over each individual in the society. Then, the aggregated preference ordering of a given group of individuals is the weighted sum of their preference ordering.

The representation can be considered as a multi-profile version of the theorem
by \cite{harsanyi} on Utilitarianism. Harsanyi considers a single profile of individuals and a variant of Pareto to get the Utilitarianism. However, in our approach, we partition a profile into smaller groups. Then, we aggregate the preference ordering of these smaller groups using the extended Pareto. Hence, we get the Utilitarianism through this consistent form of aggregation. As a result, in our representation, the weight associated with each individual appears in all sub-profiles that contain her.\footnote{
Similar to the discussion of \cite{weymark} regarding the debate of Sen-Harsanyi, our result is better to be interpreted as a representation rather than a justification of the utilitarianism.}

In Section \ref{relativeutilit}, we extend our result on extended Pareto aggregation rules to the class of \textit{generalized social welfare function}. Unlike our previous model, individuals may have different preference orderings. Therefore, the domain of the generalized social welfare function is a set of all different groups (with all possible sizes) of individuals, with each individual having all different possible preference orderings. Our definition of generalized social welfare function extends the standard definition used by \cite{arrow}, in which the domain is a set of fixed-length profiles of individuals.

For a technical reason, we restrict the set of vN-M preferences to those in which all of them strictly prefer one fixed lottery to another fixed one. We show that the only possible extended Pareto generalized social welfare functions are the ones that associate a positive number to each individual's preferences (unlike the previous section, in which each weight depends on both the individual and the whole profile), and it associates each coalition with the weighted sum of their cardinal utility using the weight associated to their preferences.

The important observation is that,  \emph{each positive weight in the representation is independent of the other individuals in any profiles}. The weight only depends on each individual and her own preference ordering.

Our representation above has a positive nature, compared to the claims by \cite{kalaischmeidler} and \cite{hylland} that the negative conclusion of Arrow's theorem holds even with vN-M preferences. Moreover, the representation provides an answer to the main concern of \cite{borger, borger2} regarding the correctness of the main theorem of \cite{dhillon}. 

\cite{dhillon} by considering a set of axioms, other than the ones by Arrow, provides one of the first axiomatizations of relative utilitarianism as a possibility result. However, \cite{borger} shows a counterexample to their representation.
Our representation fixes the error using our variant of the extended Pareto axiom and our restricted domain of the generalized social welfare function.

Finally, adding the anonymity and the weak IIA axiom of \cite{dhillon} gives us the relative utilitarianism as one possible choice of the weight function. However, the primary concern of our paper is to show that \emph{the weighted averaging of preferences is the only generalized social welfare function that respects extended Pareto}. The possible choices of weights are not our focus in this paper.

\subsection {Set up}\label{epr-def}

Let the set $M=\{0,1,\ldots,m\}$ and  $L=\{(p_1,\ldots,p_m)| \sum_{i=1}^mp_i\leq1, p_i\geq 0\}$. A lottery $p\in L$ associates the probability $p_i$ to the prospect $i\in M\setminus\{0\}$ and  $1-\sum_{i=1}^mp_i$ to the prospect $0$.

A \emph{vN-M preference}  over the set $L$ is a preference relation that satisfies the axioms of \cite{vnm} as defined below\footnote{If $R$ is a vN-M preference over the set $L$, then, by the vN-M theorem, there exists an affine representation of the preference $R$. For notational convenience, we normalize all affine representations to have the value $0$ over the prospect $0$.}.
\begin{definition}\label{defut}
We say that $R$ is a vN-M preference over the set $L$ if it is a weak order and if there exists a $u\in\mathbb{R}^m$, known as a utility, such that
for any $x,y \in L$, $\ x Ry$ if and only if $u\cdot x\geq u\cdot y$ where  ``$\cdot$'' represents the inner product in $\mathbb{R}^m$. Moreover,
the (unique) \emph{ray} $U=\{\alpha u|\ \alpha>0\}$ contains all normalized affine utilities that represent the vN-M preference $R$.
We write $\mathcal{R}$ for  the set of all vN-M preferences over  $L$ and  $\overline{R}$ for the strict part of the preference $R\in \mathcal{R}$.
\end{definition}

Let $X=\{1,\ldots,n\}$ represent the set of all agents and $X^*$ be the set of all finite subsets of $X$. Write  $\mathcal{R}^X$ for the X-Fold Cartesian product of $\mathcal{R}$. Every $R^X\in \mathcal{R}^X$ defines a \emph{preference profile} of the set of agents over the set of lotteries.

\begin{definition}\label{gar} A \textbf{\emph{group aggregation rule}} on X is a function $f:X^*\to \mathcal{R}$, that associates with every coalition of agents $A\in X^*$ a vN-M preference $f(A)\in \mathcal{R}$.
\end{definition}

An rational property of group aggregation rules is that whenever two disjoint coalitions, \textit{e.g.} $A,B\in X^*$, both prefer a lottery $x$ to another lottery $y$, then their union, $A\cup B$, also prefers the lottery $x$ to the lottery $y$.
\begin{definition} A group aggregation rule $f:X^*\to \mathcal{R}$ satisfies the \textbf{\emph{extended Pareto property}}  if for all disjoint coalitions of agents $A, B\in X^*$, and for all lotteries $x,y\in L$,
\begin{equation}\label{first_epr}
x\ f(A)\ y, \ x\ f(B) \ y \Rightarrow x\ f(A\cup B) \ y
\end{equation}
\begin{equation}\label{second_epr}
x\ \overline{f(A)}\ y, \ x\ f(B) \ y \Rightarrow x\ \overline{f(A\cup B)} \ y
\end{equation}
\end{definition}

Our last condition requires the existence of two lotteries in the set of lotteries, in which all agents strictly prefer one over the other.

\begin{definition} A group aggregation rule $f:X^*\to \mathcal{R}$ satisfies the \textbf{\emph{minimal agreement condition}} if there exist two lotteries $\overline{x},\underline{x}\in L$ such that for every agent $i\in X$, $\overline{x}\ \overline{f(i)} \underline{x}$.
\end{definition}

\begin{remark}\label{allcoalitinmac}
Let a group aggregation rule $f:X^*\to \mathcal{R}$ satisfy both the minimal agreement and extended Pareto axiom. Given two agents $i,j\in X$, by applying the strict part of the definition of the extended Pareto axiom, we have $\overline{x}\  \overline{f(\{i,j\})}\  \underline{x}$. Similarly, for every coalition of agents $A\in X^*$, recursively using the strict part of the extended Pareto axiom, we deduce $\overline{x}\  \overline{f(A)} \ \underline{x}$.
\end{remark}

\begin{remark}\label{direction}
 Let the vector $v\in \mathbb{R}^m$ be $\overline{x}-\underline{x}$, where $\overline{x},\underline{x}$ are the two lotteries in the definition of the minimal agreement condition. Let  $u_i\in \mathbb{R}^m$ represent the vN-M preference $f(i)$. Hence, $\overline{x}\ \overline{f(i)} \ \underline{x}$ if and only if $u_i\cdot v>0$. Therefore, the definition of the minimal agreement condition is equivalent to the existence of a direction $v\in \mathbb{R}^m$ such that for all $i\in X$, $u_i\cdot v>0$. this interpretation of $v$ is exactly the role of $\nu$ in section \ref{dual_Thm}.
\end{remark}


\subsection {The Representation of Extended Pareto Group Aggregation Rules}\label{epr-group-rep}

In this section, we assume that the group aggregation rule $f:X^*\to \mathcal{R}$ satisfies the minimal agreement condition. In particular, we assume that all agents strictly prefer the lottery $\overline{x} \in L$ over the lottery $\underline{x}\in L$. Considering remark \ref{direction}, we define $v=\overline{x}-\underline{x}$ as the direction that every agent agrees on. For a coalition of agents $A\in X^*$, let the ray $U_A$ represents the vN-M preference $f(A)$.
Let $H:=\{u\in\mathbb{R}^m|\ u\cdot v=1\}$ represent the normalization of utilities in which the difference of the value of utility of the lottery $\overline{x}$ and the lottery $\underline{x}$ is exactly 1. For every coalition of agents $A\in X^*$, there is a unique cardinal utility  $\hat{u}_A\in U_A$, such that $\hat{u}_A$ is in $H$.
For the rest of the section, for every coalition $A\in X^*$, we consider the unique cardinal utility $\hat{u}_A\in H$ to represent the vN-M preference $f(A)$. Using this representation, we can represent the group aggregation rule $f:X^*\to \mathcal{R}$, by a \textbf{\emph{normalized group aggregation rule}} $f_H:X^*\to \mathbb{R}^m$, where $f_H(A)=\hat{u}_A$.

\begin{remark}
Without loss of generality, we can assume that the lottery $\underline{x}$ in the definition of the minimal agreement condition is just the lottery $0$. In that case, the space $H$ is vN-M preferences with the value $0$ for the lottery $0$ and the value $1$ for the lottery $\overline{x}$.
\end{remark}

The next proposition, which is the same as Theorem \ref{thm_consistencyeqwa}, shows that under the representation of the vN-M preference $f(A)$ by the $\hat{u}_A$, the extended Pareto property is equivalent to the strict weighted averaging property. Formally, we have:
\begin{corollary}\label{EPR=SCA}
Let a group aggregation rule $f:X^*\to \mathcal{R}$ satisfy the minimal agreement condition with $v\in \mathbb{R}^m$ as the direction on which all agents agree. Then, the following are equivalent:
\begin{enumerate}
\item $f$ satisfies the extended Pareto property.
\item $f_H$ satisfies the strict weighted averaging property.
\end{enumerate}
\end{corollary}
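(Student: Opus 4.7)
The strategy is to recognize this corollary as a direct application of Theorem~\ref{thm_consistencyeqwa} in the specific setting of vN-M preferences. I would take $H = \mathbb{R}^m$ with the standard inner product, $S = L$ (the simplex, which is convex), and $\nu = v$ (the agreement direction from minimal agreement). The normalized aggregation rule $f_H: X^* \to N_v$ with $N_v = \{u \in \mathbb{R}^m : u \cdot v = 1\}$ is then exactly the object to which Theorem~\ref{thm_consistencyeqwa} applies.

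The first step is to verify that $f_H$ is well-defined, i.e.\ that $u_A \cdot v > 0$ for every $A \in X^*$, so that $\hat{u}_A := u_A/(u_A \cdot v)$ represents $f(A)$ and lies in $N_v$. For the direction (1)$\Rightarrow$(2), this is exactly the content of Remark~\ref{allcoalitinmac}: iterating the strict part of extended Pareto starting from $\overline{x}\,\overline{f(i)}\,\underline{x}$ yields $\overline{x}\,\overline{f(A)}\,\underline{x}$, which is precisely $u_A \cdot v > 0$. For the direction (2)$\Rightarrow$(1), the well-definedness follows by induction on $|A|$: the base case $|A|=1$ is the minimal agreement condition, and if $A = A_1 \sqcup A_2$ with $\hat{u}_{A_i} \in N_v$, strict weighted averaging gives $\hat{u}_A = \lambda \hat{u}_{A_1} + (1-\lambda)\hat{u}_{A_2}$ for some $\lambda \in (0,1)$, which again lies in the affine hyperplane $N_v$ (and in particular has strictly positive inner product with $v$).

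Next, I would record the direct translation: for $x,y \in L$ and $A \in X^*$, the relation $x\,f(A)\,y$ is by construction equivalent to $\langle \hat{u}_A, x \rangle \geq \langle \hat{u}_A, y \rangle$, i.e.\ to $x \succsim_{f_H(A)} y$, and analogously for the strict parts. Under this identification, the two implications \eqref{first_epr} and \eqref{second_epr} of extended Pareto become exactly conditions \eqref{def_first_expr_const} and \eqref{def_second_expr_const} of Definition~\ref{def_Consistent_rule}. Hence $f$ satisfies extended Pareto if and only if $f_H$ is consistent, and the corollary follows at once by invoking the first half of Theorem~\ref{thm_consistencyeqwa}.

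The main obstacle is purely bookkeeping: making sure that the quantification ``for all $x,y \in L$'' in extended Pareto is matched to the quantification ``for all $s_1,s_2 \in S$'' in the consistency definition once we set $S=L$, and that the strict/weak parts are paired correctly so that the strict part \eqref{second_epr} lines up with \eqref{def_second_expr_const} (and not with some weaker variant). No new analytic content is required beyond Theorem~\ref{thm_consistencyeqwa} itself, since the Farkas-lemma duality has already done the heavy lifting there.
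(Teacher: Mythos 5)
Your proposal is correct and matches the paper's approach: the paper itself treats Corollary~\ref{EPR=SCA} as a restatement of Theorem~\ref{thm_consistencyeqwa}, and its appendix proof carries out exactly the identification you describe (normalizing utilities onto $H=\{u:u\cdot v=1\}$, matching extended Pareto with consistency, and letting the Farkas-lemma duality do the work). Your additional bookkeeping on the well-definedness of $f_H$ (via Remark~\ref{allcoalitinmac} in one direction and the induction in the other) only makes explicit what the paper assumes in its setup, so there is no gap.
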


Using the result of Theorem \ref{main_theorem}, we deduce the representation of the extended Pareto group aggregation rules.
\begin{corollary}\label{EPR_rep}
Let a rich group aggregation rule $f:X^*\to \mathcal{R}$ satisfy both the extended Pareto property and minimal agreement condition. Then, there exists a weight function $w: X\to \mathbb{R}_{++}$ such that for every coalition of agents $A\in X^*$,
\begin{equation}
f_H(A)=\sum\limits_{i\in A}\left(\frac{w(i)}{\sum\limits_{j\in A} w(j)}\right)f_H(i).
\end{equation}
Moreover, the weight function is unique up to multiplication by a positive number.
\end{corollary}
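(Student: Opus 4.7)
The plan is to deduce this corollary as a direct composition of the preceding two results: Corollary \ref{EPR=SCA} (which translates extended Pareto into strict weighted averaging of the normalized rule $f_H$) and Theorem \ref{main_theorem} (which gives the weighted-average representation for any rich aggregation rule into a Hilbert space satisfying strict weighted averaging). There is essentially no new content to prove; the work is in assembling the ingredients correctly.

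First, using the minimal agreement condition, I would fix the direction $v=\overline{x}-\underline{x}\in\mathbb{R}^m$ and the affine hyperplane $H=\{u\in\mathbb{R}^m\mid u\cdot v=1\}$. By Remark \ref{allcoalitinmac}, every coalition $A\in X^*$ satisfies $\overline{x}\,\overline{f(A)}\,\underline{x}$, so the ray $U_A$ representing $f(A)$ meets $H$ in exactly one point, which is the well-defined normalized representative $\hat u_A=f_H(A)$. This is how $f_H:X^*\to H\subset \mathbb{R}^m$ is identified, and since $H$ is an affine subspace of a finite-dimensional Hilbert space, $f_H$ fits the framework of Section \ref{secmodag}.

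Second, since $f$ satisfies the extended Pareto property and the minimal agreement condition, Corollary \ref{EPR=SCA} yields that $f_H$ satisfies the strict weighted averaging property. I would then check that the richness hypothesis on $f$ (interpreted in this setting as the range of $f_H$ not being contained in a line of $H$) transfers to $f_H$; this is immediate since $f_H$ is precisely the object whose range is being evaluated. With both hypotheses of Theorem \ref{main_theorem} verified for $f_H$, I invoke that theorem to obtain a weight function $w:X\to\mathbb{R}_{++}$, unique up to multiplication by a positive scalar, such that
\begin{equation*}
f_H(A)=\sum_{i\in A}\left(\frac{w(i)}{\sum_{j\in A}w(j)}\right)f_H(i)\qquad\text{for every }A\in X^*,
\end{equation*}
which is exactly the claimed representation.

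The only mild subtlety I anticipate, rather than a genuine obstacle, is the interpretation of ``rich'' in this corollary: the original definition applies to maps into a Hilbert space, while $f$ maps into $\mathcal{R}$. I would address this by explicitly noting that richness of $f$ here means richness of the associated $f_H$, so the hypothesis is exactly what is needed to apply Theorem \ref{main_theorem}. Once that clarification is made, the proof reduces to citing Corollary \ref{EPR=SCA} and Theorem \ref{main_theorem} in sequence, and the uniqueness-up-to-positive-scalar statement for $w$ is inherited directly from the uniqueness clause of Theorem \ref{main_theorem}.
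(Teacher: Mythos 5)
Your proposal is correct and follows exactly the paper's route: the paper obtains Corollary \ref{EPR_rep} by combining Corollary \ref{EPR=SCA} (extended Pareto of $f$ $\Leftrightarrow$ strict weighted averaging of the normalized rule $f_H$) with Theorem \ref{main_theorem}, and it likewise interprets richness as the range of $f_H$ containing three non-collinear normalized utilities in $H$. Your handling of the normalization via the minimal agreement direction $v$ and Remark \ref{allcoalitinmac}, and the inheritance of the uniqueness clause, match the paper's argument.
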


As shown in Example \ref{examplerichness}, the richness condition is crucial. The richness here is equivalent to the existence of three non-collinear ``normalized'' cardinal utilities in the space $H$ (the range of the aggregation rule).
We can interpret the theorem as a generalization of the main theorem of \cite{harsanyi} on Utilitarianism. However, our result shows the connection between weights of individuals in different sub-coalitions of the main profile.

To see the connection with Harsanyi's result, we rewrite the theorem in an additive form: let the group aggregation rule $f:X^*\to \mathcal{R}$ satisfy both the extended Pareto property and minimal agreement condition. Then, there exists a weight function $w: X\to \mathbb{R}_{++}$ such that for every coalition of agents $A\in X^*$, $f(A)$ has the following representation:
\begin{equation}\label{epr-eq-sum}
\sum\limits_{i\in A}w(i)f_H(i).
\end{equation}
Defining $u(i):=w(i)f_H(i)$ for $i\in X$, we can rewrite equation \ref{epr-eq-sum} in the  additive form $ \sum\limits_{i\in A}u(i)$.
Moreover, if we consider only the representations with the value $0$ for the lottery $0$, this representation is unique up to multiplication by a positive number.





\subsection {The Representation of Extended Pareto Generalized Social Welfare Functions }\label{relativeutilit}

The setup of this subsection is the same as the one in the previous section. Without loss of generality, we assume that the lottery $\underline{x}\in L$, in the definition of the minimal agreement, is the vector $0$. Let $\overline{x}\in L$ be any lottery other than $0$. Define $\mathcal{R}_{\overline{x}}\subset \mathcal{R}$ as the set of all vN-M preferences that strictly prefer $\overline{x}$ to $0$. Let $\mathcal{R}_{\overline{x}}^X$ be the X-fold Cartesian product of $\mathcal{R}_{\overline{x}}$. Every $R\in \mathcal{R}_{\overline{x}}^X$ defines a preference profile of the set of individuals. For any coalition $A\in X^*$ and for any preference profile $R\in \mathcal{R}_{\overline{x}}^X$, let $R_A\in \mathcal{R}_{\overline{x}}^A$ denote the restriction of the profile $R$ to the coalition $A$.

As in Definition \ref{defut}, we can represent each preference $R\in \mathcal{R}$ by a unique ray $U_R=\{\alpha u|\ \alpha>0\}$, where $u\in \mathbb{R}^m$ is a cardinal utility representing $R$. Moreover, for any preference $R\in \mathcal{R}_{\overline{x}}$, there should be a unique cardinal utility $u_R\in U_R$ with $u_R\cdot \overline{x}=1$. Write $H=\{u\in \mathbb{R}^m|\ u.\overline{x}=1\}$ for the space of all cardinal utilities attaining value $0$ at the lottery $0$ and the value $1$ at the lottery $\overline{x}$. Let the function $u_H:\mathcal{R}_{\overline{x}}\to H$ associate each preference $R\in \mathcal{R}_{\overline{x}}$ with the unique cardinal utility $u_H(R)\in H$ that represents it. This function is a bijection associating each preference to the unique cardinal utility attaining value $0$ at the lottery $0$ and value $1$ at the lottery $\overline{x}$.

Write $\mathcal{R}_X\subset \mathcal{R}_{\overline{x}}^X  $ for the set of all profiles where the representation of individuals' cardinal utilities in the space $H$ is not a subset of a single line. Formally, we define $\mathcal{R}_X=\{R\in \mathcal{R}_{\overline{x}}^X|\ d(\{u_H(R_i)|\ i\in X\})>1\}$, where $d(\{u_H(R_i)|\ i\in X\})$ is the dimension of the smallest linear variety containing all $u_H(R_i),\  i\in X$.\footnote{There should be at least four alternatives; otherwise, $\mathcal{R}_X$ is the empty set.}

Finally, write $\mathcal{R}_X^*=\{R\in \mathcal{R}_{\overline{x}}^A|\  A\subseteq X, R\in  \mathcal{R}_X  \}$ for all the profiles in $\mathcal{R}_X$ and all sub-coalitions of those profiles. $\mathcal{R}_X^*$ is the domain of our generalized social welfare functions. Formally, we have:

\begin{definition}
A \textbf{\emph{generalized social welfare function}} on $\mathcal{R}_X$ is a function $f:\mathcal{R}_X^*\to \mathcal{R}$, that associates with any coalition $A\in X^*$ and any profile $R\in \mathcal{R}_{\overline{x}}^X$ a preference $f(R_A)\in \mathcal{R}$. Moreover, we assume that for any individual $i\in X$, and any profile $R\in \mathcal{R}_{\overline{x}}^X$ , $f(R_i)=R_i$.
\end{definition}

In our setup, the domain of generalized social welfare functions is a rich set of all sizes of profiles. Moreover, it satisfies the \textbf{\emph{Individualism}} axiom, which means that it associates any individual preference to the same preference.

The connection between profiles of different sizes is the \textbf{\emph{extended Pareto property}}. The extended Pareto property states that if the associated preference ordering of two disjoint coalitions of individuals, $A$ and $B$, each prefer a lottery $x$ to $y$, then the associated preference ordering of the union of the coalition with the same preference as before should also prefer $x$ to $y$.

\begin{definition} A generalized social welfare function $f:\mathcal{R}_X^*\to \mathcal{R}$ satisfies the \textbf{\emph{extended Pareto property}} if for every preference profile $R\in \mathcal{R}_X$ and for any two disjoint coalitions $A,B\in X^* $, and for all lotteries $x,y\in L$,

\begin{equation}
x\ f(R_A)\ y, \ x\ f(R_B) \ y \Rightarrow x\ f(R_{A\cup B}) \ y
\end{equation}
\begin{equation}
x\ \overline{f(R_A)}\ y, \ x\ f(R_B) \ y \Rightarrow x\ \overline{f(R_{A\cup B})} \ y
\end{equation}

\end{definition}

Our main result of this section characterizes the class of extended Pareto generalized social welfare functions.

\begin{theorem}\label{EPR_rep_main}
Let $X$ be a set of individuals with $|X|\geq 4$. The generalized social welfare function $f:\mathcal{R}_X^*\to \mathcal{R}$ satisfies the extended Pareto property if and only if there exists a weight function $w:X\times\mathcal{R}_{\overline{x}} \to \mathbb{R}_{++}$, such that for any coalition $A\in X^*$ and any preference profile $R\in \mathcal{R}_X$, $f(R_A)$ has the following representation:

\begin{equation}\label{EPR_rep_main_eq}
u_H(f(R_A))=\sum\limits_{i\in A}\left(\frac{w(i,R_i)}{\sum\limits_{j\in A} w(j,R_j)}\right)u_H(R_i).
\end{equation}
Moreover, the weight function is unique up to multiplication by a positive number.
\end{theorem}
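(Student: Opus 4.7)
The plan is to bootstrap from the single-profile characterization given by Corollary \ref{EPR_rep} and show that the per-profile weights it delivers actually depend only on each individual's own preference and not on the rest of the profile.

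\textbf{Step 1 (reduction to the corollary).} Fix any profile $R\in\mathcal{R}_X$ and define $f_R:X^*\to\mathcal{R}$ by $f_R(A):=f(R_A)$. This is a group aggregation rule in the sense of Definition \ref{gar}. Since every $R_i\in\mathcal{R}_{\overline{x}}$, the minimal agreement condition holds with the common lottery $\overline{x}$ and $\underline{x}=0$; since $R\in\mathcal{R}_X$, the utilities $\{u_H(R_i):i\in X\}$ are not contained in a line, so $f_R$ is rich. The extended Pareto property of $f$ restricted to sub-profiles of $R$ transfers directly to $f_R$, so Corollary \ref{EPR_rep} yields a function $w_R:X\to\mathbb{R}_{++}$, unique up to a positive scalar, such that
\[
u_H(f(R_A))=\frac{\sum_{i\in A}w_R(i)u_H(R_i)}{\sum_{j\in A}w_R(j)}\qquad \text{for every } A\in X^*.
\]

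\textbf{Step 2 (the weight ratio depends only on individual preferences).} The core step is to prove that if $R,R'\in\mathcal{R}_X$ satisfy $R_i=R'_i$ and $R_j=R'_j$, then $w_R(i)/w_R(j)=w_{R'}(i)/w_{R'}(j)$. It suffices to treat the case where $R$ and $R'$ differ in exactly one coordinate $l\notin\{i,j\}$; the general case then follows by chaining single-coordinate moves through intermediate profiles in $\mathcal{R}_X$, which $|X|\geq 4$ permits. Because $f(R_A)$ depends only on the restriction $R_A$, we have $f(R_A)=f(R'_A)$ for every $A\subseteq X\setminus\{l\}$. If $u_H(R_i)\ne u_H(R_j)$, applying Step 1 to the pair $A=\{i,j\}$ under both profiles and using affine independence of the two distinct points $u_H(R_i),u_H(R_j)$ forces the convex-combination coefficients to agree, yielding $w_R(i)/w_R(j)=w_{R'}(i)/w_{R'}(j)$ at once. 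If instead $u_H(R_i)=u_H(R_j)$, then the condition $R\in\mathcal{R}_X$ requires at least three distinct values in $\{u_H(R_n):n\in X\}$, so at least two indices $m$ satisfy $u_H(R_m)\ne u_H(R_i)$; at least one such $m$ lies in $X\setminus\{l,i,j\}$. Applying the previous case to the pairs $\{i,m\}$ and $\{j,m\}$ gives $w_R(i)/w_R(m)=w_{R'}(i)/w_{R'}(m)$ and $w_R(j)/w_R(m)=w_{R'}(j)/w_{R'}(m)$, and dividing yields the desired equality.

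\textbf{Step 3 (assembling the global weight and the converse).} Fix a reference pair $(i^*,P^*)\in X\times\mathcal{R}_{\overline{x}}$. For every $(i,P)\in X\times\mathcal{R}_{\overline{x}}$, choose any $R\in\mathcal{R}_X$ with $R_i=P$ and $R_{i^*}=P^*$ (such $R$ exists because $|X|-2\geq 2$ free coordinates can be set to produce non-collinear utilities) and define $w(i,P):=w_R(i)/w_R(i^*)$. Step 2 shows this is independent of the choice of $R$, so $w$ is well defined. Substituting $w_R(i)=c_R\,w(i,R_i)$, with $c_R:=w_R(i^*)$ independent of $i$, into the formula of Step 1 cancels $c_R$ from numerator and denominator and yields \eqref{EPR_rep_main_eq}. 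The converse implication is immediate: for each $R\in\mathcal{R}_X$, the right-hand side of \eqref{EPR_rep_main_eq} is a convex combination with strictly positive coefficients, so $f_R$ satisfies the strict weighted averaging axiom and Corollary \ref{EPR=SCA} gives extended Pareto. Uniqueness of $w$ up to a positive multiplicative constant is inherited from the uniqueness clause of Corollary \ref{EPR_rep}.

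The main obstacle lies in Step 2, specifically in the sub-case $u_H(R_i)=u_H(R_j)$: the direct pair argument collapses because the weighted average of two identical utilities is that utility itself and carries no information about the weights, so one must pivot through a third coordinate, which is precisely where both the richness of $\mathcal{R}_X$ and the hypothesis $|X|\geq 4$ are used.
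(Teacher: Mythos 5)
Your overall architecture is essentially the paper's (apply Corollary \ref{EPR_rep} profile by profile, then show the induced weights depend only on $(i,R_i)$ by comparing profiles that agree on a pair of coordinates, pivoting through a third individual when the two utilities coincide), but two steps are genuinely incomplete as written. First, the chaining claim in Step 2 is not innocuous. Note that when $u_H(R_i)\neq u_H(R_j)$ no chaining is needed at all: any two profiles agreeing at $\{i,j\}$ give the same $f(R_{\{i,j\}})$, and the two distinct utility points pin down the ratio directly. The problem is concentrated in the equal-utility case, and there the naive chain (replacing one coordinate at a time by its target value) can leave $\mathcal{R}_X$. With $|X|=4$, $X=\{i,j,k,l\}$, take, in affine coordinates on $H$, $u_H(R_i)=u_H(R_j)=(0,0)$, $u_H(R_k)=(1,0)$, $u_H(R_l)=(0,1)$, and $u_H(R'_k)=(0,2)$, $u_H(R'_l)=(2,0)$: both $R$ and $R'$ lie in $\mathcal{R}_X$, yet both one-step intermediates (changing only $k$, or only $l$) are collinear and hence outside $\mathcal{R}_X$, so your single-move lemma cannot be applied along that chain. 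A valid chain does exist if a coordinate is allowed to pass through an auxiliary value off the two relevant lines (possible because $H$ has affine dimension at least two when there are at least four alternatives), but this must be argued; it is exactly the point at which the paper instead builds auxiliary profiles with two extra pivot individuals.

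Second, Step 3 does not define $w(i^*,P)$ for $P\neq P^*$: no profile satisfies $R_{i^*}=P$ and $R_{i^*}=P^*$ simultaneously, so your recipe is vacuous for the reference individual, and consequently the substitution $w_R(i)=c_R\,w(i,R_i)$ is justified only for profiles anchored at $R_{i^*}=P^*$. To obtain \eqref{EPR_rep_main_eq} for an arbitrary profile you still need to (a) define $w(i^*,\cdot)$ through a second anchor individual and check consistency, and (b) show for every $R$ and every pair $i,j$ that $w_R(i)/w_R(j)=w(i,R_i)/w(j,R_j)$, e.g.\ by comparing $R$ with an auxiliary profile that agrees with $R$ at $\{i,j\}$ and carries $P^*$ at $i^*$, again via your Step-2 lemma. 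This is precisely the content of the paper's Steps 2 and 3 (its Cases 1--4), so the omission is not cosmetic: as written, the representation is established only on the subdomain of profiles with $R_{i^*}=P^*$. Both gaps are repairable with the tools you already have, but the proof is incomplete without those arguments.
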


\begin{remark}
 We can rewrite the theorem to specify that the generalized social welfare function $f:\mathcal{R}_X^*\to \mathcal{R}$ satisfies the extended Pareto axiom if and only if there exists a weight function $w:X\times\mathcal{R}_{\overline{x}} \to \mathbb{R}_{++}$, such that for any coalition $A\subseteq X$ and any preference profile $R\in \mathcal{R}_X$, $f(R_A)$ has the following representation:
\begin{equation}\label{EPR_rep_main_eq_1}
\sum\limits_{i\in A}w(i,R_i)u_H(R_i).
\end{equation}
Note that each weight depends only on the associated individual's preferences and not on the other individuals.
\end{remark}

The weight function in the representation depends on each individual's index. However, adding the classical \textbf{\emph{Anonymity}} condition makes the weight function independent of individual's indexes.
\begin{definition}
An extended Pareto aggregation rule $f:\mathcal{R}^*_X\to \mathcal{R}$ satisfies the \textbf{\emph{Anonymity}} condition, if  any permutation of the indexes of individuals does not change the generalized social welfare function.
\end{definition}

The anonymity condition makes any extended Pareto generalized social welfare functions independent of the individual's indexes. Hence, the uniqueness of the weight function in Theorem \ref{EPR_rep_main} makes the weight function, associated with an anonymous extended Pareto aggregation rule, independent of the indexes. Therefore, we have:
\begin{corollary}
\label{EPRnutral_rep_main}
Let $X$ be a set of individuals with $|X|\geq 5$. The extended Pareto generalized social welfare function $f:\mathcal{R}_X^*\to \mathcal{R}$ satisfies the Anonymity condition if and only if there exists a weight function $w:\mathcal{R}_{\overline{x}} \to \mathbb{R}_{++}$, such that for any coalition $A\in X^*$ and any preference profile $R\in \mathcal{R}_X$, $f(R_A)$ has the  representation:
\begin{equation}\label{EPR_rep_main_eq-2}
u_H(f(R_A))=\sum\limits_{i\in A}\left(\frac{w(R_i)}{\sum\limits_{j\in A} w(R_j)}\right)u_H(R_i).
\end{equation}
Or, equivalently, if and only if $f(R_A)$ has the representation:
\begin{equation}\label{nut-epr}
\sum\limits_{i\in A}w(R_i)u_H(R_i).
\end{equation}
Moreover, the weight function is unique up to multiplication by a positive number.
\end{corollary}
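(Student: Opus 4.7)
The plan is to apply Theorem~\ref{EPR_rep_main} to get a preliminary weight function $w: X \times \mathcal{R}_{\overline{x}} \to \mathbb{R}_{++}$ with the asserted representation (unique up to a positive multiplicative constant), and then use Anonymity to show that $w(i, R)$ is independent of the individual label $i$. Once this is proved, setting $\tilde{w}(R) := w(i, R)$ for any $i$ yields \eqref{EPR_rep_main_eq-2}; the additive form \eqref{nut-epr} is an equivalent affine rewriting; uniqueness up to a positive constant is inherited from Theorem~\ref{EPR_rep_main}; and the converse direction (that such a weighted representation is always an Anonymous Extended Pareto rule) is routine.

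To prove index-independence, fix distinct $i, j \in X$ and an arbitrary $R^* \in \mathcal{R}_{\overline{x}}$. I will construct a profile $R \in \mathcal{R}_X$ with $R_i = R^*$, and then invoke Anonymity via the transposition $\sigma = (i\ j)$ applied to the coalition $A = X$. With $R^\sigma$ the swapped profile, Anonymity gives $f(R_X) = f(R^\sigma_X)$. Writing $a = w(i, R_i)$, $b = w(j, R_j)$, $a' = w(j, R_i)$, $b' = w(i, R_j)$, $s = \sum_{k \neq i, j} w(k, R_k)$, and $S_0 = \sum_{k \neq i, j} w(k, R_k) u_H(R_k)$, equality of $u_H$-representatives becomes
\[\frac{a\, u_H(R_i) + b\, u_H(R_j) + S_0}{a+b+s} = \frac{a'\, u_H(R_i) + b'\, u_H(R_j) + S_0}{a'+b'+s}.\]
Clearing denominators and collecting terms produces a single linear relation in $u_H(R_i), u_H(R_j), S_0$ with coefficient of $S_0$ equal to $a'+b'-a-b$. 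If these three vectors are linearly independent in $\mathbb{R}^m$, each coefficient vanishes; the $S_0$ coefficient gives $a+b = a'+b'$, which then forces $a = a'$ and $b = b'$. In particular $w(i, R^*) = w(j, R^*)$, and since $i, j, R^*$ were arbitrary, $w$ does not depend on its first argument.

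The remaining task is to ensure that $R$ can be picked so that $u_H(R_i), u_H(R_j), S_0$ are linearly independent in $\mathbb{R}^m$ (equivalently, so that $S_0/s \in H$ does not lie on the affine line through $u_H(R_i)$ and $u_H(R_j)$). The hypothesis $|X| \geq 5$ supplies at least three spare indices $k \in X \setminus \{i, j\}$: I pick $R_j$ with $u_H(R_j) \neq u_H(R_i)$, and then assign all of the spare individuals the same preference whose image under $u_H$ lies off the line through $u_H(R_i)$ and $u_H(R_j)$. Then $S_0/s$ coincides with this off-line point, so $S_0 \notin \operatorname{span}\{u_H(R_i), u_H(R_j)\}$, and the constructed profile clearly lies in $\mathcal{R}_X$.

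The main obstacle is precisely this genericity step: the weights $w(k, R_k)$ are dictated by $w$ and not under my control, so I cannot freely choose the coefficients in the combination defining $S_0$. The three extra individuals provided by $|X| \geq 5$ are what let me fully prescribe the direction of $S_0$ by giving them a common preference. Once linear independence is in hand, the rest reduces to the bookkeeping described in the first paragraph.
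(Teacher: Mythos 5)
Your proof is correct, and it takes a more explicit route than the paper's. The paper settles this corollary essentially by appeal to the uniqueness clause of Theorem \ref{EPR_rep_main}: Anonymity means a relabelling of individuals leaves $f$ unchanged, so the permuted weight $(i,r)\mapsto w(\sigma(i),r)$ also represents $f$, and uniqueness up to a positive scalar forces it to coincide with $w$ (the scalar being $1$, e.g.\ since a transposition applied twice is the identity), whence $w$ cannot depend on the index. You bypass that clause and prove index-independence by hand: fixing $i,j$ and $R^*$, you build a profile in which $i,j$ hold distinct preferences and every spare individual holds one common preference whose $u_H$-image is off the line through $u_H(R_i)$ and $u_H(R_j)$, apply Anonymity to the transposition $(i\,j)$ on the grand coalition only, and equate coefficients. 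Your bookkeeping is right ($a+b=a'+b'$ from the $S_0$ coefficient, then $a=a'$, $b=b'$ since $a+b+s>0$), and the genericity step is sound: distinct points of $H$ are automatically linearly independent because they are normalized by $u\cdot\overline{x}=1$, a third point of $H$ off their affine line is affinely, hence linearly, independent of them, such a point exists exactly when $\mathcal{R}_X\neq\emptyset$ (the case of too few alternatives being vacuous, as the paper's footnote notes), and your constructed profile and its swap both lie in $\mathcal{R}_X$, so Theorem \ref{EPR_rep_main} applies to both. The trade-off: the paper's argument is shorter but leaves the normalization of the multiplicative constant implicit, while yours uses only a weak form of Anonymity (grand coalition, transpositions) at the cost of a swap-and-compare computation that in effect re-proves a special instance of the uniqueness statement; both treatments then inherit the representation, the affine rewriting, and the final uniqueness claim directly from Theorem \ref{EPR_rep_main}, and the converse direction is indeed routine.
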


The positive nature of our result appears to contradict the conjectures by \cite{kalaischmeidler} and \cite{hylland} that the negative conclusion of the impossibility theorem by \cite{arrow} holds even with vN-M preferences. However, other than the differences between our model and theirs, we only consider the restricted domain where all preferences prefer the lottery $\overline{x}$ over the lottery $\underline{x}$. As discussed before, the definition of our restricted domain is crucial in corollary \ref{EPRnutral_rep_main}.

\begin{remark}
Relative utilitarianism can be obtained by adding the weak IIA axiom of  \cite{dhillon}: the weight function normalizes each preference such that the difference between the cardinal utility of the best alternative and the worst alternative becomes $1$. In other words, for any preference $R\in \mathcal{R}_{\overline{x}}$, $w(R)=\frac{1}{\underset{j}{max} \ (u_H(R))_j- \underset{j}{min}\ (u_H(R))_j }$.
\end{remark}



\section {A Conditional Subjective Expected Utility Theory of State-Dependant Preferences}\label{statedep}
The choice-theoretic foundation of subjective expected utility was developed by the seminal works of \cite{ramsey}, \cite{savage}, and \cite{anscombe}. In the standard model, the decision-maker has a ranking over acts (state-contingent outcomes). The representation of this ranking consists of a subjective probability over the set of states, capturing the decision maker's beliefs, and a cardinal utility representing the decision maker's tastes over the set of outcomes, independent of the realization of the true state. However, in many applications, such as models for buying health insurance, the independence of the utility and the set of states is not a plausible assumption\footnote{Check \cite{arrow2}, \cite{cookgraham}, and \cite{karniinsurance} 
 for more discussions.}.  
 
In this section, we provide a simple theory of subjective expected utility of state-dependent utility by reinterpreting our representation of extended Pareto aggregation rules. We build our model using the framework of \cite{anscombe}. In our model, the decision-maker has a preference ordering over the set of conditional constant acts. This means that given any fixed event, the decision-maker has a hypothetical conditional preference ordering over the set of lotteries, representing her conditional preference condition on learning that only that event is happening\footnote{In Section \ref{statedep}, we illustrate another interpretation of hypothetical conditional preferences by providing a preference ordering over the set of conditional constant acts. }. Each of these hypothetical conditional preferences satisfies the axioms of \cite{vnm}, which means each has an affine representation. We show that as long as the class of hypothetical conditional preferences satisfies the extended Pareto axiom, there is a subjective probability measure over the set of states and a state-dependent utility over the set of alternatives. The class of hypothetical conditional preferences has a representation in the form of conditional expectation with respect to the subjective probability and the state-dependent utility.

The result shows that the extended Pareto is the main force behind the separation of the belief and the state-dependent utility. However, the representation is not unique. Hence, the challenge is to provide meaning to a decision maker's prior beliefs when utility is state-dependent. We get the uniqueness by adding a stronger version of our minimal agreement condition. The strong minimal agreement condition specifies that there exist two lotteries where one is strictly preferred to the other, regardless of states. Moreover, the decision maker's conditional preference for each of them is independent of the set of states.

We show that under the strong minimal agreement, the belief is unique. Moreover, the state-dependent utility is unique up to affine transformation.
 
\subsection{Set up and main result}
In this section, we develop a simple theory of subjective expected utility of state-dependant utility by reinterpreting the results of sections \ref{epr-group-rep} and \ref{bays-updating}. 
Our model is built using the framework of \cite{anscombe}. Let $\Omega=\{1,2,\ldots, n\}$ be a finite set of states of nature. The finite set $M$ represents outcomes. The simplex $L=\Delta(M)$ represents the set of lotteries over the set $M$. A lottery $l\in L$ associates the probability $l_i$ to the outcome $i\in M$.
Let the $O\notin M$.
In our setup, the objects of choice are  \emph{\textbf{conditional constant acts}}. For any lottery $l\in L$, and any event $A\in 2^{\Omega}\setminus \emptyset$, the function $f:\Omega\to L\cup \{O\}$ such that $f(w)=l$ for $\omega \in A$ and  $f(w)=O$ for $\omega \in A^c$ is termed a conditional constant act and denoted by  $f=(l,A,O,A^c)$. The interpretation is that if the event $A$ is realized, the decision maker faces the lottery $l$. Otherwise, $O$ will be realized. We assume that the decision maker has a preference relation $\succcurlyeq$, not necessarily a complete relation, over the set of conditional constant acts.

Let $F=\{(l,A,O,A^c)|\ \emptyset \neq A\in 2^{\Omega} , l\in L\}$ represent the set of conditional constant acts. For any event $\emptyset \neq A\in 2^{\Omega}$, let $F_A=\{(l,A,O,A^c)| l\in L\}$ be the set of all conditional constant acts attaining a lottery on the event $A$ and staying out on the event $A^c$. We represent the conditional preference ordering of the decision maker over $F_A$ by  $\succcurlyeq_{A}$. For any two lotteries $l_1,l_2$, we write $l_1 \succcurlyeq_{A} l_2$ as a shorthand of $(l_1,A,O,A^c)\succcurlyeq (l_2,A,O,A^c)$.

Our interpretation of conditional preference ordering is related to the models  developed by \cite{lucekrantz}, \cite{fishburn}, \cite{skiadas1}, and \cite{karnifoundationofbayes}. However, there is another interpretation of the conditional preference similar to the conditional decision model of \cite{ghirardato}. In this interpretation, we assume that the decision-maker may receive some information that only $\omega \in A$ can be realized. In this case, $\succcurlyeq_{A}$ represents the decision maker's \textit{ex-post} preference over the set of lotteries. Similarly, $\succcurlyeq_{\Omega}$ represents her \textit{ex-ante} preference over exactly the same set of lotteries.

Regardless of the interpretation, the goal is to provide a theory that connects the class of conditional preferences through the Bayes updating.
Formally, our goal is to find sufficient conditions under which that there exists a state-dependent utility function $u:\Omega\times M\to \mathbb{R}$ and a subjective probability measure $P:\Omega\to \mathbb{R}_{++}$, such that for every two lotteries $x,y\in L$, and any event $A$ the following holds:

\begin{equation}\label{bayes-conditional}
    x\succcurlyeq_A y \Leftrightarrow \sum\limits_{\omega\in A}P(\omega|A)E^x[u(\omega,\cdot)]\geq
    \sum\limits_{\omega\in A}P(\omega|A)E^y[u(\omega,\cdot)] .
\end{equation}

In the equation above, $E^x[u(w,\cdot)]$ represents the expected utility of the state-dependent utility $u$ in the state $\omega$ and with respect to the lottery $x$. The right-hand side of the equation is comparing the conditional expectation utility of the lottery $x$ and $y$, with respect to the subjective probability measure $P$ and the state-dependent utility $u$.  The importance of the result is that the probability measure $P$ depends on the event $A$ through the Bayes rule. We will obtain \ref{bayes-conditional} from the following axioms/conditions.

\begin{axiomN} \emph{\textbf{(Weak Order)}} For any event $A$, the conditional preference $\succcurlyeq_A$ is complete and transitive.
\end{axiomN}

\begin{axiomN} \emph{\textbf{(vN-M Continuity)}} For any event $A$ and for every $x,y,z\in L$, if $x\succcurlyeq_A y\succcurlyeq_A z$, there exist $\alpha, \beta \in (0,1)$ such that
$$\alpha x+(1-\alpha)z\succcurlyeq_A y\succcurlyeq_A \beta x+(1-\beta)z$$
\end{axiomN}

\begin{axiomN} \emph{\textbf{(Independence)}} For any event $A$, every $x,y,z\in L$, and every $\alpha \in (0,1)$,
$$ x\succcurlyeq_A y \Rightarrow \alpha x+(1-\alpha)z\succcurlyeq_A \alpha y+(1-\alpha)z $$
\end{axiomN}

\begin{axiomN} \emph{\textbf{(extended Pareto)}} For any two disjoint events $A,B$, and for every $x,y\in L$,

\begin{equation}
x\ \succcurlyeq_A\ y, \ x\ \succcurlyeq_B \ y \Rightarrow x\ \succcurlyeq_{A\cup B} \ y
\end{equation}
\begin{equation}
x\ \succ_A\ y, \ x\ \succcurlyeq_B \ y \Rightarrow x\ \succ_{A\cup B} \ y
\end{equation}
\end{axiomN}

\begin{axiomN} \emph{\textbf{(Minimal  Agreement)}} There exist two lotteries $\overline{x},\underline{x}\in L$ such that for every  $\omega \in \Omega$, $\overline{x}\ \succ_\omega \underline{x}$.
\end{axiomN}

\begin{axiomN} \emph{\textbf{(Richness)}}
There exist three states $\omega_1,\omega_2,\omega_3\in \Omega$ such that for any $\omega \in\{\omega_1,\omega_2,\omega_3\}$, there exist two lotteries $x,y\in L$ where $x \succ_\omega y$ and $y\succcurlyeq_{\omega'} x$ for $\omega'\in \{\omega_1,\omega_2,\omega_3\}\setminus\{\omega\}$.

\end{axiomN}

By considering these six axioms, we can rationalize the behavior of the decision-maker as a subjective expected utility maximizer with a state-dependent utility.

\begin{theorem}\label{sdutil}
Suppose that the decision maker's conditional preferences satisfy axioms 6.1-6.6, then there exist a function $u:\Omega\times M\to \mathbb{R}$ and a probability measure $P:\Omega\to \mathbb{R}_{++}$, such that for every two lotteries $x,y\in L$, and any event $A$, the following holds:

\begin{equation}\label{sceusdu}
    x\succcurlyeq_A y \Leftrightarrow \sum\limits_{\omega\in A}P(\omega|A)E^x[u(\omega,\cdot)]\geq
    \sum\limits_{\omega\in A}P(\omega|A)E^y[u(\omega,\cdot)]
\end{equation}
\end{theorem}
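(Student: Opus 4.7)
The plan is to recognize this theorem as a dictionary translation of Corollary \ref{EPR_rep} (the Extended Pareto representation) from Section \ref{EPR_section}, with states of nature playing the role of agents, events playing the role of coalitions, and conditional preferences playing the role of agent/coalition preferences. Axioms 6.1--6.3 (Weak Order, vN-M Continuity, Independence) imply, by the classical von Neumann--Morgenstern theorem, that for every nonempty event $A$ the conditional preference $\succcurlyeq_A$ admits an affine cardinal representation. Axiom 6.5 (Minimal Agreement) supplies lotteries $\overline{x}, \underline{x} \in L$ with $\overline{x} \succ_\omega \underline{x}$ for every $\omega \in \Omega$, which lets us select, for each event $A$, the unique normalized representative $\hat{u}_A \in \mathbb{R}^M$ lying in the hyperplane $H=\{u\in\mathbb{R}^M : u\cdot(\overline{x}-\underline{x})=1\}$. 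Defining the group aggregation rule $f:\Omega^*\to\mathcal{R}$ by $f(A):=\succcurlyeq_A$, Axiom 6.4 is precisely the extended Pareto property for $f$, and Axiom 6.5 is its minimal agreement condition.

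The next step is to verify the richness hypothesis of Corollary \ref{EPR_rep}. I claim that Axiom 6.6 (Richness) forces the three singleton utilities $\hat{u}_{\omega_1},\hat{u}_{\omega_2},\hat{u}_{\omega_3}$ to be affinely independent in $H$. If they were collinear, then (since they all lie in the affine hyperplane $H$) one of them is an affine combination of the other two, and by rearranging one can always express some $\hat{u}_{\omega_k}$ as a convex combination of the other two with strictly positive weights summing to one. Axiom 6.6 applied to that very state $\omega_k$ then supplies lotteries $x, y$ with $\hat{u}_{\omega_k}(x-y)>0$ while $\hat{u}_{\omega_i}(x-y)\le 0$ for $i\ne k$, contradicting the convex-combination identity. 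Therefore the range of $f$ contains three affinely independent points and $f$ is rich.

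Having verified all hypotheses of Corollary \ref{EPR_rep}, we obtain a weight function $w:\Omega\to\mathbb{R}_{++}$, unique up to positive rescaling, such that
$$\hat{u}_A \;=\; \sum_{\omega\in A}\frac{w(\omega)}{\sum_{\omega'\in A}w(\omega')}\,\hat{u}_\omega \qquad\text{for every nonempty } A\subseteq\Omega.$$
Set $P(\omega):=w(\omega)/\sum_{\omega'\in\Omega}w(\omega')$, which is a strictly positive probability measure on $\Omega$, and note that $P(\omega\mid A)=w(\omega)/\sum_{\omega'\in A}w(\omega')$ for $\omega\in A$. Define the state-dependent utility $u:\Omega\times M\to\mathbb{R}$ by $u(\omega,i):=(\hat{u}_\omega)_i$, so that $E^x[u(\omega,\cdot)]=\hat{u}_\omega(x)$ for every lottery $x$. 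Substituting into the displayed equation gives $\hat{u}_A(x)=\sum_{\omega\in A}P(\omega\mid A)\,E^x[u(\omega,\cdot)]$, and since $\hat{u}_A$ represents $\succcurlyeq_A$ in the vN-M sense, the equivalence \eqref{sceusdu} follows immediately.

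The main obstacle is the richness verification: translating the preference-level condition of Axiom 6.6 into affine independence of cardinal utilities in $H$ requires the small case analysis on the sign of the affine-combination coefficient. Everything else is a straightforward translation, since the heavy lifting was done in establishing Corollary \ref{EPR_rep}, and the present argument is essentially a rebranding of that result in the language of conditional preferences and state-dependent utility.
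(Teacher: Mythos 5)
Your proposal is correct and takes essentially the same route as the paper, whose proof of Theorem \ref{sdutil} is precisely to transplant the argument of Corollary \ref{EPR_rep} with states playing the role of agents, events the role of coalitions, Axioms 6.1--6.3 giving the vN-M normalization, Axiom 6.4 the extended Pareto property, Axiom 6.5 minimal agreement, and Axiom 6.6 the richness condition; your explicit check that Axiom 6.6 forces the three normalized singleton utilities to be non-collinear fills in a detail the paper leaves implicit. (One trivial imprecision: if two of the three utilities coincide, the ``strictly positive weights'' claim fails, but your contradiction only needs convex weights in $[0,1]$ --- or follows directly from Axiom 6.6 in that degenerate case --- so nothing breaks.)
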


The proof is similar to the proof of corollary \ref{EPR_rep}.
The probability measure $P$ is not unique. Let $Q: \Omega \to \mathbb{R}_{++}$ be any probability measure on $\Omega$; by defining a state-dependent utility $w(\omega,x)=\frac{u(\omega,x)}{Q(\omega)}$,  equation \ref{sceusdu} continues to hold with $Q$ and $w$. However, if we change the minimal agreement axiom to a stronger version, we attain the uniqueness. In the stronger version of the minimal agreement, we assume that the decision maker's preferences over the lotteries $\overline{x}, \underline{x}$ is indifferent to the realization of the states. Formally:

\begin{axiomN} \emph{\textbf{(Strong Minimal  Agreement)}} There exist two lotteries $\overline{x},\underline{x}\in L$ such that for every  $\omega \in \Omega$, $\overline{x}\ \succ_\omega \underline{x}$. Moreover, $(\overline{x},\{\omega_1\},O,\Omega\setminus\{\omega_1\}) \sim (\overline{x},\{\omega_2\},O,\Omega\setminus\{\omega_2\}$) and $(\underline{x},\{\omega_1\},O,\Omega\setminus\{\omega_1\}) \sim (\underline{x},\{\omega_2\},O,\Omega\setminus\{\omega_2\})$ for all $\omega_1,\omega_2\in \Omega$.
\end{axiomN}

 Conceptually, this axiom is closely related to A.0 axiom by \cite{karnifoundationofbayes}. However, unlike Karni's axiom, we do not need these two lotteries to be the best and worst lotteries in the set of lotteries. Our model only needs two lotteries, with one strictly preferred to the other, regardless of states. Moreover, the decision maker's conditional preference for each of them is independent of the set of states.

By replacing the minimal agreement axiom with the strong minimal agreement axiom, we can ``uniquely'' separate the belief from the state-dependent preference.

\begin{theorem}\label{sdutil2}
Suppose that the decision maker's conditional preferences satisfy axioms 6.1-6.7, then there exist a function $u:\Omega\times M\to \mathbb{R}$ and a probability measure $P:\Omega\to \mathbb{R}_{++}$, such that for every two lotteries $x,y\in L$, and every event $A$, the following holds:

\begin{equation}\label{sdutileq}
    x\succcurlyeq_A y \Leftrightarrow \sum\limits_{\omega\in A}P(\omega|A)E^x[u(\omega,\cdot)]\geq
    \sum\limits_{\omega\in A}P(\omega|A)E^y[u(\omega,\cdot)]
\end{equation}
Moreover, the probability measure $P$ is unique and the function $u$ is unique up to affine transformations.
\end{theorem}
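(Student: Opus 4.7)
The plan is to use Theorem \ref{sdutil} for existence and exploit the Strong Minimal Agreement axiom (6.7) for uniqueness.

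For existence, I apply Theorem \ref{sdutil} to axioms 6.1--6.6 to obtain a representation $(P_0, u_0)$ satisfying (\ref{sdutileq}). I then lift this to an unconditional value function on the full set $F$ of conditional constant acts by defining $V((l, A, O, A^c)) = \sum_{\omega \in A} P_0(\omega) E^l[u_0(\omega, \cdot)]$ (with the convention $u_0(\omega, O) = 0$). Because $V$ is affine in $l$ on each $F_A$ and is globally consistent across disjoint events thanks to the Extended Pareto axiom, $V$ represents $\succcurlyeq$ on all of $F$, not just each $\succcurlyeq_A$ separately.

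Next, I translate axiom 6.7 into algebraic constraints via $V$. The two cross-event indifferences in the Strong Minimal Agreement yield, for all $\omega_1, \omega_2 \in \Omega$,
\begin{equation*}
P_0(\omega_1) E^{\overline{x}}[u_0(\omega_1, \cdot)] = P_0(\omega_2) E^{\overline{x}}[u_0(\omega_2, \cdot)], \qquad P_0(\omega_1) E^{\underline{x}}[u_0(\omega_1, \cdot)] = P_0(\omega_2) E^{\underline{x}}[u_0(\omega_2, \cdot)].
\end{equation*}
Subtracting and using the minimal agreement portion of 6.7, the product $P_0(\omega)\bigl(E^{\overline{x}}[u_0(\omega, \cdot)] - E^{\underline{x}}[u_0(\omega, \cdot)]\bigr)$ is a strictly positive constant in $\omega$. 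I then invoke the residual rescaling freedom noted between Theorems \ref{sdutil} and \ref{sdutil2}---replacing $u_0(\omega, \cdot)$ by $g(\omega) u_0(\omega, \cdot)$ and adjusting $P_0$ accordingly preserves the representation (\ref{sdutileq})---to pick the canonical normalization $E^{\overline{x}}[u(\omega, \cdot)] - E^{\underline{x}}[u(\omega, \cdot)] = 1$ for every $\omega$. After this normalization, the SMA constraint pins the induced $P$ down to a unique probability measure.

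For uniqueness, I would argue that any two representations $(P, u)$ and $(P', u')$ must, by vN-M uniqueness applied on each singleton $\{\omega\}$, satisfy $u'(\omega, \cdot) = \alpha_\omega u(\omega, \cdot) + \beta_\omega$ for some $\alpha_\omega > 0$ and $\beta_\omega \in \mathbb{R}$. Substituting into the representation on a two-element event and using linear independence from the Richness axiom (6.6), I pin down the ratio $P'(\omega)/P(\omega) \propto 1/\alpha_\omega$. Combining this with SMA applied to both representations and the canonical normalization above forces $\alpha_\omega$ to be a single constant $\alpha$ for all $\omega$, so $P' = P$ and $u' = \alpha u + \beta$ for constants $\alpha > 0, \beta \in \mathbb{R}$. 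The main obstacle is precisely this last step: because SMA is preserved under state-by-state rescalings of $u_0$ at the level of conditional preferences alone, one must crucially use the global consistency of $V$ across events---the joint consequence of Extended Pareto and the cross-event indifferences in 6.7---to rule out the state-dependent rescaling and thereby pin $P$ down uniquely.
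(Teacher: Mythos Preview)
Your existence argument overreaches. Theorem \ref{sdutil} already gives a pair $(P_0,u_0)$ satisfying \eqref{sdutileq}, so nothing more is needed for existence. Your attempt to lift this to a global value function $V$ on all of $F$ and claim that ``$V$ represents $\succcurlyeq$ on all of $F$ \ldots\ thanks to the Extended Pareto axiom'' is the genuine gap. Extended Pareto relates $\succcurlyeq_A$, $\succcurlyeq_B$, and $\succcurlyeq_{A\cup B}$ for disjoint $A,B$; it says nothing about comparing an act in $F_{\{\omega_1\}}$ with an act in $F_{\{\omega_2\}}$. The only cross-event data you have are the two specific indifferences in axiom 6.7. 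In particular, for an arbitrary $(P_0,u_0)$ coming out of Theorem \ref{sdutil} there is no reason that $P_0(\omega_1)E^{\overline{x}}[u_0(\omega_1,\cdot)]=P_0(\omega_2)E^{\overline{x}}[u_0(\omega_2,\cdot)]$; indeed the rescaling freedom you yourself invoke shows this can fail. Since your uniqueness argument is built on these same unjustified identities for both $(P,u)$ and $(P',u')$, it inherits the gap.

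The paper's route is shorter and avoids building a global $V$. It reads axiom 6.7 as forcing, for any representing pair $(P,u)$, that $E^{\overline{x}}[u(\omega,\cdot)]$ and $E^{\underline{x}}[u(\omega,\cdot)]$ are constant in $\omega$ (this is the content of the cross-event indifferences once one accepts that the representation must respect them). Given two representations $(P_1,u_1)$ and $(P_2,u_2)$, vN--M on each $\succcurlyeq_{\omega}$ gives $u_2(\omega,\cdot)=\alpha_\omega u_1(\omega,\cdot)+\beta_\omega$; evaluating at $\overline{x}$ and $\underline{x}$ and using the constancy just mentioned yields $\alpha_\omega\equiv\alpha$ and $\beta_\omega\equiv\beta$ immediately. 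With $u_2=\alpha u_1+\beta$, both $(P_1,u_1)$ and $(P_2,u_1)$ represent every $\succcurlyeq_A$, and the uniqueness of the weight function in Corollary \ref{EPR_rep} (together with richness and minimal agreement) forces $P_1=P_2$. So rather than first relating $P'/P$ to $1/\alpha_\omega$ on two-element events and then trying to show $\alpha_\omega$ constant, the paper gets $\alpha_\omega$ constant directly from SMA and only afterwards compares the probabilities.
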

\begin{proof}
Based on Theorem \ref{sdutil}, there exists a pair $(P,u)$ satisfying equation \ref{sdutileq}.
To prove the uniqueness, we assume that $(P_1,u_1)$ and $(P_2,u_2)$ both represent the same class of conditional preferences. By considering the conditional preference $\succcurlyeq_{\omega}$ and the vN-M Theorem, we know that $u_2(\omega,.)=\alpha_{\omega}u_1(\omega,.)+\beta_\omega$. By using the strong minimal agreement axiom, we have $u_1(\omega_1,\overline{x})=u_1(\omega_2,\overline{x})$, $u_2(\omega_1,\overline{x})=u_2(\omega_2,\overline{x})$, $u_1(\omega_1,\underline{x})=u_1(\omega_2,\underline{x})$, and $u_2(\omega_1,  \underline{x})=u_2(\omega_2,\underline{x})$ for any two states $\omega_1,\omega_2\in \Omega$. Therefore, $\alpha_{\omega_1}=\alpha_{\omega_2}$ and $\beta_{\omega_1}=\beta_{\omega_2}$ for all $\omega_1,\omega_2\in \Omega$. Hence, $u_2(\omega,.)=\alpha u_1(\omega,.)+\beta$ for all $\omega\in \Omega$.

We consider an event $A$. Both $(P_1,u_1)$ and $(P_2,u_2)$ represent the conditional preference $\succcurlyeq_A$. Considering the pair $(P_2,u_2)$, $\succcurlyeq_A$ has the representation  
\[\begin{split}\sum\limits_{\omega\in A}P_2(\omega|A)E^{(.)}[u_2(\omega,.)]&=\sum\limits_{\omega\in A}P_2(\omega|A) E^{(.)}[\alpha u_1(\omega,.)+\beta]\\&=\alpha \sum\limits_{\omega\in A}P_2(\omega|A)E^{(.)}[u_1(\omega,.)] +\beta\,.\end{split}\]

Since, $\alpha$ is strictly positive, the last representation is the same as\\ $\sum\limits_{\omega\in A}P_2(\omega|A)E^{(.)}[u_1(\omega,.)]$. However, using the other pair, $(P_1,u_1)$, we get the representation $\sum\limits_{\omega\in A}P_1(\omega|A)E^{(.)}[u_1(\omega,.)]$.
Therefore, for any event $A$, $\sum\limits_{\omega\in A}P_2(\omega|A)E^{(.)}[u_1(\omega,.)]$ and  $\sum\limits_{\omega\in A}P_1(\omega|A)E^{(.)}[u_1(\omega,.)]$ both represent the conditional preference $\succcurlyeq_A$.
Using the richness axiom, strong minimal agreement condition, and uniqueness of corollary \ref{EPR_rep}, we have $P_1=P_2$.
This completes the proof.
\end{proof}

\section{Related Literature}

Our methods are applicable to different areas of economic theory, and generalize existing ideas in those areas. In particular, instances of our weighted averaging axiom appear in several different papers. 

The theory of Case-Based Prediction is developed by the seminal works of  \cite{gilboa1,gilboa_inductive, gilboa_book} and \cite{billot1}. In this context, the \emph{concatenation} axiom proposed by \cite{billot1}, is closely related to the strict case of our axiom. However, there are differences between the two axioms. As discussed in detail in Section \ref{bfp}, their belief formation process is defined over ``sequences'' of cases, in which each sequence can have multiple copies of the same case. The role of the concatenation axiom is to count the number of each case. However, in our framework, we define our axiom over ``sets'' of signals, in which in each set there is only one copy of each signal. Moreover, our axiom is defined over disjoint sets. By weakening our definition for any two general sets, our result does not hold anymore.
Using the duality argument, the consistency axiom of our paper is the \emph{combination axiom} of \cite{gilboa_inductive}. However, our goal is to show that the combination axiom can be obtain by weaker version of concatenation axiom using a simple duality argument.

In the paper by \cite{yariv}, they provide an example, on binary state space, to show that their \emph{soundness} condition is not a sufficient condition for an updating rule to behave as a Bayesian rule. However, we show that under our richness assumption, the strict weighted averaging axiom (which is the same as their soundness condition) is the necessary and sufficient condition for an updating rule to behave as a Bayesian. We also generalize our result for the class of updating rules that can be rationalized by a conditional probability system.

In the context of choice theory, \cite{ahn} introduces a model of continuous average choice over convex domains. In this application, we generalized their result in many ways. First, their result holds for the strict case of our axiom. Moreover, continuity and convexity are the two important forces behind their result. However, we show that the strictness of an average choice function, continuity, or convexity of the domain are not the main forces behind extracting the underlying distribution of choices. The main force is our weighted averaging axiom. Moreover, we show that it is possible to rationalize an average choice function by a two-stage Luce model, as long as it satisfies our weighted averaging axiom.

The path independence choice functions are extensively studied by \cite{plott}.
Our representation of average choice functions under the weighted averaging axiom and continuity generalizes the results by \cite{kalai} and \cite{machina}, regarding the impossibility of a choice function under both the path independence and the continuity. 

In the context of social choice, \cite{dhillon} and \cite{shapley1} study variants of extended Pareto rules.

\cite{shapley1} study the extended Pareto rule over vN-M preferences by relaxing the completeness axiom. Besides the technical and conceptual differences between the two approaches and results, their model depends on their \emph{non-degeneracy} condition. The condition is only satisfied when there is a spanning tree over the preferences, and every three consecutive preferences in the spanning tree are linearly independent. However, the richness condition of our theorem only requires three linearly independent vectors among the whole set of preferences. Moreover, our result can be applied even for the class of extended weak Pareto aggregation rules under our strong richness condition. Note that our primary goal in this paper is to show that extended Pareto and extended weak Pareto are special cases of our weighted averaging axiom (under the minimal agreement condition).

The papers by \cite{dhillon}, \cite{dhillon2}, and \cite{borger2} each by considering different sets of axioms, other than Arrow's, provide an axiomatization of relative utilitarianism as a positive result. The paper by \cite{dhillon} is the closest one to ours. Dhillon considers a variant of extended Pareto to get a weighted averaging structure. However, \cite{borger} shows a counterexample to the representation. We restrict the domain and use our definition of extended Pareto to get the weighted averaging structure as a consequence of our main theorem. Again, the technique we developed can also be used to provide a representation of the extended weak Pareto social welfare functions.

Finally, there are many papers and different approaches to address the shortcomings of subjective expected utility theory. Note that our goal, in this context, is to explain the basic underlying structure that lets us separate beliefs and state-dependent utilities.

\cite{karnischmeidlervind} and \cite{karnischmeidler} use hypothetical preferences on hypothetical lotteries to obtain the identification of the beliefs and state-dependent preferences. \cite{derez}, \cite{derezrustichini}, and \cite{karniwithoutstate} present different theories to identify state-dependent preferences in situations where moral hazard is present. 

\cite{lucekrantz} and \cite{fishburn} use preferences on enlarged choice space of all conditional acts to model subjective expected utility of state-dependent preferences.  However, our paper only considers the hypothetical conditional preferences on the set of conditional ``constant'' acts. We find the necessary and sufficient condition that our conditional preferences are related to each other through a subjective probability and a state-dependent utility. 

The papers by  \cite{skiadas1},  \cite{ghirardato}, and \cite{karnifoundationofbayes} are closely (conceptually) related to our main result of Section \ref{statedep}. However, there are many differences between each result. Moreover, our goal is to build the model that only extended Pareto derives the separation of beliefs and state-dependent preferences.

\cite{skiadas1} presents a nonexpected utility model, by considering hypothetical preferences over the set of act-event pairs. His coherence axiom has the same role as the extended Pareto axiom in our setup. However, he used the solvability axiom to be able to apply the Debreu's additive representation theorem. In our paper, we consider the class of conditional vN-M preferences. As a result, we only require the extended Pareto for our representation.

\cite{karnifoundationofbayes} presents a general model with a preference ordering over the set of unconditional acts. Using the preference order, he defines the set of conditional preferences over the set of all conditional acts. Therefore, to connect the class of conditional preferences, the model needs the existence of the constant-valuation acts. Moreover, the cardinal and ordinal coherence axioms are the main forces behind obtaining the Bayesian updating in his representation. However, in our more restricted domain, we only need the extended Pareto to get our representation. 

Finally, \cite{ghirardato}, by replacing Savage's sure-thing principle by dynamic consistency, obtains a subjective expected utility theory that the conditional preferences are connected through the Bayes rule. However, his representation only holds for the state-independent preferences.





\subsection*{Acknowledgments}
The authors gratefully acknowledge support
from Beyond Limits (Learning Optimal Models) through CAST (The Caltech Center for Autonomous Systems and Technologies) and partial support
from the Air Force Office of Scientific Research under awards number FA9550-18-1-0271 (Games for Computation and Learning) and FA9550-20-1-0358 (Machine Learning and Physics-Based Modeling and Simulation).

The first version of the paper was written during the first author's Ph.D. studies with many helpful comments from Federico Echenique and Kota Saito. The first author  thanks his Ph.D. advisors Jaksa Cvitanic, Federico Echenique, Kota Saito, and Robert Sherman. For helpful discussions, the first author thanks Itai Ashlagi, Kim Border, Martin Cripps, David Dillenberger, Drew Fudenberg, Simone Galperti, Michihiro Kandori, Igor Kopylov, Jay Lu, Fabio Maccheroni, Thomas Palfrey, Charles Plott, Luciano Pomatto, Antonio Rangel, Pablo Schenone, Omer Tamuz, and Leeat Yariv.

\section*{References}
\printbibliography[heading=none]


\newpage
\section{Appendix}





\subsection{Proof of Theorem \ref{main_theorem}}

The main part of the proof follows the steps of \cite{billot1}, with some twists.

The following two lemmas are the central ideas behind the proof. They help us to first define the function $w$ and then extend it from the binary sets to any finite-cardinality sets. 

\begin{lemma}\label{lem_main}
Select X as any nonempty set. Let $X^*$ denote the set of all nonempty finite subsets of X. Consider two functions $f_1,f_2: X^*\to \mathbb{R}^n$ that satisfy the strict weighted averaging axiom. Select four points $a,b,c,d$ in the space $X^*$ such that $a\cup b =c\cup d$ and $a\cap b=c\cap d=\emptyset$. If $\forall\  x\in \{a,b,c,d\} \  f_1(x)=f_2(x)$ and not all $\{f_1(a),f_1(b),f_1(c),f_1(d)\}$ are on a same line, then $f_1(a\cup b)=f_2(a\cup b)$.
\begin{proof}
Since $f_1$ satisfies the strict weighted averaging axiom and $a\cup b =c\cup d$ and $a \cap b = c \cap d = \emptyset$, thus $f_1(a\cup b)$ is on the line connecting $f_1(a),f_1(b)$. Also, since $a\cup b =c\cup d$ , $f_1(a\cup b)=f_1(c\cup d)$ should be on the line connecting $f_1(c)$ and $f_1(d)$. But $\{f_1(a),f_1(b),f_1(c),f_1(d)\}$ are not collinear, thus the line connecting $f_1(a)$ and $f_1(b)$ and the line connecting $f_1(c)$ and $f_1(d)$ can only intersect at most at a single point. But $f_1(a\cup b)$ is on the both lines, hence this point must be the unique intersection of them. 

Similarly, the same is true for $f_2$. This means $f_2(a\cup b)$ must be the unique intersection of the line passing through $f_2(a),f_2(b)$ and the line passing through $f_2(c),f_2(d)$. But since $\forall x\in \{a,b,c,d\} \  f_1(x)=f_2(x)$, $f_2(a\cup b)$ should be the unique intersection of the line passing through $f_1(a),f_1(b)$ and the line passing through $f_1(c),f_1(d)$. But we have already shown that $f_1(a\cup b)$ is also the unique intersection of the line passing through $f_1(a),f_1(b)$ and the line passing through $f_1(c),f_1(d)$. Thus, $f_1(a\cup b)=f_2(a\cup b)$.

\end{proof}
\end{lemma}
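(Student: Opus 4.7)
The strategy is to apply the strict weighted averaging axiom along two different partitions of the set $a \cup b = c \cup d$, thereby pinning down $f_1(a \cup b)$ as the unique intersection of two affine lines that depend only on the shared values $f_1(a), f_1(b), f_1(c), f_1(d)$. Once this is done, the argument for $f_2$ is literally identical and gives the same point, so the two values must coincide.

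Concretely, I would first apply the axiom to the disjoint pair $a, b$ to write $f_1(a \cup b) = \lambda f_1(a) + (1-\lambda) f_1(b)$ for some $\lambda \in (0,1)$, which places $f_1(a \cup b)$ on the affine line $\ell_1$ through $f_1(a)$ and $f_1(b)$. Since $a \cup b = c \cup d$ as elements of $X^*$, we also have $f_1(a \cup b) = f_1(c \cup d)$, and a second application of the axiom to the disjoint pair $c, d$ places this same value on the affine line $\ell_2$ through $f_1(c)$ and $f_1(d)$. The non-collinearity hypothesis ensures $\ell_1 \ne \ell_2$, and because both lines already contain the point $f_1(a \cup b)$, they intersect in exactly that point. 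Repeating this step-by-step for $f_2$ produces two lines built from $f_2(a), f_2(b), f_2(c), f_2(d)$; but by hypothesis these agree with $f_1(a), f_1(b), f_1(c), f_1(d)$, so the two lines are literally $\ell_1$ and $\ell_2$ again, and $f_2(a \cup b)$ is forced to be their unique common point.

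The only delicate point is verifying that two distinct affine lines with a common point in the Hilbert space $H$ meet in exactly one point; this is immediate, but it is the place where the non-collinearity assumption is essential, since otherwise $\ell_1$ and $\ell_2$ could coincide and $f_1(a\cup b)$ would no longer be uniquely determined by the values on $\{a, b, c, d\}$. I do not anticipate any further obstacle: the lemma is essentially a two-line-intersection argument with no computation, and the strict weighted averaging axiom is used only for its geometric content (membership in the open segment, hence in the affine line).
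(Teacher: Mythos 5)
Your proposal is correct and follows essentially the same argument as the paper: apply the strict weighted averaging axiom to the two partitions $\{a,b\}$ and $\{c,d\}$ of the same set, note that $f_1(a\cup b)$ lies on both lines, use non-collinearity to conclude the two lines are distinct and hence meet in exactly one point, and then observe that $f_2$ yields the very same two lines, forcing $f_2(a\cup b)$ to be that same intersection point. No gap; the reasoning matches the paper's proof step for step.
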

\begin{lemma}\label{lem_uniq}
Assume that $\{x,y,z\}$ are three points in X such that $f(x),f(y),f(z)$ are not collinear. Let $f$ satisfy the strict weighted averaging axiom and $f(\{x,y,z\})=a_1f(x)+a_2f(y)+a_3f(z)$, then $a_1/a_2$ must be independent of the choice of z, as long as $f(x),f(y),f(z)$ are not collinear.  Moreover, if $f(\{x,y\})=\lambda f(x)+(1-\lambda)f(y)$, then $\frac{a_1}{a_2}=\frac{\lambda}{1-\lambda}$.
\begin{proof}
Since $f(x),f(y),f(z)$ are not collinear, they should be affinely independent. Hence, $a_1,a_2,a_3$ are uniquely defined. 

By the strict weighted averaging axiom,  there exists $\lambda_1\in (0,1)$ such that $f(\{x,y,z\})=\lambda_1 f(\{x,y\})+(1-\lambda_1)f(\{z\})$. Again by the strict weighted averaging axiom there exists $\lambda\in (0,1)$ such that $f(\{x,y\})=\lambda f(x)+(1-\lambda)f(y)$. Hence, $f(\{x,y,z\})=\lambda_1 (\lambda f(x)+(1-\lambda)f(y))+(1-\lambda_1)f(\{z\})$. By affinely independence of $f(x),f(y),f(z)$, we should have $a_1=\lambda_1\lambda$ and $ a_2=\lambda_1(1-\lambda)$. This means that $\frac{a_1}{a_2}=\frac{\lambda}{1-\lambda}$, which means that $a_1/a_2$ is independent of the choice of z, as long as $f(x),f(y),f(z)$ are not collinear.
\end{proof}
\end{lemma}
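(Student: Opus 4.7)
The plan is to decompose $f(\{x,y,z\})$ in two stages using the strict weighted averaging axiom, then exploit affine independence to read off the coefficients uniquely. First, I would apply the axiom to the disjoint union $\{x,y,z\}=\{x,y\}\cup\{z\}$ to obtain some $\mu\in(0,1)$ with
\begin{equation*}
f(\{x,y,z\})=\mu f(\{x,y\})+(1-\mu)f(z).
\end{equation*}
Next, applying the axiom to $\{x,y\}=\{x\}\cup\{y\}$ gives $\lambda\in(0,1)$ with $f(\{x,y\})=\lambda f(x)+(1-\lambda)f(y)$, the very $\lambda$ appearing in the statement. Substituting yields
\begin{equation*}
f(\{x,y,z\})=\mu\lambda\, f(x)+\mu(1-\lambda)\,f(y)+(1-\mu)\,f(z).
\end{equation*}

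The second ingredient is a uniqueness statement. Since $f(x),f(y),f(z)$ are not collinear, they are affinely independent in $H$, so any vector in their affine hull has a unique representation as an affine combination. Because the coefficients $\mu\lambda,\mu(1-\lambda),1-\mu$ sum to $1$, and the hypothesis expresses $f(\{x,y,z\})=a_1 f(x)+a_2 f(y)+a_3 f(z)$ (implicitly with $a_1+a_2+a_3=1$ in the affine sense used throughout the paper), the two representations must coincide term by term. Therefore $a_1=\mu\lambda$, $a_2=\mu(1-\lambda)$, and $a_3=1-\mu$, from which $a_1/a_2=\lambda/(1-\lambda)$ follows immediately.

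Finally, observing that $\lambda$ is determined solely by the pair $\{x,y\}$ through the relation $f(\{x,y\})=\lambda f(x)+(1-\lambda)f(y)$ and does not involve $z$ at all, the ratio $a_1/a_2$ is independent of the choice of $z$ provided only that $f(x),f(y),f(z)$ remain non-collinear (so that the affine-independence argument applies). The only subtle point, and the one I would want to double-check, is the implicit normalization $a_1+a_2+a_3=1$: if the $a_i$ are only defined up to uniqueness via affine independence, this follows from the fact that both $f(\{x,y,z\})$ and the convex combination $\mu\lambda f(x)+\mu(1-\lambda)f(y)+(1-\mu)f(z)$ lie in the affine hull of $\{f(x),f(y),f(z)\}$, so no obstruction arises. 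Thus the lemma reduces to two applications of strict averaging plus affine independence, with no further technicalities.
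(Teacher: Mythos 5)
Your proof is correct and follows the same route as the paper's: decompose $\{x,y,z\}=\{x,y\}\cup\{z\}$, apply the strict weighted averaging axiom twice, and use affine independence of $f(x),f(y),f(z)$ to identify coefficients. Your explicit remark on the normalization $a_1+a_2+a_3=1$ is a minor clarification the paper leaves implicit, but the argument is otherwise identical.
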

\subsubsection{Proving the necessary and the uniqueness part}

Assume that the weight function $w$ exists. Therefore,  $f(A)=\frac{\sum_{x\in A}w(x)f(x)}{\sum_{x\in A} w(x)}$. It shows that if $A\cap B=\emptyset$, then $\  f(A\cup B)=\frac{\sum_{x\in A\cup B}w(x)f(x)}{\sum_{x\in A\cup B} w(x)}=\linebreak(\frac{\sum_{x\in A}w(x)}{\sum_{x\in A\cup B}w(x)})(\frac{\sum_{x\in A} w(x)f(x)}{\sum_{x\in A}w(x)})+(\frac{\sum_{x\in B}w(x)}{\sum_{x\in A\cup B}w(x)})(\frac{\sum_{x\in B} w(x)f(x)}{\sum_{x\in  B}w(x)})$. By defining $\lambda = \frac{\sum_{x\in A}w(x)}{\sum_{x\in A\cup B}w(x)}$, we have $f(A\cup B)= \lambda f(A)+(1-\lambda)f(B)$. Thus, the strict weighted averaging axiom satisfied. \\

Regarding the uniqueness of $w$, assume that there exist two $w_1,w_2$ such that $f(A)=\frac{\sum_{x\in A}w_1(x)f(x)}{\sum_{x\in A} w_1(x)}=\frac{\sum_{x\in A}w_2(x)f(x)}{\sum_{x\in A} w_2(x)}$. Since the range of $f$ is not a subset of a line, there exist at least three elements $x,y,z \in X$ such that $f(x),f(y),f(z)$ are not collinear. Thus, they are affinely independent.  Hence, $f(\{x,y,z\})=a_1f(x)+a_2f(y)+a_3f(z)$ has a unique solution $a_1,a_2,a_3$. Hence, there should be  an $\alpha$ such that $w_1(p)/w_2(p)=\alpha \ \forall p\in\{x,y,z\}$. We will show that for all other point $r\in X,\allowbreak \  w_1(r)/w_2(r)=\alpha$. 

Select a point $r\in X$, based on the assumption on $\{x,y,z\}$, there should be at least two points $u,v$ in $\{x,y,z\}$ such that $f(r),f(u),f(v)$ are not collinear. Without loss of  generality, assume that $\{u,v\}=\{x,y\}$. Since $f(r),f(x),f(y)$ are affinely independent, $f(\{x,y,r\})=b_1f(x)+b_2f(y)+b_3f(r)$ where $b_1,b_2,b_3$ are unique. Therefore, there exists $\beta$ such that $w_1(p)/w_2(p)=\beta \ \forall p\in\{x,y,r\}$. But notice that $\alpha=w_1(x)/w_2(x)=\beta$. Hence, we should have $w_1(r)/w_2(r)=\alpha$ and this is what we wanted to prove.
\\ 
\subsubsection{Proving the sufficiency part}
First, in order to define the function $w$, fix an element $x_{0}\in X$ and put $w(x_{0})=1$. Based on the strict weighted averaging axiom for any $y\in X \setminus \{x_{0}\}$ such that $f(y)\neq f(x_{0})$, we have a unique $\lambda\in (0,1)$ such that $f(\{x_{0},y\})=\lambda f(x_{0})+(1-\lambda)f(y)$. Let define $w(y)=\frac{1-\lambda}{\lambda}$.

To define the weight for any other $y\in X \setminus \{x_{0}\}$ with $f(y)= f(x_{0})$, we fix another point $z_{0}\in X \setminus \{x_{0}\}$ such that $f(z_{0})\neq f(x_{0})$. Since $f(x_{0})=f(y)$, we should have $f(y)\neq f(z_{0})$. By using the strict weighted averaging axiom, we know that there exists a unique $\lambda \in (0,1)$ such that $f(\{z_{0},y\})=\lambda f(z_{0})+(1-\lambda)f(y)$. Since the weight on $z_{0}$ has already been defined, we define the weight of $y$ such that $\frac{w(y)}{w(z_{0})}=\frac{1-\lambda}{\lambda}$. Thus, $w(y)=w(z_{0})\times\frac{1-\lambda}{\lambda}$.\\

In the rest of this section, we are going to prove that $w$ satisfies the representation of the theorem. It means that by defining $f^*(A)=\frac{\sum_{x\in A}w(x)f(x)}{\sum_{x\in A} w(x)}$,  we should have $f^*(A)=f(A)$.





First, in Step 1 we prove that the representation holds for any three points, as long as the three points under $f$ are not collinear. In Step 2, we prove that the representation holds for any two points. In Step 3, (which is not necessary, and we provide it for its simplicity to capture the main ideas of the main part) we prove that for three points the representation holds. Finally in Step 4, by using induction on the cardinality of subsets of $X$, we show that the representation holds for any subset of $X$.\\

\noindent\textbf{Step 1:} for any three points $r,s,t$ such that $f(r),f(s),f(t)$ are not collinear, we have $f(\{r,s,t\})=a_1f(r)+a_2f(s)+a_3f(t)$, where $a_i$ are unique. Note that, it is enough to prove that $\frac{a_1}{a_2}=\frac{w(r)}{w(s)}$, because in the same way, we can also get $\frac{a_2}{a_3}=\frac{w(s)}{w(t)},\ \frac{a_3}{a_1}=\frac{w(t)}{w(r)}$. There are two cases: 

\textbf{Case 1:} If $x_0,r,s$ are such $f(x_0),f(r),f(s)$ are not collinear then \linebreak$f(\{x_0,r,s\})=b_1f(x_0)+b_2f(r)+b_3f(s)$. Based on Lemma \ref{lem_uniq}, we know that $\frac{a_1}{a_2}=\frac{b_2}{b_3}$. But Again using the Lemma  \ref{lem_uniq} and the way we define $w$, we know that $\frac{b_1}{b_2}=\frac{1}{w(r)},\frac{b_1}{b_3}=\frac{1}{w(s)} $ which means that $\frac{b_2}{b_3}=\frac{w(r)}{w(s)} $. Hence, we have $\frac{a_1}{a_2}=\frac{w(r)}{w(s)}$. 

\textbf{Case 2:} If $x_0,r,s$ are such that $f(x_0),f(r),f(s)$ are collinear, in this case both $\{f(x_0),f(r),f(t)\}$ and $\{f(x_0),f(s),f(t)\}$ are not collinear.
By the same technique as the first case, we get that $\frac{a_1}{a_3}=\frac{w(r)}{w(t)}$ and $\frac{a_3}{a_2}=\frac{w(t)}{w(s)}$. Hence, it means that
$\frac{a_1}{a_2}=\frac{a_1}{a_3}\times\frac{a_3}{a_2}=\frac{w(r)}{w(t)}\times\frac{w(t)}{w(s)}=\frac{w(r)}{w(s)}$, which is what we wanted to prove.\\

\noindent\textbf{Step 2:} Assume that $r,s\in X$. We want to show that $f^*(\{r,s\})=f(\{r,s\})$. If $f(r)=f(s)$, then it is true. If $f(r)\neq f(s)$, then by the richness condition, there exists an element $t\in X$ such that $\{f(t),f(r),f(s)\}$ are not collinear. Based on Step 1, we know that $f(\{t,r,s\})=f^*(\{t,r,s\})$, also we have $f(t)=f^*(t),f(r)=f^*(r)$, and $f(s)=f^*(s)$. Notice that, based on the strict weighted averaging axiom, $f(\{r,s\})$ is on the line connecting $f(r)$ and $f(s)$. Also, it is on the line connecting $f(\{t,r,s\})$ and $f(t)$. The reason is that by the strict combination axiom, there exist a $\lambda \in (0,1)$ such that $f(\{t,r,s\})=\lambda f(t)+(1-\lambda)f(\{r,s\})$, which means that  $f(\{r,s\})$ is on the line connecting $f(\{t,r,s\})$ and $f(t)$. Similarly, everything holds for $f^*$ which means that $f^*(\{r,s\})$ is on the line connecting $f^*(r)$ and $f^*(s)$ and also it is on the line connecting $f^*(\{t,r,s\})$ and $f^*(t)$. Since $\{f(t),f(r),f(s)\}$ are not collinear the intersection of two line can have at most one solution and since $f(\{t,r,s\})=f^*(\{t,r,s\})$, $f(t)=f^*(t)$, $f(r)=f^*(r)$, and $f(s)=f^*(s)$ then by a similar argument as Lemma \ref{lem_main}, we should have a unique intersection, which satisfies $f^*(\{r,s\})=f(\{r,s\})$. This is what we wanted to prove.\\ \\
\textbf{Step 3:} (This part is the tricky part, we provide it to capture the main ideas. We will use the same technique in Step 4) We are going to prove that for all three point $r,s,t$ we have $f^*(\{r,s,t\})=f(\{r,s,t\})$.There are two separate cases to be considered.

\textbf{Case 1:} If $f(r),f(s),f(t)$ are not collinear, then by Step 1, it is correct.

\textbf{Case 2:} Assume that $f(r),f(s),f(t)$ are collinear. If all of them are the same, then by strict weighted averaging axiom $f^*(\{r,s,t\})=f(\{r,s,t\})$. Hence, assume that they are not all the same. 

Without loss of  the generality, assume that $f(s)\neq f(r), f(s)\neq f(t)$. Based on the richness condition of $f$, we should have a point $v\in X$ such that not all $f(v),f(r),f(s)$, and $f(t)$ are collinear. Note that, $f(v),f(r),f(s)$ are not collinear. Similarly, $f(v),f(s),f(t)$ are not collinear. Based on Case 1, we know that $f^*(\{v,r,s\})=f(\{v,r,s\})$ and $f^*(\{v,s,t\})=f(\{v,s,t\})$. Also, we know that $f^*(v)=f(v)$, $f^*(r)=f(r)$, $f^*(s)=f(s)$, and $f^*(t)=f(t)$. Using the strict weighted averaging axiom, we know that $f(\{v,r,s,t\})$ is on the intersection of the line passing through $f(\{v,r,s\})$ and $f(t)$, and the line passing through the $f(\{v,s,t\})$ and $f(r)$. Also, note that not all of $f(\{v,r,s\})$, $f(\{v,s,t\})$, $f(r)$, and $f(t)$ are collinear, since otherwise $f(v)$ must be on the line connecting $f(r)$ and $f(s)$. Similarly, we have the same properties for $f^*(\{v,r,s,t\})$. Based on the argument of the Lemma \ref{lem_main}, we have $f(\{v,r,s,t\})=f^*(\{v,r,s,t\})$.

By using the strict weighted averaging axiom, we know that $f(\{r,s,t\})$ is on the line passing through $f(\{v,r,s,t\})$ and $f(v)$, since there exists $\lambda \in (0,1)$ such that $f(\{v,r,s,t\})=\lambda f(\{r,s,t\})+(1-\lambda)f(v)$. Again, by using the strict weighted averaging axiom, we know that $f(\{r,s,t\})$ is on the line passing through $f(\{r,s\})$ and $f(t)$. Also, the same holds for $f^*$. Moreover, we have $f(\{v,r,s,t\})=f^*(\{v,r,s,t\})$, $f(\{r,s\})=f^*(\{r,s\})$, $f(v)=f^*(v)$, and $f(t)=f^*(t)$. Also, not all $f(\{v,r,s,t\})$, $f(\{r,s\})$, $f(v)$, and $f(t)$ are on a same line, since otherwise $f(v)$, $f(r)$, $f(s)$, and $f(t)$ are collinear which is not correct. As a result, based on the argument of lemma \ref{lem_main}, we have $f^*(\{r,s,t\})=f(\{r,s,t\})$. The latter is what we wanted to prove.\\ 
\\
\textbf{Step 4 (The main Step):} Up to here, we prove that for any $A\in X^*$ if $|A|\leq 3$ then $f^*(A)=f(A)$. To complete the proof, we will use an induction on the cardinality of $A$. Assume that for all $A\in X^*$ with $ |A|\leq k$ we have $f^*(A)=f(A)$. We are going to show that for all $A\in X^*$ with $ |A|=k+1$, we have  $f^*(A)=f(A)$.

Fix a subset $A $ with $|A|=k+1$. Assume that $A=\{x_1,\ldots,x_{k+1}\}$. There are two separate cases to be considered.

\textbf{Case 1:} Assume that not all $\{f(x_i)\}_{i=1}^{k+1}$ are collinear. Note that, by the induction hypothesis, $\forall x\in A$ and $\forall \  B\in 2^{A\setminus\{x\}}$  we have $f(B)=f^*(B)$. Define $line(f(x),f(A\setminus\{x\}))$ as the line passing through $f(x)$ and $f(A\setminus\{x\})$ for the case where $f(x)\not= f(A\setminus\{x\})$. However, if $f(x)=f(A\setminus\{x\})$, then define it as the single point $f(x)$.

If there exists $ x\in A$ such that $f(x)=f(A\setminus\{x\})$, then based on the strict weighted averaging axiom , there exists $\lambda \in (0,1)$ such that $f(A)=\lambda f(x)+(1-\lambda)f(A\setminus \{x\})=f(x)$. Similarly, $f^*(A)=f^*(x)$. But, we know that $f(x)=f^*(x)$, which means that $f(A)=f^*(A)$. 

If $\forall \ x\in A \ f(x)\not=f(A\setminus\{x\})$, then there exist $x,y\in A$ such that $f(x), f(A\setminus \{x\})$, $f(y), f(A\setminus \{y\})$ are not collinear. Because, otherwise all $f(x_i)$ are on the $f(x),f(A\setminus \{x\})$, which cannot be correct since we assumed that not all $\{f(x_i)\}_{i=1}^{k+1}$ are collinear. Considering $x,y\in A$ such that $f(x),f(A\setminus \{x\}),f(y),f((A\setminus \{y\})$ are not collinear, based on the strict weighted averaging axiom we know that $f(A)$ is on $line(f(x),f(A\setminus \{x\}))$. Also, it must be on $line(f(y),f(A\setminus \{y\}))$. Similarly, by the strict weighted averaging axiom when applied to $f^*$, we know that $f^*(A)$ is on $line(f^*(x),f^*(A\setminus \{x\}))$. Moreover, it must be on the $line(f^*(y),f^*(A\setminus \{y\}))$. 

Since (1) $f(x)=f^*(x)$,  $f(A\setminus \{x\})=f^*(A\setminus \{x\})$, $P(y)=P^*(y)$, $P(A\setminus \{y\})=P^*(A\setminus \{y\})$ and, (2) not all $f(x),f(A\setminus \{x\}),f(y)$, and $f((A\setminus \{y\})$ are collinear, based on the Lemma \ref{lem_main}, we have $f^*(A)=f(A)$.
Hence in the case that not all $\{f(x_i)\}_{i=1}^{k+1}$ are collinear, we showed that $f^*(A)=f(A)$.\\

\textbf{Case 2:} Assume that $\{f(x_i)\}_{i=1}^{k+1}$ are collinear. Without loss of generality, assume that $f(x_1),f(x_{k+1})$ are the two extreme points on the line that contains them, which means that all other points are between this two. 

If $f(x_1)=f(x_{k+1})$,then all $\{f(x_i)\}_{i=1}^{k+1}$ are the same. Using the strict weighted averaging axiom, it shows that $f(A)=f(x_1)=f^*(x_1)=f^*(A)$.

If $f(x_1)\neq f(x_{k+1})$, based on the richness condition of the aggregation rule $f$, we can select a point $y\in X\setminus A$ such that not all $f(y),f(x_1)$, and $f(x_{k+1})$ are collinear. Based on the previous Case 1, we know that $f(y,x_1,\ldots,x_k)=f^*(y,x_1,\ldots,x_k) $, since we have proved that $f$ and $f^*$ are coincided for any $k+1$ not collinear points. Similarly, we have $f(y,x_2,\allowbreak \ldots,x_{k+1})=f^*(y,x_2,\ldots,x_{k+1}) $. 

Using the strict weighted averaging axiom, $\allowbreak f(\{y,x_1,\ldots,x_{k+1}\})$ is on the \linebreak$\allowbreak line(\allowbreak f(\{y,x_1,\ldots,x_k\})\allowbreak,f(x_{k+1}))$. It is also on the $\allowbreak line(f(\{y,x_2,\allowbreak \ldots,x_{k+1}\}),f(x_1))$. Also, not all $f(\{y,x_1,\ldots,x_k\})$, $f(x_{k+1}))$, $\allowbreak f(\{y,x_2,\allowbreak \ldots,\allowbreak x_{k+1}\allowbreak \})$, $f(x_1))$ are collinear, since $f(\{y,x_1,\allowbreak \ldots,x_k\})$ cannot be on $\allowbreak line(f(x_1),\allowbreak f(x_{k+1}))$ otherwise $f(y)$ must be on that line which is not correct. Similarly, everything holds for the $f^*$.

Since, $f(\{y,x_1,\allowbreak\ldots,x_k\})= f^*(\{y,x_1,\ldots,x_k\})$, $f(x_{k+1})=f^*(x_{k+1})$,  $f(\{y,x_2,\ldots,x_{k+1}\}) = f^*(\{y,x_2,\ldots,x_{k+1}\})$, and $f(x_1)=f^*(x_1)$ then again by using Lemma \ref{lem_main}, we get $f(\{y,x_1,\ldots,x_{k+1}\})=f^*(\{y,x_1,\ldots,x_{k+1}\})$. 

The point $f(\{x_1,\ldots,x_{k+1}\})$ is on the $line(f(x_1),f(x_k))$. It is also on the $line(f(y)\allowbreak,f(\{y,x_1,\ldots,x_{k+1}\}))$, since by the strict weighted averaging axiom $f(\{y,x_1,\ldots,x_{k+1}\})=\lambda f(y)+(1-\lambda)f(\{x_1,\ldots,x_{k+1}\})$ for some $\lambda \in (0,1)$. Similarly, the same holds for $f^*$. Finally, since (1) $f(y)=f^*(y)$, $f(\{y,x_1,\allowbreak\ldots,\allowbreak x_{k+1}\allowbreak \})=\allowbreak f^*(\{y,x_1,\ldots,x_{k+1}\})$, $f(x_1)=f^*(x_1)$ and, (2) $f(x_{k+1})=f^*(x_{k+1})$ and $f(x_1),f(x_{k+1})$, and $f(y)$ are not collinear, then using the same types of arguments in Lemma \ref{lem_main}, we get $f(A)=f^*(A)$. This is what we wanted to prove. \\

Hence, for all $A\in X^*$ with cardinality $k+1$, we have $f(A)=f^*(A)$. Based on the induction, we have $f(A)=f^*(A)$ for all $A\in X^*$. This completes the proof.\\
\qed


\subsection{Proof of Theorem \ref{general_thm}}
There are couple of steps in the proof.
\noindent\textbf{Defining the weak order:}

\noindent\textbf{Step 1:} First, we define a binary relation $\succcurlyeq$ over every two different elements $x,y\in X$ by:

\textbf{Case 1:} If $f(x)\neq f(y)$, we define $x\succcurlyeq y\iff f(\{x,y\})\neq f(y)$.

\textbf{Case 2:} If $f(x)= f(y)$, then by the strong richness condition, we select another point $z\in X$, such that $f(\{x,z\})\notin \{f(x),f(z)\}$. Hence, we have $f(z)\neq f(y)$. In this case, we define $x\succcurlyeq y\iff f(\{z,y\})\neq f(y)$.

To obtain reflexivity, for any $x\in X$, we define $x\succcurlyeq x$.

\noindent\textbf{Step 2:}
We prove that $\succcurlyeq$ is a weak order. The reflexivity and the completeness are trivial. We only need to establish the transitivity. Assume that $x\succcurlyeq y, y\succcurlyeq z$. We will show that $x\succcurlyeq z$. 

The proof is by contradiction. Therefore, assume that $z\succ x$.

\textbf{Case 1:} Assume that $f(x),f(y),f(z)$ are non-collinear.  Since $z\succ x$, based on the way we defined $\succcurlyeq$, we have $f(x,z)=f(z)$. 

Consider the coalition $\{x,y,z\}$. By using the weighted averaging axiom over the sub-coalitions $\{x,z\}$ and $\{y\}$, the vector $f(\{x,y,z\})$ should be on the line joining $f(y)$ and $ f(\{x,z\})$ (which is the same as $f(z)$). Similarly, by considering the sub-coalitions $\{x,y\}$ and $\{z\}$,  $f(\{x,y,z\})$ should be on the line passing through $f(\{x,y\})$ and $f(z)$. Since $f(\{x,y\})\neq f(y)$ and $f(\{x,y\}),f(y),f(z)$ are non-collinear, we have $f(\{x,y,z\})=f(z)$. However, by considering the sub-coalitions $\{y,z\}$, $\{x\}$ and the fact that $f(\{y,z\})\neq f(z)$, this cannot be happen. Therefore,  $x\succcurlyeq z$.

\textbf{Case 2:} Assume that $f(x),f(y),f(z)$ are collinear. By using the strong richness condition, we can select a point $u\in X$ such that $f(u)$ is not on the line passing through $f(x),f(y),f(z)$, and also $f(\{u,x\})\notin \{f(x),f(u)\}$ (this means that $x\sim u$). First, using Case 1, by considering the coalitions $\{u,x,y\}$, $\{u,x,z\}$, we have $u\succcurlyeq y$ and $z\succ u$. Since $\{u,y,z\}$ are non-collinear, by using case 1, we have $z\succ u, u\succcurlyeq y\Rightarrow z\succ y$. But this is a contradiction. Therefore, $x\succcurlyeq z$.\\

\noindent\textbf{The main part: proving  $f(A)=f(M(A,\succcurlyeq))$}.

Up to here, we show that $\succcurlyeq$ is a weak order. Next, we will show that for any coalition $A\in X^*$ we have $\ f(A)=f(M(A,\succcurlyeq))$.

We use the letter $H$ for the highest-ordered elements of $A$, and $L$ for the rest. In other words, $H:=M(A,\succcurlyeq), \ L:=A\setminus H$. The proof is by a double-induction on the cardinality of $H$ and $L$. In Step 1, we will show that if $x\in X$ and $L\in X^*$ are such that $\forall z\in L: \ x\succ y$, then we should have $f(\{x\}\cup L)=f(x)$. 

In Step 2, we show that for a given coalition $H\in X^*$, where all elements of $H$ are in the same equivalence class, and for all $L\in X^*$, if for all $x\in H, \ y\in L: x\succ y $, then we have $f(H\cup L)=f(H)$. Using these two steps, we will finish the proof.\\
 
\noindent\textbf{Step 1:} Fix an element $x\in X$. By induction on the cardinality of $L$, where $\forall y\in L, \ x\succ y$, we prove that $f(\{x\}\cup L)=f(x)$. 
 
We have already proved the case where $|L|=1$. Assume that for all $|L|\leq k$ the result is correct. We will show that for all $L$ with $|L|=k+1$, the result is also correct.
 
Fix a coalition $L$ with $|A|=k+1$ and such that $\forall y\in L, \ x\succ y$. Assume that $A=\{y_1,\ldots,y_{k+1}\}$.

If for all $y\in L: f(y)=f(x)$, then using the weighted averaging axiom $f(\{x\}\cup L)=f(x)$, which is what we wanted to prove. Similarly, if $f(L)= f(x)$, we have $f(\{x\}\cup L)=f(x)$.

Therefore, consider the case that not all of them are the same and $f(L)\neq f(x)$.

Using our definition of the $line$ in the proof of Theorem \ref{main_theorem}, for each $y\in L$ we consider $line(f(\{x\}\cup (L\setminus \{y\})), f(y))$. Using the weighted averaging axiom, for all $y\in L,\  f(\{x\}\cup L)\in line(f(\{x\}\cup L\setminus \{y\}), f(y)) $. By using the induction hypothesis $f(\{x\}\cup L\setminus \{y\})=f(x)$. Therefore, $\forall y\in L,\  f(\{x\}\cup L)\in line(f(x), f(y))$. 

Similar to the proof of Theorem \ref{main_theorem}, we consider two separate cases.\\

\textbf{Case 1:} Consider the case where there exist two elements $y_1,y_2\in L$ such that  $f(x),f(y_1),f(y_2)$ are non-collinear. We use the same technique as in the proof of Theorem \ref{main_theorem}. 

We know that $f(\{x\}\cup L)\in line(f(x), f(y_1)) $ as well as $f(\{x\}\cup L)\in line(f(x), f(y_2))$. Moreover, we know that $f(x),f(y_1),f(y_2)$ are non-collinear. Therefore, the intersection of the two lines, should be $f(x)$. This shows that
$f(\{x\}\cup L)=f(x)$.

\textbf{Case 2:} If all the vectors $f(x), f(y_1),\ldots ,f(y_{k+1})$ are collinear. In this case, the idea is to add a point $x'\in X$ such that $x'\sim x$ and $f(x')$ is not on the line containing all $f(x), f(y_1),\ldots ,f(y_{k+1})$. This is possible because of the strong richness condition. By using the transitivity of the  $\succcurlyeq $, $\forall y\in L:\   x'\succ y$. 

Fix a point $y_0\in L$ such that $f(y_0)\neq f(x)$. This is possible since we already assumed that not all $f(y)$ ,with $y\in L$, are the same as $f(x)$.

Consider the coalition $\{x\}\cup\{x'\}\cup L$. By using the weighted averaging axiom and the sub-coalitions $\{x,L\setminus \{y_0\}\}, \ \{x',y_0\}$, we have  $f(\{x\}\cup\{x'\}\cup L)\in line(f(\{x,L\setminus \{y_0\}\}), f(\{x',y_0\}))$. Using the induction hypothesis, $f(\{x,L\setminus \{y_0\}\})=f(x)$ and $f(\{x',y_0\})=f(x')$. Therefore, $f(\{x\}\cup\{x'\}\cup L)\in line(f(x),f(x'))$. 

Next, we show that $f(\{x\}\cup\{x'\}\cup L)\neq f(x)$. Since $x\sim x'$ and $f(x)\neq f(x')$, we have $f(\{x,x'\})\neq f(x')$. Moreover, based on the way we selected the point $x'$, $f(\{x,x'\})$ is not on the line containing $f(x)$ and  $\{f(y) | y\in L\}$. Consider the partition of $\{x\}\cup\{x'\}\cup L$ into $\{x,x'\}$ and $L$. Based on the choice of the $L$, at the beginning of Step 1,  $f(L)\neq f(x)$. Since  $f(\{x,x'\})\neq f(x)$, $f(L)\neq f(x)$, and $f(x),f(x'),f(L)$ are non-collinear, we have $f(\{x\}\cup\{x'\}\cup L)\neq f(x)$.

Finally, by partitioning $\{x\}\cup\{x'\}\cup L$ into $\{x'\}$ and $\{x\}\cup L $, the weighted averaging axiom results in $f(\{x\}\cup\{x'\}\cup L) \in line(f(\{x\}\cup L), f(x'))$. Therefore, $f(\{x\}\cup L)$ is on the line joining $f(\{x\}\cup\{x'\}\cup L)$ and $f(x')$. However, we have already shown that $f(\{x\}\cup\{x'\}\cup L)$ is on the line passing through $f(x)$ and $f(x')$. Thus, $f(\{x\}\cup L)$ should be on the line joining $f(x)$ and $f(x')$. 
However, the only intersection of $line(f(x),f(x'))$ and the line containing all the points $f(x),f(y_1),\dots, f(y_{k+1})$ is the point $f(x)$. Thus, $f(\{x\}\cup L)=f(x)$, which completes the proof.\\

\noindent\textbf{Step 2:} In this step, by using induction on the cardinality of the set $H$, in which all elements have the same order, we show that for any coalition $L$ if all elements of the set $L$ have lower orderings compared to the elements of $H$, then  we should have $f(H\cup L)=f(H)$.

Fix a set $L$. Based on Step 1, we know that for any $x\in X$ such that $\forall y\in L: x\succ y$,  we should have $f(\{x\}\cup L)=f(x)$. This is the starting point of our induction. Assume that for all ${|H|}=k$, we have $f(H\cup L)=f(H)$. We will show that for any $|H|=k+1$, we have $f(H\cup L)=f(H)$.\\

For any $x\in H$, by the weighted averaging axiom over the sub-coalitions $\{x\}\cup L$ and $H\setminus\{x\}$, we have $f(H\cup L)\in line(f(\{x\}\cup L), f(H\setminus \{x\}))$. Based on step 1, we know that $f(\{x\}\cup L)=f(x)$. Therefore, $f(H\cup L)\in line(f(x), f(H\setminus \{x\}))$. Similarly, by the weighted averaging axiom over the coalition $H$ and its sub-coalitions $\{x\},H\setminus\{x\}$, we should have $f(H)\in line(f(x), f(H\setminus \{x\}))$.
Consider two cases:\\

\textbf{Case 1:} Consider the case in which not all members of $\{f(x)|x\in H\}$ are collinear. Hence, there should be at least two elements $x,y\in H$ that $f(x),f(H\setminus\{x\}),f(y),$ and $f(H\setminus\{y\})$ are not collinear. Therefor, the intersection of the lines joining $f(x),f(H\setminus\{x\})$ and the line joining $f(y),f(H\setminus\{y\})$ can have at most one intersection. Since $f(H)$ is on both lines, the unique intersection should be $f(H)$. But $f(H\cup L)$ is also on both lines. Hence, we should have $f(H\cup L)=f(H)$, which completes this case.\\

\textbf{Case 2:} Consider the case where all members of the set $\{f(x)|x\in H\}$ are on a line. By using the strong richness condition, there exists an element $x'\in X$ such that $f(x')$ is not on that line.
We consider the coalition $\{x'\}\cup H\cup L$. By using the weighted averaging axiom over the sub-coalitions $\{x'\}\cup L$ and $H$, we should have $f(\{x'\}\cup H\cup L)$ on the line joining $f(\{x'\}\cup L)$ and $f(H)$. By the induction hypothesis, $f(\{x'\}\cup L)=f(x')$. Hence, we should have $f(\{x'\}\cup H\cup L)\in line(f(x'), f(H))$. 

Similarly, by partitioning the set $\{x'\}\cup H\cup L$ into $H\cup L$ and $\{x'\}$, we have $f(\{x'\}\cup H\cup L)\in line(f(x'), f(H\cup L))$. 

Select an element  $x_1\in H$. By partitioning the coalition $\{x'\}\cup H\cup L$ between the sub-coalitions $\{x_1,x'\}$ and $(H\setminus\{x_1\}) \cup\ L$, using the weighted averaging axiom, we obtain $f(\{x'\}\cup H\cup L)\neq f(\{x'\})$.

Finally, using (1) $f(\{x'\}\cup H\cup L)\in line(f(x'), f(H))$, and (2) $f(\{x'\}\cup H\cup L)\in line(f(x'), f(H\cup L))$, and (3) $ f(\{x'\}\cup H\cup L)\neq f(x')$, we have \linebreak $line(f(x'), f(H))=\allowbreak line(f(x'), f(H\cup L))$. But the intersection of the last line with the line containing all the elements of $H$, can have at most one intersection. Therefore, $f(H\cup L)=f(H)$,
which completes the proof.\\

\noindent\textbf{Completing the proof:}

Consider a coalition $H$ where all elements have the similar order. We consider any two disjoint sub-coalitions $H_1,H_2\in H$, where $f(H_1)\neq f(H_2)$. Using the same technique of the previous part, we have $f(H_1\cup H_2)\neq f(H_1)$.

By using the result of Theorem \ref{main_theorem}, we can get the appropriate representation in each equivalence class. Also by using the result of the previous part, $f(A)=f(M(A,\succcurlyeq))$. The combination of these two results completes the proof.
\qed


\subsection{Proof of Theorem \ref{proposition_contin}}
The following two lemmas help us proving the theorem.
\begin{lemma}
Given any two linearly independent vectors $v_1,v_2$ in $\mathbb{R}^n$, there exists a neighborhood of $v_1$ that any vector in that neighborhood is linearly independent of $v_2$. More generally, given any $m$ vectors $\{v_1,\ldots, v_m\}$ such that $v_1$ is not in the linear space generated by the rest of the points, then there exists a neighbor of $v_1$ such that any point in that neighborhood is not in $\text{span}(\{v_2,\ldots,v_m\})$.
\end{lemma}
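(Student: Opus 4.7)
The plan is to reduce both claims to a single observation: a proper linear subspace of $\mathbb{R}^n$ is closed, so its complement is open. The first claim is precisely the special case $m=2$ of the second, so I would only prove the general statement and note that the first follows.

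Set $V := \operatorname{span}(\{v_2,\ldots,v_m\})$. The key facts I would invoke are that $V$ is finite-dimensional and therefore a closed subset of $\mathbb{R}^n$, and that $v_1 \notin V$ by hypothesis. Consequently the distance $d := \inf_{w\in V}\|v_1 - w\|$ is strictly positive. Then the open ball $B(v_1,d)$ is the desired neighborhood: if some $w \in B(v_1,d)$ belonged to $V$ we would get $\|v_1-w\| < d \le \|v_1-w\|$, a contradiction. This is the whole argument.

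A more constructive alternative, which may be preferable if the authors later want to quantify the neighborhood, is to produce a continuous linear functional $\varphi:\mathbb{R}^n\to\mathbb{R}$ with $\varphi|_V \equiv 0$ and $\varphi(v_1)\neq 0$. Such a $\varphi$ exists by extending a basis of $V$ to a basis of $\mathbb{R}^n$ that contains $v_1$ and letting $\varphi$ be the coordinate functional dual to $v_1$. The set $\{w : \varphi(w)\neq 0\}$ is then open (by continuity of $\varphi$), contains $v_1$, and is disjoint from $V$; any open ball around $v_1$ inside this set works. There is no real obstacle in either route; the only point of care is the standard fact that finite-dimensional subspaces of a normed space are closed, which I would either invoke directly or verify by noting that $V$ is the continuous image of a compact sphere scaled by $\mathbb{R}$ and admits a completeness argument via equivalence of norms in finite dimension.
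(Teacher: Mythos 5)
Your argument is correct and matches the paper's own proof: both observe that $\operatorname{span}(\{v_2,\ldots,v_m\})$ is a closed subspace not containing $v_1$, so the distance from $v_1$ to it is strictly positive and a ball of smaller radius around $v_1$ avoids the span. The alternative route via a linear functional is a fine variant but not needed; the distance argument is exactly what the paper does.
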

\begin{proof}
Since $K=\text{span}(\{v_2,\ldots,v_m\})$ is a closed set that is disjoint from the vector $v_1$, the distance between $v_1$ and $K$ should be nonzero. Hence, there exists a neighborhood of $v_1$ (for example the ball with radios $dist(v_1,K)/2$ around $v_1$) disjoint from $K$. As a result, any point in that neighborhood is not in $\text{span}(\{v_2,\ldots,v_n\})$.
\end{proof}

\begin{lemma}\label{lin_indep}
Let $v_1,v_2 \in \mathbb{R}^n$ be two linearly independent vectors and $v=\alpha v_1 +(1-\alpha) v_2$, for some $\alpha \in [0,1]$, is a vector between $v_1,v_2$. If the vectors $v_n=\alpha_n v_1+(1-\alpha_n) v_2^n$ are such that $\alpha_n\in[0,1]$, $v_2^n\rightarrow v_2$, and $v_n\rightarrow v$, then $\alpha_n\to \alpha$.
\end{lemma}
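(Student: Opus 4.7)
The plan is to use a compactness argument on the scalar sequence $\alpha_n$ together with the uniqueness of the representation of $v$ in the affine line through $v_1$ and $v_2$.

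First I would observe that $\{\alpha_n\}\subset [0,1]$, which is compact, so by Bolzano--Weierstrass every subsequence of $\alpha_n$ has a further convergent subsequence. It therefore suffices to show that every convergent subsequence has the same limit $\alpha$; the full sequence must then converge to $\alpha$.

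So suppose $\alpha_{n_k}\to \beta\in[0,1]$. Passing to the limit in the identity $v_{n_k}=\alpha_{n_k} v_1 + (1-\alpha_{n_k}) v_2^{n_k}$ and using the hypotheses $v_2^{n_k}\to v_2$ and $v_{n_k}\to v$, the right-hand side tends to $\beta v_1 + (1-\beta) v_2$ (all operations involved are jointly continuous), while the left-hand side tends to $v=\alpha v_1 + (1-\alpha) v_2$. Equating the two limits gives
\[
(\alpha-\beta)(v_1-v_2)=0.
\]
Since $v_1$ and $v_2$ are linearly independent, in particular $v_1\neq v_2$, so $v_1-v_2\neq 0$ and therefore $\beta=\alpha$. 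This yields the desired conclusion.

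There is no real obstacle here: the only point that needs a moment of care is that the representation of any point of the line $\{t v_1+(1-t)v_2: t\in \mathbb{R}\}$ in terms of $v_1,v_2$ is unique precisely because $v_1,v_2$ are linearly independent (equivalently, affinely independent), and this is exactly what forces $\beta=\alpha$. An alternative route, should one prefer an explicit formula, would be to pick a linear functional $\phi$ with $\phi(v_1)=1$, $\phi(v_2)=0$ (which exists by linear independence) and note that $\alpha_n=(\phi(v_n)-\phi(v_2^n))/(1-\phi(v_2^n))$ for $n$ large, whose limit is $(\alpha-0)/(1-0)=\alpha$; but the subsequence argument is shorter and avoids any case distinction.
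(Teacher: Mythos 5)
Your proof is correct and follows essentially the same route as the paper's: both rely on compactness of $[0,1]$ to extract a convergent subsequence, pass to the limit in $v_n=\alpha_n v_1+(1-\alpha_n)v_2^n$, and invoke linear independence of $v_1,v_2$ to force the limiting coefficient to equal $\alpha$ (the paper phrases this as a proof by contradiction, you phrase it via the standard subsequence principle, which is an immaterial difference).
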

\begin{proof}
We prove it by contradiction. If it is not the case, there exists a subsequence $\alpha_{n_k}$ of $\alpha_n$ and some $\epsilon >0$, such that $\forall \  n_k: \ \alpha_{n_k}\notin B_\epsilon(\alpha)$. Based on compactness of $[0,1]$, there exist a subsequence $\alpha_{n_{k_j}}$ of $\alpha_{n_k}$ that is convergent to some $\beta\in [0,1]$. Since $\alpha_{n_k}\notin B_\epsilon(\alpha)$, we have $\beta\not= \alpha$. Based on the assumption of the lemma, since the sequence $v_n$ is convergent to $v$, the subsequence $v_{n_{k_j}}$ also converges to $v$. Similarly, $v_2^{n_{k_j}}$ converges to $v_2$. Hence, $v_{n_{k_j}}=\alpha_{n_{k_j}} v_1+(1-\alpha_{n_{k_j}})v_2^{n_{k_j}}\rightarrow \beta v_1+(1-\beta)v_2$ and $v_{n_{k_j}}\rightarrow v$. As a result, $\beta v_1+(1-\beta)v_2=v=\alpha v_1 +(1-\alpha) v_2$.  However, since $v_1,v_2$ are linearly independent, $\alpha$ and $\beta$ should be the same, which is a contradiction. The contradiction shows that $\alpha_n\to \alpha$.
\end{proof}

Using the lemmas mentioned above, we will complete the proof. Based on Theorem \ref{main_equation2}, there exist a unique weak order $\succcurlyeq$ and a weight function $ w: X\to \mathbb{R}_{++}$ such that for any $A\in X^*$\[f(A)=\frac{\sum\limits_{x\in M(A,\succcurlyeq)}w(x)f(x)}{\sum\limits_{x\in M(A,\succcurlyeq)} w(x)}.\]\\

Let $x\in X$ be any given point. We need to prove that the weight function is continuous around $x$ and any point close enough to $x$ has the same order, respect to the weak order $\succcurlyeq$, as $x$.

To complete the proof, assume that $x_n\in X$ and $x_n\to  x$. We are going to prove that:\\ 1) $w(x_n)\to w(x)$, \\2) $\exists N \in \mathbb{N}$ such that for all $n>N: $ $x_n\sim x$.\\Proving these two completes the proof.

Based on the strong richness condition, there should be a point $y\in X$ such that (1) $f(x), f(y)$ are linearly independent, and (2) $f(\{x,y\})=\frac{w(x)f(x)+w(y)f(y)}{w(x)+w(y)}$, which means that $x\sim y $. The reason is that by the strong richness condition, there should be at least two other points $y,z$ with the same order as $x$, such that not all of $f(x),f(y),$ and $f(z)$ are collinear. This means that $f(x)$ and at least one of $f(y)$ or $f(z)$ should be linearly independent. Without loss of generality, we assume that $f(x)$ and $f(y)$ are linearly independent.\\

Given any two points $a,b\in X$, we define the function $\boldsymbol{1}_a(b)$ as follows: 
  \begin{equation*}
   \boldsymbol{1}_a(b) =
    \begin{cases}
      1 & \text{if } b \succcurlyeq a,\\
      0 & \text{Otherwise} .
    \end{cases}
\end{equation*}

Consider the sequence of vectors $f(\{x_n,y\})$. By Theorem \ref{general_thm}, we have $\allowbreak f(\{x_n,y\})\allowbreak =\allowbreak\frac{\boldsymbol{1}_y(x_n)w(x_n)f(x_n)+\boldsymbol{1}_{x_n}(y)w(y)f(y)}{\boldsymbol{1}_y(x_n)w(x_n)+\boldsymbol{1}_{x_n}(y)w(y)}\allowbreak=\frac{\boldsymbol{1}_y(x_n)w(x_n)}{\boldsymbol{1}_y(x_n)w(x_n)+\boldsymbol{1}_{x_n}(y)w(y)}f(x_n)+\frac{\boldsymbol{1}_{x_n}(y)w(y)}{\boldsymbol{1}_{x_n}(y)w(y)+\boldsymbol{1}_y(x_n)w(x_n)}f(y)$. Based on continuity of the aggregation rule $f$, $f(x_n)\to f(x)$ and $f(\{x_n,y\})\to f(\{x,y\})$. Since $f(x)$ and $f(y)$ are linearly independent, all conditions of  Lemma \ref{lin_indep} are satisfied. Hence, we have  $\frac{\boldsymbol{1}_{x_n}(y)w(y)}{\boldsymbol{1}_y(x_n)w(x_n)+\boldsymbol{1}_{x_n}(y)w(y)}\to \frac{w(y)}{w(x)+w(y)}$
and $\frac{\boldsymbol{1}_y(x_n)w(x_n)}{\boldsymbol{1}_y(x_n)w(x_n)+\boldsymbol{1}_{x_n}(y)w(y)}\to \frac{w(x)}{w(x)+w(y)}$. Since both $w(x)$ and $w(y)$ are strictly positive, we should have $\boldsymbol{1}_y(x_n)\to 1$ and similarly $\boldsymbol{1}_{x_n}(y)\to 1$. This means that for large $n$, $x_n\sim y$. Since $y\sim x$, for large $n$ we have $x_n\sim x$. This complete part 2 of the proof.

For the part 1, since we have already proved that for large $n$, $x_n\sim x\sim y$, the convergence  $\frac{\boldsymbol{1}_{x_n}(y)w(y)}{\boldsymbol{1}_y(x_n)w(x_n)+\boldsymbol{1}_{x_n}(y)w(y)}\to \frac{w(y)}{w(x)+w(y)}$ becomes $\frac{w(y)}{w(x_n)+w(y)}\to \frac{w(y)}{w(x)+w(y)}$. This means that  $w(x_n)\to w(x)$, which proves that $w$ is continuous at $x$. 

Proving part 1 and 2 complete the proof.\\
\qed



\subsection{Proof of Proposition \ref{time_stationary}}
There are a couple of steps to prove the result.

\noindent\textbf{Step 1:}
Assume that all signals arrive at time $1$. By using Corollary \ref{CBFT_corol}, there exists a unique (up to multiplication) weight function $w:X^*\to \mathbb{R}_{++}$, such that for all $A\in X^*$, $f(A,1)=\frac{\sum\limits_{x\in A}w(x)f(x)}{\sum\limits_{x\in A} w(x)}$. By using the uniqueness of $w$ and the stationarity axiom, for any constant time shift $c$ and for all $A\in X^*$ we have:   $$f(A,c)=\frac{\sum\limits_{x\in A}w(x)f(x)}{\sum\limits_{x\in A} w(x)}.$$ 

Consider two signals $x_0,y_0\in X$, where $f(x_0)\neq f(y_0)$. Let the timing of $x_0,y_0$ be $T_{\{x_0, y_0\}}(x_0)=1, T_{\{x_0, y_0\}}(y_0)=2$. Using the strict weighted averaging axiom, there exists a  $\lambda\in (0,1)$  where $f(\{x_0,y_0\},T_{\{x_0, y_0\}})=\lambda f(x_0)+(1-\lambda)f(y_0)$. We define $q$ such that $\frac{1-\lambda}{\lambda}=q\times\frac{w(y_0)}{w(x_0)}$.  

In the rest of the proof, we show that these choices of $w,q$ attain the representation of Proposition \ref{time_stationary}.

\noindent\textbf{Step 2:} We show that for any signal $z\in X$, the representation holds for the coalition $\{x_0,z\}$ and for the timing function $T_{\{x_0, z\}}(x_0)=1, T_{\{x_0, z\}}(z)=2$.

\textbf{Case 1:} Consider any signal $z\in X$, such that $\{f(x_0),f(y_0),f(z)\}$ are not collinear. We form the coalition $\{x_0,y_0,z\}$ with the timing $T_{\{x_0, y_0, z\}}(x_0)=1, T_{\{x_0, y_0, z\}}(y_0)=2$, $T_{\{x_0, y_0, z\}}(z)=2$. Using the strict weighted averaging axiom, by considering the sub-coalitions $\{x_0\}$ and $\{y_0,z\}$  and the fact that $y_0$ and $z$ has the same timing, Lemma \ref{lem_uniq}, in the proof of Theorem \ref{main_theorem}, shows that the representation holds for the coalition $\{x_0,z\}$ with the timing $T_{\{x_0, z\}}(x_0)=1, T_{\{x_0, z\}}(z)=2$.

\textbf{Case 2:} Consider any signal $z\in X$, such that $\{f(x_0),f(y_0),f(z)\}$ are collinear. By the richness condition, there exists a signal $z'\in X$ such that $\{f(x_0),f(y_0),f(z),f(z')\}$ are not collinear. We consider the timing $T_{\{x_0, y_0, z, z'\}}(x_0)=1, T_{\{x_0, y_0, z, z'\}}(y_0)=2$, $ T_{\{x_0, y_0, z, z'\}}(z)=2, T_{\{x_0, y_0, z, z'\}}(z')=2$.
The representation holds for the sub-coalitions $\{x,y,z'\}$ (by Case 1) and $\{y,z,z'\}$ (since all have the same timing). Thus, by applying Lemma \ref{lem_uniq} first on $\{y,z,z'\}$ and then on $\{x,y,z'\}$, we can show that the representation holds for the coalition $\{x_0,z\}$ with the timing $T_{\{x_0, z\}}(x_0)=1, T_{\{x_0, z\}}(z)=2$.

\noindent\textbf{Step 3:} We show that the representation holds for any two signals $u,v\in X$ with the timing function $T_{\{u, v\}}(u)=1, T_{\{u, v\}}(v)=2$.

\textbf{Case 1:} If  $\{f(x_0),f(u),f(v)\}$ are non-collinear, then we consider the timing function $T_{\{x_0, u, v\}}(x_0)=1, T_{\{x_0, u, v\}}(u)=1 , T_{\{x_0, u, v\}}(v)=2$. By applying Lemma \ref{lem_uniq} twice on  $\{x_0, u\}$ and $\{x_0, v\}$ with their corresponding timings, we can show that the representation should holds for $u,v\in X$ with the timing function $T_{\{u, v\}}(u)=1, T_{\{u, v\}}(v)=2$.

\textbf{Case 1:} If  $\{f(x_0),f(u),f(v)\}$ are collinear, then by the richness condition, there exists a signal $z\in X$ such that $\{f(x_0),f(u),f(v), f(z)\}$ are not collinear. Consider the timing function  $T_{\{x_0, u, v, z\}}(x_0)=1, T_{\{x_0, u, v, z\}}(u)=1 , T_{\{x_0, u, v, z\}}(v)=2, T_{\{x_0, u, v, z\}}(z)=2$. By applying Lemma \ref{lem_uniq} for the sub-coalition $\{x_0,u,z\}$ and their corresponding timing, shows that the representation holds for $\{u,z\}$ and their timing $T_{\{u, z\}}(u)=1, T_{\{u, z\}}(z)=2$. Then, by considering the coalition $\{u,v,z\}$ and their corresponding timing, Lemma \ref{lem_uniq} shows that the representation holds for $\{u,v\}$ and the timing function $T_{\{u, v\}}(u)=1, T_{\{u, v\}}(v)=2$.

\noindent\textbf{Step 4:} In this step, we show that given any $t\in \mathbb{N}$, the representation holds for any two signals $u,v\in X$ and the timing function $T_{\{u, v\}}(u)=1, T_{\{u, v\}}(v)=t$. The proof is by induction on $t$. By Step 3, the representation holds for $t=2$. Assume that the representation holds for all $t<k$ with $k>3$. We will show that it also holds for $t=k$.

\textbf{Case 1:} If $f(u)\neq f(v)$, then we consider a signal $z\in X$ such that $\{f(u),f(v),f(z)\}$ are not collinear. Let the timing function be  $T_{\{u, v, z\}}(u)=1, T_{\{u, v, z\}}(v)\allowbreak=k+1, T_{\{u, v, z\}}(z)=k$. 

Consider any $w(u,1),w(v,k),w(z,k-1) \in (0,1)$ such that $f(\{u,v,z\},T_{u,v,z})=w(u,1)f(u)+w(v,k)f(v)+w(z,k-1)f(z)$. By Lemma \ref{lem_uniq}, induction hypothesis, and the stationarity axiom, we have $\frac{w(v,k)}{w(u,1)}=\frac{w(v,k)}{w(z,k-1)}\times \frac{w(z,k-1)}{w(u,1)}=(q\frac{w(v)}{w(z)})(q^{k-2}\frac{w(z)}{w(u)})=q^{k-1}\frac{w(v)}{w(u)}$. Thus, the representation holds. 

\textbf{Case 1:} If $f(u)=f(v)$, then we consider two signals $z,z'\in X$ such that \linebreak$\{f(u),f(v),f(z),f(z')\}$ are not collinear (which is possible by the richness condition). Let the timing function be  $T_{\{u, v, z, z'\}}(u)=1, T_{\{u, v, z, z'\}}(v)\allowbreak=k+1, T_{\{u, v, z, z'\}}(z)=k, T_{\{u, v, z, z'\}}(z')=k$. By the uniqueness part of Theorem \ref{main_theorem} and the induction hypothesis, the representation still holds in this case. 

\noindent\textbf{Step 4:} Finally, for any coalition $A\in X^*$ and any timing function $T_A$, the uniqueness of Theorem \ref{main_theorem} and Step 4 establish that the representation should hold with $q,w$.



\subsection{Proof of Theorem \ref{thm_consistencyeqwa} and Corollary \ref{EPR=SCA}}

Assume that the aggregation rule $f:X^*\to \mathcal{R}^m$ satisfies the minimal agreement condition and $v\in \mathbb{R}^m$ is the direction on which all agents agree.

Consider two disjoint coalitions $A,B\in X^*$ with the corresponding cardinal utilities $u_A\in U_A$ and $u_B\in U_B$. Assume that $u_{A\cup B}\in U_{A\cup B}$ is a cardinal utility that represents the preference ordering of the union $A\cup B$. If $u_A=u_B$, then the result is trivial. Hence, consider the case $u_A\neq u_B$.

First, by using the Farkas' Lemma, we show that the extended Pareto is equivalent to $u_{A\cup B}\in \text{Cone}^{\circ}(u_A,u_B)$ (which $\text{Cone}^{\circ}(u_A,u_B)$ denotes the interior of the cone generated by $u_A$ and $u_B$). 

If $u_{A\cup B}\in \text{Cone}^{\circ}(u_A,u_B)$, then there exist $\alpha , \beta>0$ such that $u_{A\cup B}=\alpha u_A+\beta u_B$.  Therefore, for any $x,y\in L$ if $u_A\cdot x\geq u_A\cdot y$ and $u_B\cdot x\geq u_B\cdot y$, then $u_{A\cup B}\cdot x\geq u_{A\cup B}\cdot y$. Similarly, if $u_A\cdot x> u_A\cdot y$ and $u_B\cdot x\geq u_B\cdot y$, then $u_{A\cup B}\cdot x> u_{A\cup B}\cdot y$. This proves that the preference ordering of $A\cup B$, satisfies the extended Pareto axiom.\\

For the other side, if the utility of the union $u_{A\cup B}\notin\text{Cone}^{\circ}(u_A,u_B)$, then $\nexists \alpha, \beta>0 $ such that $u_{A\cup B}=\alpha u_A+\beta u_B$. The Farkas' Lemma guarantees that there exists a vector $z\in \mathbb{R}^m$ such that $z\cdot u_A\geq 0, z\cdot u_B\geq 0$ and $z\cdot u_{A\cup B}<0$. 

Consider a vector $y\in \mathbb{R}^m$ that is in the interior of $L$. We select $\lambda>0$ such that $y+\lambda z\in L$. This is possible since we assume that $y$ is in the interior of $L$.
By defining $x=y+\lambda z$, we get $x-y=\lambda z$. Since $\lambda>0$ and $z\cdot u_A\geq 0, z\cdot u_B\geq 0$, we have $u_A\cdot x\geq u_A\cdot y$, and $u_B\cdot x\geq u_B\cdot y$. But since $z\cdot u_{A\cup B}<0$, we have $u_{A\cup B}$ and $x<u_{A\cup B}\cdot y$.
But by the extended Pareto axiom, this cannot be true. Therefore, $u_{A\cup B}\in\text{Con}^{\circ}(u_A,u_B)$. \\

Now consider the intersection of $H=\{x\in\mathbb{R}^m|\ x\cdot v=1\}$ and $\text{Cone}^{\circ}(u_A,u_B)$. Since $u_A\cdot v>0$ and $u_B\cdot v>0$, there should be a unique $
\hat{u}_A\in U_A, \hat{u}_B\in U_B$ both in $H$. It is trivial that $\text{Con}^{\circ}(u_A,u_B)=\text{Con}^{\circ}(\hat{u}_A,\hat{u}_B)$. 

Since both $\hat{u}_A.v>0$ and $\hat{u}_B.v>0$, the intersection of the interior of the cone generated by them and the linear variety $H$ is the segment $[\hat{u}_A,\hat{u}_B]=\{\lambda \hat{u}_A+(1-\lambda)\hat{u}_B| \ \lambda \in (0,1)\}$. Since $u_{A\cup B} \in \text{Con}^{\circ}(\hat{u}_A,\hat{u}_B)$, we should have $v.u_{A\cup B}>0$. Hence, there should be a $\hat{u}_{A\cup B}\in H$ representing $u_{A\cup B}$. Therefore, $\hat{u}_{A\cup B}\in [\hat{u}_A,\hat{u}_B]$. This completes the proof.

\qed

\subsection{Proof of Theorem \ref{EPR_rep_main}}

There are a couple of steps in the proof. Note that for any profile $R\in \mathcal{R}_X$, and for any coalition $A\subseteq X$, $R_A$ denotes the restricted sub-profile of the coalition $A$.

\noindent\textbf{Step 1:} Fix a preference $\hat{r}\in \mathcal{R}_{\overline{x}}$. Using the corollary \ref{EPR_rep}, for any profile $R\in \mathcal{R}_X$ such that $R_1=\hat{r}$, we can uniquely define a weight function (which depends on the full profile $R$) $w^{R}(i): X\to \mathbb{R}_{++}$ with $w^{R}(1)=1$, such that for any coalition $A\subseteq X$ we have:
$$
u_H(f(R_A))=\sum\limits_{i\in A}\left(\frac{w^R(i)}{\sum\limits_{j\in A} w^R(j)}\right)u_H(R_i).
$$

First, we show that for any individual $i\in X\setminus\{1\}$ and for any two profiles $R_a,R_b\in \mathcal{R}_X$ with $(R_a)_1=(R_b)_1=\hat{r}$ and $(R_a)_i=(R_b)_i$, we have $w^{R_a}(i)=w^{R_b}(i)$.
There are two separate cases:

\textbf{Case 1:} If $(R_a)_i=(R_b)_i\neq \hat{r}$, then  using $(R_a)_{\{1,i\}}=(R_b)_{\{1,i\}}$ and the result of corollary \ref{EPR_rep}, we should have $w^{R_a}(i)=w^{R_b}(i)$.

\textbf{Case 2:} If $(R_a)_i=(R_b)_i= \hat{r}$, then by considering the definition of the domain $\mathcal{R}_X$, which require the existence of three non-collinear preferences in each profile, there should be a profile $R_c\in \mathcal{R}_X$ and two individual ${j_1},{j_2}\in X\setminus\{1,i\}$ such that $(R_c)_1=(R_c)_i=\hat{r}$,  $\ (R_c)_{j_1}=(R_a)_{j_1}\neq \hat{r}$, and $\ (R_c)_{j_2}=(R_b)_{j_2}\neq \hat{r}$. Using Case 1, we have  $w^{R_c}(j_1)=w^{R_a}(j_1)$ and  $w^{R_c}(j_2)=w^{R_b}(j_2)$. Since $(R_a)_{\{i,j_1\}}=(R_c)_{\{i,j_1\}}$ and $w^{R_a}(j_1)=w^{R_c}(j_1)$, using corollary \ref{EPR_rep}, we should have $w^{R_a}(i)=w^{R_c}(i)$. Similarly, we have $(R_b)_{\{i,j_2\}}=(R_c)_{\{i,j_2\}}$ and $w^{R_b}(j_2)=w^{R_c}(j_2)$. Therefore, we should have $w^{R_b}(i)=w^{R_c}(i)$. Hence, we have $w^{R_a}(i)=w^{R_b}(i)$. \\

By considering  profiles of the form $R\in\mathcal{R}_X$ with $R_1=\hat{r}$, we can define the weight function $w:X\setminus\{1\}\times \mathcal{R}_{\overline{x}}\to \mathbb{R}_{++}$ such that $w(i,R_i)=w^R(i)$ for all $i\in X\setminus\{1\}$. By the result of Step 1, this function is well defined. Moreover, we define $w(1,\hat{r})=1$. 

At this point, for any preference profile $R\in \mathcal{R}_X$ with $R_1=\hat{r}$ and for any coalition $A\subseteq X$, we have:

$$  
u_H(f(R_A))=\sum\limits_{i\in A}\left(\frac{w(i,R_i)}{\sum\limits_{j\in A} w(j,R_j)}\right)u_H(R_i).
$$

\noindent\textbf{Step 2:} We need to define $w(1,r)$ for all $r\in \mathcal{R}_{\overline{x}}$. We have already fixed the value $w(1,\hat{r})=1$. For any $r\in \mathcal{R}_{\overline{x}} \setminus\{\hat{r}\}$,  let $R\in \mathcal{R}_X$ be a profile with $R_1=r$ and $R_2=\hat{r}$. By corollary \ref{EPR_rep}, there should be a unique function $w^R:X\to\mathbb{R}_{++}$ with $w^R(2)=w(2,\hat{r})$. We define $w(1,r)=w^R(1)$. 

 Notice that for any two profile $R_a,R_b\in \mathcal{R}_X$ with $(R_a)_1=(R_b)_1=r$ and $ (R_a)_2=(R_a)_2=\hat{r}$, if we normalize the value of the $w^{R_a}(2)=w^{R_b}(2)=w(2,\hat{r})$, then we should have $w^{R_a}(1)=w^{R_b}(1)$.
Hence, the value $w(1,r)$ is independent of the choice of the profile $R$.

At this point the function $w:X\times\mathcal{R}_{\overline{x}}\to \mathbb{R}_{++}$ is fully defined. We only need to show that it works. 

\noindent\textbf{Step 3:} Select any profile $R\in \mathcal{R}_X$. We need to show that the representation holds with the weight function defined above.

If $R_1=\hat{r}$, by the result of Step 1 the representation holds. Hence, fix any $\overline{r} \in\mathcal{R}_{\overline{x}}$ where $\overline{r}\neq \hat{r}$. In the rest of the proof we show that the representation holds for any $R\in \mathcal{R}_X$ with $R_1=\overline{r}$.

Similar to Step 1, using the corollary \ref{EPR_rep}, for any profile $R\in \mathcal{R}_X$ such that $R_1=\overline{r}$, we can uniquely define a weight function (depending on the full profile $R$) $w'^{R}(i): X\to \mathbb{R}_{++}$ with $w'^{R}(1)=w(1,\overline{r})$, such that for any coalition $A\subseteq X$ we have:
$$
u_H(f(R_A))=\sum\limits_{i\in A}\left(\frac{w'^R(i)}{\sum\limits_{j\in A} w'^R(j)}\right)u_H(R_i).
$$

In the same manner as Step 1, for any two profiles $R_a,R_b\in \mathcal{R}_X$ with $(R_a)_1=(R_b)_1=\overline{r}$ and for every individual $i\in X$, we should have $w'^{R_a}(i)=w'^{R_b}(i)$. Hence, by considering  profiles of the form $R\in\mathcal{R}_X$ with $R_1=\overline{r}$, we can define the weight function $w':X\setminus\{1\}\times \mathcal{R}_{\overline{x}}\to \mathbb{R}_{++}$ such that for all $i\in X\setminus\{1\}:\ w'(i,R_i)=w'^R(i)$. By the result of Step 1, this function is well defined. Moreover, we fix $w'(1,\overline{r})=w(1,r)$.

For every preference profile $R\in \mathcal{R}_X$ with $R_1=\overline{r}$ and for every coalition $A\subseteq X$, we have:

$$  
u_H(f(R_A))=\sum\limits_{i\in A}\left(\frac{w'(i,R_i)}{\sum\limits_{j\in A} w'(j,R_j)}\right)u_H(R_i).
$$

To complete the proof, since we have $w(1,\overline{r})=w'(1,\overline{r})$ , it remains to show that for all $i\in X\setminus\{1\}$ and for all $r\in \mathcal{R}_{\overline{x}}$ we have 
$w(i,r)=w'(i,r)$.

\textbf{Case 1:} Since $\hat{r}\neq\overline{r}$ and $w(1,\hat{r})=w'(1,\hat{r})$, based on Step 2 we should have $w(2,\hat{r})=w'(2,\hat{r})$.

\textbf{Case 2:} Assume that $r\neq \hat{r}$ and $i\in X\setminus\{1,2\}$. Since $N\geq 5$, based on definition of $\mathcal{R}_X$, there exist $R_a,R_b \in \mathcal{R}_X$ such that $(R_a)_1=\hat{r},(R_a)_2=\hat{r},(R_a)_i=r$ and $(R_b)_1=\overline{r},(R_b)_2=\hat{r},(R_b)_i=r$. 

Since $r\neq \hat{r}$,  $(R_a)_{\{2,i\}}=(R_b)_{\{2,i\}}$, and by Case 1 $w(2,\hat{r})=w'(2,\hat{r})$, then we should have $w(i,r)=w'(i,r)$.

\textbf{Case 3:} Assume that $r=\hat{r}$ and $i\in X\setminus\{1,2\}$. Since $N\geq 5$, we can select an individual $j\in X\setminus{\{1,2,i\}}$. Based on the definition of $\mathcal{R}_X$, there exist $R_a,R_b \in \mathcal{R}_X$ such that $(R_a)_1=\hat{r},(R_a)_i=\hat{r},(R_a)_j=\overline{r}$ and $(R_b)_1=\overline{r},(R_b)_i=\hat{r},(R_b)_j=\overline{r}$.

Since $r=\hat{r}\neq \overline{r}$,  $(R_a)_{\{i,j\}}=(R_b)_{\{i,j\}}$, and by Case 2 $w(j,\overline{r})=w'(j,\overline{r})$, then we should have $w(i,r)=w'(i,r)$. 

\textbf{Case 4:} Finally, assume that $i=2$ and $r\neq \hat{r}$. Select an individual $j\in X\setminus{\{1,2\}}$. We consider profiles $R_a,R_b\in \mathcal{R}_X$ such that $(R_a)_1=\hat{r},(R_a)_2=r,(R_a)_j=\hat{r}$ and $(R_b)_1=\overline{r},(R_b)_2=r,(R_a)_j=\hat{r}$. 
By Case 3, we have $w(i,\hat{r})=w'(i,\hat{r})$. Hence, since $r\neq \hat{r}$ and $(R_a)_{\{2,j\}}=(R_b)_{\{2,j\}}$ we should have $w(2,r)=w'(2,r)$.

The last observation completes the proof. 
\qed

\end{document}